\documentclass[11pt,reqno]{amsart}
 \usepackage[foot]{amsaddr}
\usepackage[dvipsnames]{xcolor}
\usepackage[utf8]{inputenc}  
\usepackage[T1]{fontenc}    
\usepackage[english]{babel}   
\usepackage{graphicx,amssymb,mathrsfs,amsmath,amsfonts,appendix, subfigure, amsthm, paralist, enumitem, comment, mathtools, esint, derivative, colonequals,bbm}
\usepackage[colorlinks=true, pdfstartview=FitV, linkcolor=blue, citecolor=blue, urlcolor=blue]{hyperref}
\usepackage{comment}
\usepackage{orcidlink}
\usepackage{siunitx}
\usepackage{pgfplots}
\usepackage{subcaption}

\usepackage[normalem]{ulem}

\usepackage{float}
\usepackage{enumitem}
\pgfplotsset{compat=1.18}

\tikzset{
LM/.style = {very thin,
        {Bar[]Stealth}-%
        {Stealth[]Bar}
            },
every edge quotes/.append style= {font=\footnotesize},
  every node/.append style ={font=\small} 
  }

\numberwithin{equation}{section}

\newcounter{cc}[section]
\counterwithin*{cc}{section}

\newtheorem{theorem}[cc]{Theorem}
\newtheorem{lemma}[cc]{Lemma}
\newtheorem{proposition}[cc]{Proposition}
\theoremstyle{definition}
\newtheorem{definition}[cc]{Definition}
\newtheorem{remark}[cc]{Remark}
\newtheorem{assumption}[cc]{Assumption}

\newtheorem{algorithm}[cc]{Algorithm}
\newtheorem{corollary}[cc]{Corollary}

\newcommand{\R}{\mathbb{R}}

\newcommand{\N}{{\mathbb N}}
\newcommand{\E}{{\mathbb E}}

\newcommand{\rme}{{\rm e}}
\newcommand{\rmd}{{\rm d}}

\newcommand{\reduc}{C_{i}^{\mathrm{r}}}

\newcommand{\finaltr}{\mathfrak{T}^{\mathrm{r}}}

\newcommand{\cf}[1]{{\mathbbm 1}_{\{#1\}}}

\newcommand{\vc}[1]{\boldsymbol{#1}}

\renewcommand{\cf}[1]{{\mathbbm 1}_{\{#1\}}}

\DeclareMathOperator*{\argmin}{arg\,min}

\newcommand{\fix}[1]{#1^\bullet}
\newcommand{\free}[1]{#1^\circ}

\newcommand{\inI}[1]{\mathfrak{I}_{#1}}

\newcommand{\abs}[1]{\left| #1 \right|}
\newcommand{\norm}[1]{\Vert #1\Vert}

\title{Pointwise decay of cumulants and scaling regimes of chaotic states}

\author{Stefano Rossi\textsuperscript{*}}
\address{\textsuperscript{*}Dipartimento di Matematica Guido Castelnuovo, Sapienza Università di Roma, Piazzale Aldo Moro 5, 00185 Rome, Italy}
\email{stefano.rossi2@uniroma1.it}

\author{Sergio Simonella\textsuperscript{*}}
\email{sergio.simonella@uniroma1.it}

\author{Aleksis Vuoksenmaa\textsuperscript{*}}
\email{aleksis.vuoksenmaa@uniroma1.it}

\begin{document}
    \begin{abstract}
    We consider nonequilibrium chaotic states of classical interacting particles in a family of scaling limits motivated by kinetic theory. The particle activity and the strength of the pair interaction are rescaled simultaneously,  allowing for both dilute and non-dilute regimes. Using cluster expansion, we derive pointwise bounds for truncated correlation functions (cumulants) of arbitrary order. We cover stable finite-range interactions and repulsive Riesz potentials, including non-integrable tails in bounded macroscopic domains.
    \end{abstract}
    \maketitle
    \section{Introduction}

The passage from a microscopic description of matter 
in terms of interacting particles
to macroscopic laws 
is a classical problem of statistical mechanics. 
Both equilibrium thermodynamics and effective evolution equations emerge through limiting procedures, 
whose outcome depends on
how particle number, interaction range, and coupling strength are scaled. 
There is no unique scaling limit: identifying the relevant regimes, 
and understanding which are compatible with a given interaction potential, is part of the mathematical problem itself.

In kinetic theory, 
the Boltzmann--Grad limit provides an instructive example. 
Recent advances in its rigorous analysis have exploited
refined mathematical formulations of the well-known molecular chaos property.
A strong quantitative control of the correlations 
generated by microscopic dynamics has been proved crucial 
in the derivation of effective equations 
\cite{DengHaniMa2024,BGSS26, PulviSimo26,Pulvirenti2021}. 
Cumulants, or truncated correlation functions, are the technical tool used to
isolate the connected parts of multi-particle correlations, 
and effectively quantify the departure from statistical independence. 

In statistical mechanics, cumulants are estimated through classical cluster expansion methods. 
Such methods have been suitably adapted to study genuine dynamical problems \cite{PulviSimo17,DengHaniMa2024}.
However, rigorous results on nonlinear kinetic dynamics 
still mostly concern dilute regimes and hard-sphere type scattering. 
Different scalings and collision mechanisms are substantially less understood, as well as the inclusion of collective effects. 
Equilibrium statistical mechanics, on the other hand, provides a rich theory of correlations,
ranging from low-activity and high-temperature expansions, to long-range and Kac-type interactions.

In this paper, 
we revisit the static problem, with the goal of providing a unified picture across different scaling regimes. 
We investigate in detail how the spatial decay of cumulants
and the dependence on their order are affected by pair potentials with finite or infinite range, 
and by the simultaneous rescaling of particle activity and coupling strength.

More precisely, we consider particle systems with interaction localized in space on a typical length of order \(\varepsilon>0\)
and with a coupling strength \(\varepsilon^\alpha\), with \(\alpha \geq 0\).
The microscopic Hamiltonian reads as
\begin{align*}
    \label{eq:N-hamiltonian}
         \sum_{i=1}^{n}\frac{v_i^2}{2} + \sum_{i,j=1}^{n} \cf{i\neq j} \varepsilon^\alpha \varphi\left(\frac{x_i-x_j}{\varepsilon}\right), \quad  (x_i,v_i) \in  \Lambda \times \R^d,
    \end{align*}
    where \(\Lambda \subset \R^d\) is the spatial domain and \(\varphi\) is the pairwise interaction potential of the system, which is assumed to be central and to be finite (compactly supported) or infinite range.
    Given \(\beta>0\) and \(\delta>-d\), we consider a grand canonical ensemble of identical particles, described by a sequence of exchangeable 
    densities \(W^{\varepsilon}=\{W^\varepsilon_n\}_{n \geq 0}\), with 
    \[
    W^{\varepsilon}_n(\vc{z}_n) \propto   \mu_{\varepsilon,\delta}^nf^{\otimes n}(\vc{z}_n)\exp\Biggl\{-\beta\,\sum_{i \neq j}\varepsilon^\alpha \varphi\left(\frac{x_i-x_j}{\varepsilon}\right)\Biggr\} , \quad \vc{z}_n= (x_i,v_i)_{i=1}^{n} \in \Omega_n = (\Lambda \times \R^d)^{n},
    \]
     where
    \(f\) is an assigned non-equilibrium one-particle density and \(\mu_{\varepsilon,\delta}=\varepsilon^{-d-\delta}\) is the activity parameter of the ensemble.
    The scaling parameter \(\varepsilon\) can be regarded as the ratio 
    between the characteristic microscopic and macroscopic units of the system.
    The behavior of the system in the limit \(\varepsilon\to 0\)
    depends crucially on the scaling considered, 
    and therefore on the parameters \(\alpha\) and \(\delta\)
    that influence, respectively, the strength of the pairwise interaction
    and the typical number of particles in the system.
    Different choices of the scaling and their relevance to kinetic theory are recalled in 
    Subsection \ref{subsection:comments-on-the-scaling-regimes} below.
   
    In a range of admissible scaling regimes
    depending on \(\alpha\) and \(\delta\), 
    the grand canonical measure \(W^\varepsilon\) can be regarded as
    a quasi-Poisson distribution; or ``\emph{maximally factorized} (or \emph{chaotic}) \emph{state}''. 
    Such states constitute a natural, simple initial data in the derivation of collisional kinetic equations; see e.g.~\cite{ehlers_time_1975, CercignaniIllnerPulvirenti1994, spohn_large_1991}.
    They can be also viewed as inhomogeneous Gibbs-type distributions,
    obtained by weighting a Poisson reference measure with the Boltzmann factor of the pair interaction. 
    For an isothermal Maxwellian velocity distribution, they have the form of a local Gibbs state; 
    more general local Gibbs states with spatially varying temperature could be also included, 
    by appropriately modifying the pair weights, subject to stability assumptions.
    Maximally chaotic states at low-density regime have also been used as an ansatz for the dynamical distribution of particles, giving rise to a family of finite-order corrections to the Boltzmann equation; see \cite{Resibois1978, GerasimenkoGapyak2012}.
      
    As the main result of the present work, 
    we obtain precise pointwise estimates for the general \(j^{\text{th}}\) order cumulants 
    of the maximally chaotic states \(W^\varepsilon\), 
    controlling explicitly both the dependence on the combinatorial graph enumeration factors \(j^{j-2}\) and the decay rates in the interparticle distance with the microscopic parameter: exponential decay for finite-range interactions and polynomial decay for long-range potentials. The structure of these decay rates is uniform across different scaling regimes; 
    see Theorem~\ref{thm:main-theorem}.
    
    Our main technique 
    is the cluster expansion, which allows one to estimate the cumulants of the system through a series of graph integrals. This method has a long history in the study of thermodynamic limits of classical continuous systems and discrete lattice systems; see, e.g.,~\cite{gallavotti_correlation_1968, ruelle1969statistical, gruber_general_1971}. 
    Pointwise decay bounds are also known in the literature as instances of the ``strong cluster/decrease property'', see~\cite{duneau_decrease_1973, DuneauIagolnitzerSouillard1974, duneau_cluster_1976, minlos_estimates_1977}. Cumulants of maximally chaotic states in the Boltzmann--Grad scaling \((\alpha=0, \delta=-1)\) were also studied in~\cite{SimonellaWinter2026}.

    For infinite-range interactions, there is a technical difficulty in integrating out the background vertices of a tree,
    while retaining roots on prescribed particles: this may lead to deteriorated exponents or loss in the combinatorics.  
    To get precise pointwise estimates, preserving both the algebraic decay exponent and the factorial dependence on the order of cumulants, we devise a specific algorithmic procedure. This allows us to avoid a standard reduction to linear trees (chains);  see e.g.\,\cite{duneau_cluster_1976, minlos_estimates_1977} and more recently \cite{dorlas_correlation_2020, SimonellaWinter2026}, where an extra square root appears in the decay estimates.
    In our work, we include infinite--range Riesz potentials of type \(|x|^{-s}\) with \(s\) 
    belonging to a range of values which depends on the choice of \(\alpha\) and \(\delta\) 
    through the balance between decay, activity and coupling strength. 
    For instance in the Boltzmann-Grad regime, the decay bound is valid up to \(s>d-1\) (improving the result in~\cite{SimonellaWinter2026}), and including the endpoint $s=d-1$ under an additional smallness condition.
    
    The rest of this article is organized as follows: 
    in Section~\ref{sec2}, we introduce the class of grand canonical states 
    we want to consider and present the main results. 
    We also discuss the different scaling regimes that are relevant from the point of view of kinetic theory and thermodynamics.
    The pointwise bounds for finite or infinite range potentials are stated in Theorem~\ref{thm:main-theorem}, 
    while the description of the related fluctuation field is given in Corollary \ref{corollary1}, using the \(L^1\) bounds obtained in Proposition~\ref{prop:L1cum}. 
    Section~\ref{sec3} is devoted to the proof of Theorem~\ref{thm:main-theorem}, Proposition~\ref{prop:L1cum}, and Corollary~\ref{corollary1}. 
    Being slightly more technical and of independent interest, we postpone to Section~\ref{sec4} the proof of the main combinatorial estimate of the tree integrals in the case of long--range potentials.

    \section{Set-up and main results}
    \label{sec2}

    In what follows, 
    we consider a system of \(\mathcal{N}\) particles of unit mass in a domain of the Euclidean space \(\Lambda \subset \R^d\).
    We work in the grand canonical ensemble, where the number of particles \(\mathcal{N}\) itself is a random variable.
    Particles
    interact through a pairwise, translation-invariant potential $\varphi :\R^d \to \overline \R \coloneqq \R \cup \{+\infty\}$.
    The potential is assumed to belong to one of the following classes:
    \begin{enumerate}
    \item[(i)] \emph{Finite range} potentials: potentials $\varphi$ satisfying, without loss in generality, 
    \begin{equation}
    \label{eq:support}
    \text{supp}(\varphi) \subseteq B_1(0)\coloneqq\{x \in \R^d\, |\, |x|\leq 1\}.
    \end{equation}
    \item[(ii)] \emph{Riesz type} potentials: potentials $\varphi$ of the form
    \begin{align}
    \label{eq:Riesz}
        \varphi(x) = \frac{1}{\abs{x}^s}, \quad x \in \R^d \setminus \{0\}, \quad s>0.
    \end{align}
    \end{enumerate}
 
    For finite range potentials, we assume the following standard stability condition.
    \begin{assumption}[Stability]
    The pair potential \(\varphi \colon \R^d  \to \overline{\R}\) is assumed to be stable, i.e. we assume that
    there exists \(B \geq 0\) such that, for every \(n \in \N\) and every configuration \((x_1,\dots, x_n) \in (\R^d)^n\)
    \begin{align}
    \label{eq:stability}
        \sum_{1\leq i<j \leq n} \varphi(x_i -x_j) \geq -B n.
    \end{align}
    \end{assumption}
    Observe that the Riesz potential \(|x|^{-s}\) is non--negative and therefore automatically satisfies the stability condition \eqref{eq:stability} with \(B=0\).

    As already discussed in the introduction, we consider 
    a parameter \(\varepsilon>0\) that denotes the ratio between the characteristic macroscopic and microscopic length units of the system. 
    The macroscopic behavior of the system in the limit \(\varepsilon\to 0\)
    depends on the strength of the interaction
    and on
    the scaling of \(\E_\varepsilon[\mathcal{N}]\), i.e.
    the typical number of particles in the system.
    Here \(\E_\varepsilon\) is the expectation taken with respect to the reference grand canonical distribution of the particles,
    which itself depends on the activity parameter \(\mu_\varepsilon\) of the grand canonical ensemble.
    The framework considered in this work describes various scaling regimes through the following general scaling assumption.

    \begin{assumption}[Scaling regimes]
        Let \(\alpha \geq 0\) and \(\delta >-d\). We assume that the activity parameter \(\mu_{\varepsilon, \delta}\) of the grand canonical measure scales as
        \begin{align}
        \label{eq:scalingreg}
            \mu_{\varepsilon, \delta} = \varepsilon^{-(d +\delta)}, \quad \varepsilon > 0,
        \end{align}
        while the pair potential \(\varphi\) is rescaled according to
        \begin{align*}
            \varphi_{\varepsilon,\alpha}(x) = \varepsilon^{\alpha}\varphi\left(\frac{x}{\varepsilon}\right), \quad \varepsilon > 0.
        \end{align*}
    \end{assumption}

    We are interested in quantifying the strength of the correlations among particles when the scaling is given by \eqref{eq:scalingreg}.
    By denoting \(z=(x,v) \in \Lambda \times \R^d\) and \(\vc{z}_n =(z_1,\dots,z_n)\), we introduce the main statistical quantity that describes the state of the system. This is given by an absolutely continuous measure 
    \(W^{\varepsilon}\) on \(\Omega \coloneqq \cup_n \Omega_n\) where 
    \[
    \Omega_n\coloneqq\left\{ \vc{z}_n \in \left(\Lambda \times \R^d \right)^n\, |\, x_i \neq x_j \quad \text{for} \quad i \neq j\right\}.
    \]
    We assume \(W^{\varepsilon}\) admits a set of densities \(\{W^{\varepsilon}_n :\Omega_n \to \R_+\}_{n \geq 0}\) of Borel functions that are invariant under permutations and consider the following particular almost factorized ansatz.

    \begin{definition}[Maximally chaotic states]
    \label{def:mcs}
    Let \(f \colon \Lambda \times \R^d\to \R_+\) be a continuous probability density, and let \(\rho \colon \R^d \to \R_+\) be the corresponding spatial probability density, defined as
    \begin{align*}
        \rho(x) \coloneqq \int_{\R^d} f(x,v) \rmd v, \quad x \in \Lambda.
    \end{align*}
Assume further that \(\bar{\rho} \coloneqq \norm{\rho}_{L^\infty(\Lambda)} <+\infty\).
Given \(\varepsilon >0\) and \(\alpha\geq 0\), \(\delta >-d\), a \emph{maximally chaotic state} is a grand canonical measure with densities
    \begin{align}
    \label{eq:MCS}
        \frac{1}{n!}W_n^{\varepsilon}(\vc{z}_n)
        \coloneqq 
        \frac{1}{\mathcal{Z}_{\varepsilon}} \frac{\mu_{\varepsilon,\delta}^n}{n!}f^{\otimes n}(\vc{z}_n)\psi_n^{\varepsilon,\alpha}(\vc{x}_n), \quad \vc{z}_n \in \Omega_n,
    \end{align}
    where the partition function is defined as
    \[
    \mathcal{Z}_{\varepsilon}\coloneqq \sum_{n=0}^\infty\frac{\mu_{\varepsilon,\delta}^n}{n!}\int_{\Omega_{n}}f^{\otimes n}(\vc{z}_n)\psi_n^{\varepsilon,\alpha}(\vc{x}_n) \rmd \vc{z}_n
    \]
    and the functions \(\psi^{\varepsilon,\alpha}_n\) are the Boltzmann factors, defined as \(\psi_0^{\varepsilon,\alpha} \coloneqq 1\), \(\psi_1^{\varepsilon,\alpha} \coloneqq 1\), and for \(n \geq 2\) as
    \begin{align*}
    \psi^{\varepsilon,\alpha}_n(\vc{x}_n) \coloneqq \prod_{1\leq i < j \leq n} \rme^{-\beta \varepsilon^{\alpha}\varphi\left(\frac{x_{i}-x_j}{\varepsilon}\right)}, \quad \vc{x}_{n} \in \Omega_n, \quad \beta >0.
    \end{align*}
    Here, \(\beta\) is proportional to the \emph{inverse temperature} of the system. 
    \end{definition}
     \begin{remark}
    The activity parameter \(\mu_{\varepsilon,\delta}\) describes the asymptotic behaviour of the expected number of particles in the system. The precise asymptotic connection between \(\E_{\varepsilon}[\mathcal{N}]\) and \(\mu_{\varepsilon,\delta}\) is established in Lemma \ref{lemma:average-number} at the start of Section \ref{sec3}.
    \end{remark}

    The objects we want to study are given by correlation functions and truncated correlation functions (or cumulants). The latter objects quantify how far from the factorized state the \(j\)th correlation function is.

    \begin{definition}[Correlation functions, cumulants]
    \label{def:correlation-functions}
        Given a set of permutation invariant densities \(\{W_n^{\varepsilon}\}_{n=0}^\infty\) with respect to the corresponding Lebesgue measures, each respectively defined on \(\Omega_n\), we define the set of correlation functions \(\{\rho^{\varepsilon}_{j}\}_{j=0}^\infty\) through the formulas
        \begin{align*}
            \begin{cases}
                \rho_0^{\varepsilon}&\coloneqq 1\\
                \rho_j^{\varepsilon}(\vc{z}_j) &\coloneqq \sum_{n=0}^\infty \frac{1}{n!}\int_{\Omega_n}  W_{j+n}^{\varepsilon}(\vc{z}_{j+n})\rmd z_{j+1}\cdots \rmd z_{j+n},
            \end{cases}
        \end{align*}
        provided that the series on the left--hand-side converges.
        The corresponding \emph{truncated correlation functions} or \emph{cumulants} are defined recursively as follows. First, let
        \(\rho_{0}^{\varepsilon,c} \coloneqq 0\) and \(\rho_{1}^{\varepsilon,c} \coloneqq \rho_{1}^{\varepsilon}\), and then define the higher--order cumulants as
        \begin{align}
                \label{eq:cumulants}
        \rho_{j}^{\varepsilon, c}(\vc{z}_j) \coloneqq \rho_j^{\varepsilon}(\vc{z}_j) - \sum_{k=2}^j\sum_{\{\pi_1,\dots, \pi_k\} \vdash [j]} \prod_{i=1}^k \rho_{\abs{\pi_i}}^{\varepsilon,c}(\vc{z}_{\pi_i}), \quad j \geq 2.
        \end{align}
        In the above formula, the vector notation \(\vc{z}_A\) with \(A \subset [j] \coloneqq \{1,\dots, j\}\) refers to the point \((z_{a})_{a \in A} \in \Omega_{\abs{A}}\), and the notation \(\{\pi_1,\dots, \pi_k\} \vdash [j]\) means that the sets \(\pi_{i}\) partition \([j]\).

        For the convenience of the reader, we recall that the second and third order cumulants are given by
        \begin{align*}
        \rho^{\varepsilon,c}_2(z_1,z_2) &= \rho^{\varepsilon}_2(z_1,z_2) - \rho^{\varepsilon}_1(z_1)\rho^{\varepsilon}_1(z_2) \\
        \rho^{\varepsilon, c}_3(z_1,z_2,z_3)
        &= \rho^{\varepsilon}_3(z_1,z_2,z_3) - \rho^{\varepsilon}_2(z_1,z_2)\rho^{\varepsilon}_1(z_3) - \rho^{\varepsilon}_2(z_1,z_3)\rho^{\varepsilon}_1(z_2) \\
        &- \rho^{\varepsilon}_2(z_2,z_3)\rho^{\varepsilon}_1(z_1) + 2\rho^{\varepsilon}_1(z_1)\rho^{\varepsilon}_1(z_2)\rho^{\varepsilon}_1(z_3).
        \end{align*}
    \end{definition}

    We can now state the main theorem of this article. It establishes precise pointwise bounds
    for the truncated correlation functions (cumulants) introduced in \eqref{eq:cumulants},
    showing their decay in the scaling parameter \(\varepsilon\) for both finite range
    and Riesz--type potentials within the family of scalings presented in \eqref{eq:scalingreg}.

    For finite range potentials, the theorem yields exponential decay in terms of the minimal tree length \(\mathcal{L}(\vc{x}_j)\)
    connecting the macroscopic positions. The tree length \(\mathcal{L}(\vc{x}_j)\) is defined as the shortest tree connecting points \(\{x_{i}\}_{i=1}^j \subset \Lambda\), where one is allowed to introduce an arbitrary number of additional points \(x_{j+1},\dots, x_{j+n} \in \Lambda\). More precisely,
            \begin{align*}
                \mathcal{L}(\vc{x}_j) \coloneqq \inf_{n\geq 0}\inf_{x_{j+1}, \dots, x_{j+n} \in \Lambda}\min_{T \in \mathcal{T}_{j+n}} \sum_{\{\gamma, \gamma'\} \in E(T)} \abs{x_{\gamma}-x_{\gamma'}},
            \end{align*}
    where \(\mathcal{T}_{j+n}\) is the set of trees with \(j+n\) vertices and \(E(T)\) denotes the set of edges of \(T \in \mathcal{T}_{j+n}\).
    Note that by the triangle inequality,
            \begin{align*}
                \mathcal{L}(\vc{x}_j) \geq \max_{\gamma,\gamma' \in [j]}\abs{x_{\gamma}-x_{\gamma'}}.
            \end{align*}
    
    For Riesz potentials, the decay is expressed in terms of products along spanning trees of \emph{Mayer functions}, defined by
    \begin{equation}
        \label{eq:mayer}
        \eta^{\varepsilon,\alpha}(x,y)\coloneqq \rme^{-\beta \varepsilon^\alpha \varphi\left(\frac{x-y}{\varepsilon}\right)}-1,
    \end{equation}
    and leading in general only to polynomial decay. Notice that, if the domain \(\Lambda\) is bounded and the potential \(\varphi\) stable, one always has
    \begin{align*}
        \int_{\Lambda}  \abs{\eta^{\varepsilon,\alpha}(y,x)}\rmd x<+\infty, \quad  y \in \Lambda.
    \end{align*}
    
    \begin{theorem}[Pointwise estimates]
        \label{thm:main-theorem}
        Let $\varphi$ be a translation invariant pairwise potential satisfying the stability condition in \eqref{eq:stability}
        and let $\rho_j^{\varepsilon, c}$ be the $j$-th order cumulant of the maximally chaotic state introduced in \eqref{eq:MCS}.
        Given \(\alpha \geq 0\) and \(\delta >-d\), the following results hold:
        \begin{enumerate}
            \item \label{thm:pt1} Let $\varphi$ be finite range as in \eqref{eq:support} and assume that
            \begin{equation}
            \label{gamma_ass}
                \int \abs{\eta^{\varepsilon,\alpha}(x,0)} \rmd x \le C_{\beta, \varphi} \varepsilon^{\gamma},
            \end{equation}
            for some \(\gamma=\gamma(\alpha,d) \geq d\). Then, if
            \begin{equation}
            \label{eq:p-sr}
            p_{\mathrm{fr}}\coloneqq \gamma -d-\delta>0,
            \end{equation}
            there exist $\varepsilon_0(d,B,\beta,\gamma,\delta,\bar\rho,\varphi)>0$ and $C>0$ such that
            \[
            \dfrac{|\rho^{\varepsilon,c}_j(\vc z_j)|}{\mu_{\varepsilon,\delta}^j}
            \le
             \dfrac{(Cf)^{\otimes j}(\vc z_j)}{1-(\varepsilon/\varepsilon_0)^{p_{\mathrm{fr}}}}\, 
             j^{j-2} \,
             \left(\dfrac{\varepsilon}{\varepsilon_0}\right)^{p_{\mathrm{fr}}\left[\frac{\mathcal{L}(\vc x_j)}{\varepsilon}-(j-1)\right]_+}
            \]
            for every \(\varepsilon \in (0, \varepsilon_0)\), \(j \geq 2\) and \(\vc{z}_j \in (\Lambda \times \R^d)^j\).
            Moreover, in case \(p_{\mathrm{fr}}=0\), there exist \(\bar \rho_0(d,B, \beta, \varphi)>0\) and \(C>0\) such that
            \begin{equation}
            \label{eq:boundfr}
            \dfrac{|\rho^{\varepsilon,c}_j(\vc z_j)|}{\mu_{\varepsilon,\delta}^j}
            \le
             \dfrac{(Cf)^{\otimes j}(\vc z_j)}{1-\bar \rho/{\bar \rho}_0}\, 
             j^{j-2} \,
             \left(\dfrac{\bar \rho}{{\bar \rho}_0}\right)^{\left[\frac{\mathcal{L}(\vc x_j)}{\varepsilon}-(j-1)\right]_+}
            \end{equation}
            for every \(\bar\rho \in (0, {\bar \rho}_0)\), \(j \geq 2\) and \(\vc{z}_j \in (\Lambda \times \R^d)^j\).

            \medskip
            
            \item \label{thm:pt2} Assume \(\Lambda\) to be bounded and let \(\varphi\) be of Riesz--type as in \eqref{eq:Riesz} with power \(s >0\). Introducing the discrepancy \(\Delta_{s,d} \coloneqq s-d\),
            we set 
            \begin{align}
            \label{eq:plongrange}
                p_{\mathrm{ir}}\coloneqq \begin{cases}
                    \alpha + \Delta_{s,d}-\delta, \quad &\text{for} \quad 0<s\leq d,\\
                     \dfrac{d}{s}\alpha-\delta, \quad &\text{for} \quad s > d,
                \end{cases}
            \end{align}
            and assume that \(p_{\mathrm{ir}}>0\).
            Then, there exist \( \varepsilon_0(d, s, \alpha, \beta, \delta, \bar\rho)>0\) and \(C>0\) such that
            \begin{align}  
            \label{ub:ir-corr}
            \frac{\abs{\rho_{j}^{\varepsilon, c}(\vc{z}_j)}}{\mu_{\varepsilon, \delta}^j} \leq \frac{(C f)^{\otimes j}(\vc z_j)}{1-\kappa(\varepsilon,\varepsilon_0)}^{p_{\mathrm{ir}}}\, j^{j-2}\,\max_{T \in \mathcal{T}_{j}} \left(\prod_{\{\gamma, \gamma'\}\in E(T)} \abs{\eta^{\varepsilon,\alpha}(x_{\gamma}, x_{\gamma'})}\right),
            \end{align}
            for every \(\varepsilon \in (0, \varepsilon_0)\), \(j \geq 2\) and \(\vc{z}_j \in (\Lambda \times \R^d)^j\), where
            \[
            \kappa(\varepsilon,\varepsilon_0)\coloneqq \begin{cases}
                \dfrac{\varepsilon}{\varepsilon_0}\left(\dfrac{\log(\varepsilon)}{\log(\varepsilon_0)}\right)^{\frac{1}{p_{\mathrm{ir}}}}, \quad &\text{if} \quad s=d,\\[3mm]
                \dfrac{\varepsilon}{\varepsilon_0}\quad &\text{if} \quad s \neq d.
            \end{cases}
            \]
            Moreover, in case \(p_{\mathrm ir}=0\) and \(s \neq d\), there exist \(\bar \rho_0(d,s,\beta)>0\) and \(C>0\) such that
            \[
          \frac{\abs{\rho_{j}^{\varepsilon, c}(\vc{z}_j)}}{\mu_{\varepsilon, \delta}^j} \leq \frac{(C f)^{\otimes j}(\vc z_j)}{1- \bar \rho/\bar \rho_0}\, j^{j-2}\,\max_{T \in \mathcal{T}_{j}} \left(\prod_{\{\gamma, \gamma'\}\in E(T)} \abs{\eta^{\varepsilon,\alpha}(x_{\gamma}, x_{\gamma'})}\right)
            \]
            for every \(\bar\rho \in (0, {\bar \rho}_0)\), \(j \geq 2\) and \(\vc{z}_j \in (\Lambda \times \R^d)^j\).
        \end{enumerate}
        \end{theorem}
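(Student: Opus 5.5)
We plan to prove Theorem~\ref{thm:main-theorem} by the standard cluster expansion of the cumulants. We treat the measure $W^\varepsilon$ as a polymer gas with one-body weight $\mu_{\varepsilon,\delta} f(z)$ and pair weight $1+\eta^{\varepsilon,\alpha}$. The cumulants then admit the convergent representation
\begin{align*}
\rho^{\varepsilon,c}_j(\vc z_j)=\mu_{\varepsilon,\delta}^j f^{\otimes j}(\vc z_j)\sum_{n\ge 0}\frac{\mu_{\varepsilon,\delta}^n}{n!}\int f^{\otimes n}(\vc z_{j+1,\dots,j+n})\sum_{G\in\mathcal C_{j+n}}\prod_{\{a,b\}\in E(G)}\eta^{\varepsilon,\alpha}(x_a,x_b)\,\rmd\vc z_{j+1,\dots,j+n},
\end{align*}
where $\mathcal C_{j+n}$ denotes the connected graphs on $j+n$ vertices. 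Integrating out velocities replaces $f$ by $\rho\le\bar\rho$ for the $n$ background vertices.

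Next we apply the Penrose tree-graph inequality, which uses stability \eqref{eq:stability}. This bounds $\bigl|\sum_{G}\prod\eta\bigr|$ by $e^{2\beta\varepsilon^\alpha B(j+n)}\sum_{T\in\mathcal T_{j+n}}\prod_{E(T)}|\eta^{\varepsilon,\alpha}|$. For Riesz potentials $B=0$ and $\eta\in[-1,0]$, so the inequality is immediate. This reduces the problem to estimating tree integrals in which the $j$ root vertices are fixed and the $n$ background vertices are integrated.

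Each background vertex costs a factor $\mu_{\varepsilon,\delta}\bar\rho\,\|\eta\|_{L^1}$. In the finite range case this factor is at most $C\bar\rho\,\varepsilon^{-d-\delta+\gamma}=C\bar\rho\,\varepsilon^{p_{\mathrm{fr}}}$ by \eqref{gamma_ass}. In the Riesz case we split the integral into $|x|\lesssim \varepsilon^{1-\alpha/s}$, where $|\eta|\le 1$, and the tail, where $|\eta|\approx\beta\varepsilon^{\alpha+s}|x|^{-s}$ integrated over the bounded set $\Lambda$. This gives $\|\eta\|_{L^1}\lesssim\varepsilon^{d+\delta+p_{\mathrm{ir}}}$, with a $\log\varepsilon$ factor when $s=d$; this is the origin of $p_{\mathrm{ir}}$ and of $\kappa$.

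For the finite range case, we first note that a tree on $j+n$ vertices whose edges all have length at most $\varepsilon$ spans the roots only if $n\varepsilon+(j-1)\varepsilon\ge\mathcal L(\vc x_j)$. Hence only terms with $n\ge[\mathcal L/\varepsilon-(j-1)]_+$ contribute. We then count trees with a fixed root set, following Cayley and the standard count of trees with prescribed degree data. This bounds $\sum_T\int\prod|\eta|$ by $C^{j+n}j^{j-2}(\,\cdot\,)^n\frac{(j+n)!}{j!\,n!}$-type expressions. The binomial factor is absorbed into $C^j 2^n$. Summing the geometric series in $(\varepsilon/\varepsilon_0)^{p_{\mathrm{fr}}}$ starting from the minimal $n$ gives the claim. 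When $p_{\mathrm{fr}}=0$, the same geometric series is controlled by $\bar\rho/\bar\rho_0$ instead.

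The main obstacle is the Riesz case, which requires pointwise decay $\max_T\prod|\eta(x_\gamma,x_{\gamma'})|$ on the roots. To obtain it we must integrate out background vertices of a tree $T\in\mathcal T_{j+n}$ so that the result is bounded by $C^{n}\|\eta\|_1^n$ times a product of $\eta$'s over a tree on the roots alone, without losing the factorial count. The key tool is the convolution inequality
\begin{align*}
\int_\Lambda|\eta(x,y)||\eta(y,w)|\,\rmd y\le C\|\eta\|_{L^1}|\eta(x,w)|.
\end{align*}
It should follow from the estimate $|\eta(x,y)|\asymp\min(1,\varepsilon^{\alpha+s}|x-y|^{-s})$ by splitting according to whether $|x-y|\ge|x-w|/2$. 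We first prune leaves that are background vertices, each contributing one factor $\|\eta\|_1$. After pruning, each background vertex lies on paths between roots. A background vertex of degree two is contracted using the convolution inequality. A background vertex of higher degree $k$ is handled by a generalized star inequality
\begin{align*}
\int\prod_{i=1}^k|\eta(y,w_i)|\,\rmd y\le C^k\|\eta\|_1\max_{\text{tree on }\{w_i\}}\prod|\eta|.
\end{align*}
We would prove this by locating $y$ near the closest $w_i$. The iteration must be organized algorithmically so that the constants stay of the form $C^{j+n}$ and the induced root tree is always a genuine spanning tree on $[j]$; this is where care is needed. Once this reduction holds, the summation over $n$ proceeds exactly as in the finite range case, with ratio $\kappa(\varepsilon,\varepsilon_0)^{p_{\mathrm{ir}}}$, or $\bar\rho/\bar\rho_0$ when $p_{\mathrm{ir}}=0$.
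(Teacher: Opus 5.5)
Your architecture is the paper's: cluster expansion, Penrose's tree-graph bound, Cayley's formula, and a geometric series started at $n\ge[\mathcal L(\vc x_j)/\varepsilon-(j-1)]_+$ in the finite-range case. For Riesz potentials you also follow it: the bound $\|\eta^{\varepsilon,\alpha}\|_{L^1}\lesssim\omega(\varepsilon)$, pruning of background leaves, contraction of degree-two background vertices by the convolution inequality, and rerouting to a nearest neighbour. (Two small points: the crossover radius is $\varepsilon^{1+\alpha/s}$, not $\varepsilon^{1-\alpha/s}$. In the finite-range pointwise bound, the background--root edges that cannot be integrated must be bounded by $\sup|\eta^{\varepsilon,\alpha}|\le\rme^{2\beta\varepsilon^\alpha B}$, which is where the factor $(1+\rme^{2\beta\varepsilon^\alpha B})^{j-1}$ comes from.)

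The genuine gap is the step you leave open, which is the heart of part (ii). Your star inequality is correct when all the $w_i$ are roots. If some $w_i$ are background vertices still to be integrated, two things go wrong. First, the $\max$ over trees cannot be carried through the later integrals and must become a sum over candidate centres. Second, when the closest $w_{i^*}$ is a background vertex, it inherits all other neighbours of $y$, so degrees grow.

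A caterpillar shows the effect. Take background vertices $y_1,\dots,y_m$ on a path, each attached to one root, plus one extra root at each end, so that $j=m+2$ and every $y_k$ has degree three. Integrate $y_1,y_2,\dots$ in turn and follow the branch in which the centre is always $y_{k+1}$. Then $y_k$ has degree $k+2$ when it is integrated, and the constants $C^{k+2}$ of your inequality multiply to about $C^{m^2/2}$. This single branch already gives $C^{j^2/2}$ in place of $C^j$. That is precisely the loss in the combinatorics that Section~\ref{sec4} of the paper is designed to avoid.

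The missing idea is a rule ensuring that each background vertex, when integrated, is attached only to one remaining background neighbour and at most one root. The paper does this as follows.
\begin{itemize}
\item It splits the free vertices into components $C_i$ with root boundary $\partial C_i$. At most $j-1$ of these have $|\partial C_i|\ge 2$, and $\sum_i|\partial C_i|\le 2(j-1)$.
\item After removing leaves and chains, degree counting gives $|C_i^{\mathrm r}|\le|\partial C_i|-2$.
\item Each vertex's root neighbours are rerouted once to the nearest of them, via a partition into at most $\rme^{|\partial C_i|/\rme}$ cells.
\item Vertices are processed in reversed breadth-first order, so each has a single unprocessed background neighbour. Its root is handed to that neighbour, and when the neighbour carries two roots the farther one is frozen into a root--root edge (Steps 2a--2b).
\end{itemize}
This yields at most $4^{|C_i^{\mathrm r}|}$ final trees and a spanning tree on $[j]$.

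Your own argument can be closed more directly in the same spirit:
\begin{itemize}
\item Integrate each background component leaf-first.
\item At the vertex $y$ being integrated, choose the star centre $a$ only among the $f_y$ roots currently adjacent to $y$, and reroute the other $f_y-1$ roots to $a$.
\item Contract $y$ with its unique remaining background neighbour $p$ by the convolution inequality, which creates the edge $\{p,a\}$. If there is no such $p$, integrate $|\eta^{\varepsilon,\alpha}(x_y,x_a)|$ directly.
\end{itemize}
Each vertex then passes exactly one root upward and the graph stays a tree. Root--root edges are never removed, so $\sum_y(f_y-1)\le j-1$. Hence the number of terms satisfies $\prod_y f_y\le 2^{j-1}$, and the dilation constants satisfy $\prod_y 2^{s(f_y-1)}\le 2^{s(j-1)}$. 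Both depend on $j$ only, which gives the claimed $C^j j^{j-2}$ structure.
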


    \begin{remark}
    \label{rm1}
    We highlight a few points of relevance:
        \begin{enumerate}[]
            \item[(i)]\label{rm1:i} The exponent \(\gamma\) introduced in \eqref{gamma_ass} depends on the type of finite-range potential. For example, if \(\varphi \colon \R^d \to \bar{\R}\) is a hard-core potential such as
            \begin{align*}
            \varphi(x) =\begin{cases} +\infty, &\quad \text{ if }\abs{x} \leq 1 \\
                0, &\quad \text{ otherwise, }
            \end{cases}
            \end{align*}
            we get that 
            \[
            \int\abs{\eta^{\varepsilon,\alpha}(x,0)} \rmd x = \varepsilon^d \abs{B_1(0)},
            \]
            hence \(\gamma=d\).
            On the other hand, for potentials \(\varphi \in L^1\), we get the bound
            \begin{align*}
            \int \abs{\eta^{\varepsilon,\alpha}(x,0)} \rmd x &\le \varepsilon^d \left[\int_{\{\varphi\geq 0\}}\beta\, \varepsilon^\alpha\, \varphi(x) \rmd x+ \int_{\{\varphi<0\}}\beta\, \varepsilon^\alpha (-\varphi(x)) \, \rme^{2 \beta \varepsilon^\alpha B}\, \rmd x\right]\\
            &\le C_{\beta,\varphi} \, \varepsilon^{\alpha +d},
            \end{align*}
            where \(C_{\beta,\varphi}=\beta \, \rme^{2\beta B} \norm{\varphi}_{L^1} \) and so, in this case, \(\gamma=\alpha +d\).
            Finally, if \(\varphi(x)=|x|^{-s} \mathbf{1}_{|x|\le 1}\) with \(s>d\), by a direct computation we get that \(\gamma=d(1+\alpha/s).\)

         \item[(ii)] In the case of exponentially decaying long-range potentials of type \(\varphi(x)=\rme^{-|x|}\in L^1(\Lambda)\), the conclusion in \eqref{ub:ir-corr} still holds if \(\alpha-\delta >0\). This is valid since the interaction potential is in \(L^1\) and coincides with the finite-range case when \(\gamma=\alpha+d\) in \eqref{gamma_ass}.
    
         \item[(iii)]\label{rm:smallrho}
         As an alternative to the regime where \(\bar \rho\) is small, analogous pointwise estimates could be obtained in the threshold case \(p_{\rm fr}=0\) or \(p_{\mathrm{ir}}=0\) (with \(s \neq d\)) by taking the inverse temperature \(\beta\) sufficiently small, whenever the constants in the convergence criterion retain an explicit dependence on \(\beta\).
     
        \item[(iv)]
        For bounded finite-range potentials, the pointwise bound \eqref{eq:boundfr} can be improved by an additional factor \(\varepsilon^{\alpha(j-1)}\) for the \(j\)-th order cumulant.
        Indeed in this case, the stronger bound
        \[
        |\eta^{\varepsilon,\alpha}(x,y)|\le
        C_{\beta,\varphi}\, \varepsilon^\alpha \, \|\varphi\|_\infty,
        \]
        can be used to improve \eqref{ineq:improvable}. 
        The bounds obtained are consistent with the classical literature on correlation functions for Kac potentials \(\varphi_\gamma(q)=\gamma^d \varphi(\gamma q)\) in the limit \(\gamma \to 0\); see e.g. \cite{KacUhlenbeckHemmer1963, LebowitzStellBaerTheumann1966, Gates1975, GreweKlein1977, merola_asymptotic_1999}.\footnote{The correspondence between the Kac setting and our results is obtained by setting \(\varepsilon=\gamma^{\frac d\alpha}\), \(\alpha=\delta\), and microscopic scaling \(q=x \mu_{\varepsilon,\delta}^{\frac 1 d}\).}
        For example, for \(j=2\),  in the critical regime \(\alpha=\delta\) and for small values of the density \(\bar \rho\),
        our estimate gives an \(O(\gamma^d)\) bound for the two-point cumulants, consistently with the estimate established for the Lebowitz--Mazel--Presutti model in \cite[Theorem 11.5.7.2]{Presutti2009}.
      
        \item[(v)] In the infinite--range case we restrict our attention to repulsive Riesz potentials for simplicity and since their homogeneity makes the scaling dependence explicit. The tree--reduction method is not tied to this precise form, and everything can be adapted as is to algebraically decaying Mayer function given by repulsive, monotone potentials, provided that additionally the necessary integrability bounds hold uniformly in the scaling parameter. Moreover, with modifications to the step outlined in \ref{sec4.4}, the result is expected to hold for potentials with a long--range attractive part, provided that it decays fast enough.
    
        \item[(vi)] In Part \ref{thm:pt2} of Theorem \ref{thm:main-theorem}, the regime \(0<s\leq d\) is also allowed if the interaction is sufficiently weak and the gas sufficiently diluted so that \(p_{\rm ir}>0\), i.e.
            \[\alpha+s>d+\delta.\]
            
             \item[(vii)] In the finite-range case, the compact support of the interaction implies that all integrations are localized to balls of radius \(\varepsilon\), and the above estimates only require \(\rho \in L^\infty(\Lambda)\). By contrast, in the infinite-range case, the boundedness of the spatial domain \(\Lambda \subset \R^d\) is required due to the tails of the interaction, and for \(\Lambda= \R^d\) only the regime \(s>d\) gives integrable tails.
        \end{enumerate}
    \end{remark}
    
    \subsection{Comments on the scaling regimes}
    \label{subsection:comments-on-the-scaling-regimes}

    For a fixed dimension \(d\), the conditions on the parameters \(p_{\mathrm{fr}}\) and \(p_{\mathrm{ir}}\) in 
    \eqref{eq:p-sr} and \eqref{eq:plongrange} 
    determine the admissible regions where the cluster expansions converge and yield the decay of truncated correlations established in Theorem \ref{thm:main-theorem}.
    The boundaries of these regions, represented by solid blue lines in the figures below and called \emph{critical lines}, mark the onset of the non-ideal gas thermodynamics behaviour and corrections to chaoticity, as can be seen in Corollary \ref{corollary1}.

    In this section, we discuss the relationship between the admissible regions provided by the region of convergence of cluster expansions given in the previous theorem and the kinetic scaling regimes conjectured in \cite[Conjecture 1.2 and 1.3]{NotaVelazquezWinter2021} 
    which are expected to lead to kinetic equations of Boltzmann, Landau, or Balescu--Lenard type.
    These scalings are obtained 
    by imposing \(O(1)\) cumulative effect of particle interactions over a macroscopic time interval (cfr.~\cite{Spohn80, spohn_large_1991,Pulvirenti2021,NotaVelazquezWinter2021}).
    We distinguish between regimes dominated by the core-part of the potential and those determined by its tail.
    
    \subsubsection{Core behavior:} 
   Let
    \(\varphi(x)=|x|^{-s} \mathbf{1}_{|x|\le 1}\) be a Riesz-type cut-off potential. Here, the admissibility condition for the convergence of the cluster expansion of correlations is 
    \(
    p_{\mathrm fr}= \gamma-d-\delta>0,
    \)
    where \(\gamma\) depends on the power \(s\); see Remark~\ref{rm1}--(i). The corresponding critical line is given by 
    \[
    \delta_{\mathrm c}(\alpha,s,d):=\begin{cases}
        \alpha \quad &\text{for}\quad 0<s<d,\\[3mm]
        \dfrac ds \alpha \quad &\text{for} \quad s>d.
    \end{cases}
    \]
    For finite-range potentials of Riesz--type, the kinetic limit exhibit a transition between Landau and Boltzmann--type kinetic behaviors. In particular, in dimension \(d=3\), this transition occurs at the Coulomb exponent \(s=1\): if \(s<1\), sub-Coulombian potentials lead to a Landau equation, while for \(s>1\) super-Coulombian potentials lead to a Boltzmann equation. In case \(s=1\), a logarithmic Coulombian correction is needed; see \cite{LeBihanWinter2023, LeBihan2026}.
    The corresponding kinetic regimes in our parametrization, represented by a dashed line in Figures \ref{fig:kinline}--\ref{fig:kinline2} and called \emph{kinetic lines}, are given by
    \[
    \delta_{\mathrm k}(\alpha, s, d):=\begin{cases}
        2\alpha-1 \quad &\text{for}\quad 0<s<\frac{d-1}{2},\\[3mm]
        \dfrac{d-1}{s} \alpha-1 \quad &\text{for} \quad s>\frac{d-1}{2},
        \end{cases}
    \]
    where the \(s=\frac{d-1}{2}\) case requires a logarithmic correction. 
       Indeed, in the diffusive regime, since the potential \(\varphi\) varies on a scale \(\varepsilon\), the force will be \(O(\varepsilon^{\alpha-1})\) and act on a time interval \(O(\varepsilon)\). The variation of momentum due to a single scattering is \(O(\varepsilon^\alpha)\) and the number of particles met by a typical particle over a unit macroscopic interval of time is \(O(\varepsilon^{-\delta-1})\). In order to have a diffusive behaviour, a central limit theorem has to be valid and the total variance has to be \(O(1)\). This implies \(\delta=2\alpha-1\).
       In this case, for \(\alpha \in (0,1)\), the diffusive Landau equation is expected to be valid; see~\cite{BobylevPulvirentiSaffirio2013,VelazquezWinter2018Landau, Winter2021} for partial results on this largely open problem.
       
       In the collisional regime, we have \(O(1)\) variations of momentum due to a single scattering. This happens at a distance \(r_*\) such that \(\varepsilon^{\alpha+s} r_*^{-s}=O(1)\). Moreover, the total variation of momentum in a unit macroscopic time is given by \(\varepsilon^{-d-\delta} \, r_*^{d-1}\) and it has to be \(O(1)\). This implies \(\delta=\frac{d-1}{s}\alpha-1\). 
       The transition between these two regimes occurs at \(s=(d-1)/2\), corresponding to the threshold of integrability of the diffusive coefficient, given by the total contribution of momentum variances.
    For hard-core collisions and for non-rescaled potentials with \(\alpha=0\), we get \(\delta_{\mathrm k}=-1\) which is the low-density or Boltzmann--Grad scaling.
    The rigorous treatment of this scaling in the derivation of a dynamical description of the particle density of the system given by the Boltzmann equation has received significant attention since the 20th century, with notable milestones being, in case of hard-sphere particles, the short time--proof given in the \empty{Lanford's theorem}~\cite{ehlers_time_1975} and its subsequent improvements and extensions, leading up to the recent works~\cite{bodineau_long-time_2024,DengHaniMa2024} (see also~\cite{PulvirentiSaffirioSimonella2014, GallagherSaintRaymondTexier2014} for finite-range potentials).
    We note that, for \(\alpha\in (0,1)\), the kinetic line always lies below the critical line and, for potentials with \(s>(d-1)/2\), begins to decrease its slope as \(s\) increases and the potential becomes more singular near the origin; see Figure \ref{fig:kinline}.
    \begin{figure}[h]
    \centering
    \begin{subfigure}
        \centering 
        \includegraphics[width=0.90\linewidth]{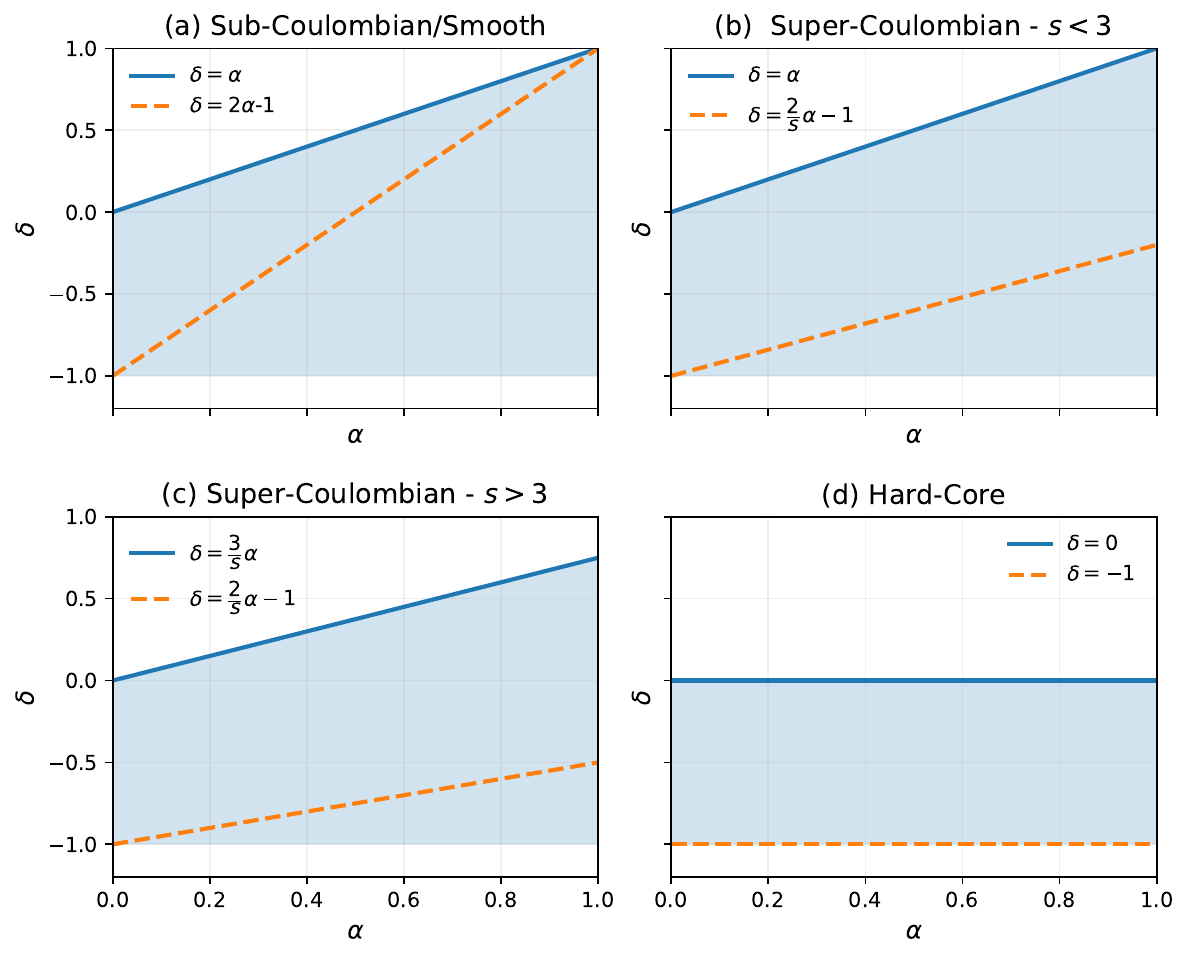}
        \caption{Admissible regions, critical and kinetic lines in the \((\alpha,\delta)\) plane for cut-off Riesz-type potentials \(\varphi(x)=|x|^{-s}\mathbf 1_{|x|\le 1}\) in \(d=3\). The solid blue line denotes the critical line given by the boundary \(p_{\mathrm fr}=0\). The shaded region below the critical line is the region of convergence of the cluster expansions. The dashed line is the kinetic line given by the scaling regimes for: (a) Regular or sub-Coulombian potential (b) Super-Coulombian with $s<3$, here \(s=5/2\); (c) Super-Coulombian potential with \(s>3\), here \(s=4\); (d) Hard-Core potential, formally corresponding to $s=\infty$.}
                \label{fig:kinline}
    \end{subfigure}
    \end{figure}
    \subsubsection{Tail behavior:}
    We consider the Riesz-type potentials for \(s<d\) and compare the kinetic predictions with the infinite-range result of Theorem \ref{thm:main-theorem}.
    In this case the critical line is given by
    \(
    \delta_{\mathrm c}(\alpha,s,d)=\alpha+s-d
    \)
    while the kinetic line is given for \(s>(d-1)/2\) by
    \[
    \delta_{\mathrm k}(\alpha,s,d)=\frac{d-1}{s}\alpha-1
    \]
    In particular,
    \[
    \delta_{\mathrm c}-\delta_{\mathrm k}=(s-d+1)(1+\frac \alpha s).
    \]
    Thus, there exists a region of admissibility if \(s\geq d-1\). For \(s=d-1\) we have that the two lines coincide, see Figure \ref{fig:kinline2}. This is the case of pure Manev interactions \(\varphi(x)=|x|^{-2}\), see e.g. \cite{BobylevDukesIllnerVictory1997}, where coexistence of mean-field and collisional effects are supposed to exist.\footnote{It is important to note that Manev interactions are typically considered in the case of an attractive potential, arising in gravitational contexts, while our analysis is focused on repulsive potentials.}
    For \(s<(d-1)/2\) many weak, long-range interactions dominate and collective effect become relevant. In this regime, the Landau equation is expected to require corrections due to the collective response of the medium, leading to the Balescu--Lenard equation; see e.g.~\cite{DuerinckxSaintRaymond2021,DuerinckxLeBihan2025}.
    \begin{figure}[h]
    \centering
    \begin{subfigure}
        \centering 
        \includegraphics[width=1.0\linewidth]{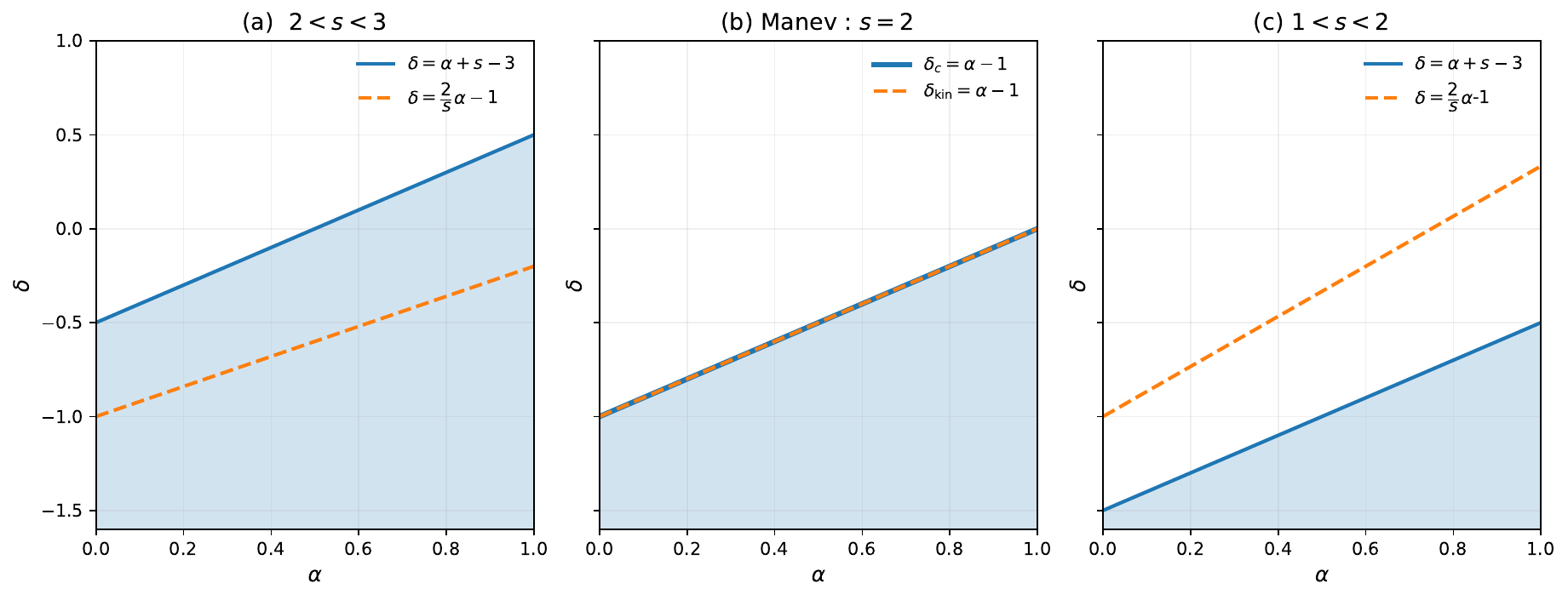}
        \caption{Admissible regions, critical and kinetic lines in the \((\alpha,\delta)\) plane for Riesz-type potentials \(\varphi(x)=|x|^{-s}\) in \(d=3\) with: (a) \(s=5/2\), (b) \(s=2\), pure Manev interaction, (c) \(s=3/2\). 
        }
                \label{fig:kinline2}
    \end{subfigure}
    \end{figure}
    
    \subsection{\(L^1\) theory and fluctuation results}
    \(L^1\) estimates for truncated correlation functions are more straightforward to derive, since the integrals appearing in the cluster expansion contain no fixed vertices, allowing one to integrate over all variables. The following proposition will be useful to prove Corollary \ref{corollary1}.
    
    \begin{proposition}[\(L^1\) estimates for cumulants]
    \label{prop:L1cum}
    Let \(\varphi\) be a translation invariant pairwise potential statisfying the stability condition in \eqref{eq:stability} and let \(\rho_{j}^{\varepsilon, c}\) be the \(j\)th order cumulant of the maximally chaotic state introduced in \eqref{eq:MCS}. Given \(\alpha \geq 0\) and \(\delta > -d\), the following results hold:
    \begin{enumerate}
        \item Let \(\varphi\) be finite range as in \eqref{eq:support} and assume that \eqref{gamma_ass} is valid
        for some \(\gamma = \gamma(\alpha, d) \geq d\). Then, if \(p_{\mathrm{fr}}>0\) as in  \eqref{eq:p-sr},
        there exist \(\varepsilon_0(d,B,\beta, \gamma, \delta, \bar{\rho},\varphi)>0\) and \(C > 0\) such that
        \begin{align*}
            \left \|\frac{\rho^\varepsilon_1}{\mu_{\varepsilon,\delta}}-f\right\|_{L^1(\R^d)} \leq \frac{\rme^{2\beta \varepsilon^\alpha B}}{1-(\varepsilon/\varepsilon_0)^{p_{\mathrm{fr}}}}(\varepsilon/\varepsilon_0)^{p_{\mathrm{fr}}}, \quad \forall \varepsilon \in (0, \varepsilon_0),
        \end{align*}
        and, for \(j\geq 2\),
        \begin{align*}
    \left\|\frac{\rho_{j}^{\varepsilon,c}}{\mu_{\varepsilon,\delta}^j}\right\|_{L^1((\R^d)^j)} \leq \frac{C^j\, \varepsilon^{\gamma(j-1)}}{1-(\varepsilon/\varepsilon_0)^{p_{\mathrm{fr}}}}j^{j-2},\quad \forall \varepsilon \in (0, \varepsilon_0).
        \end{align*}
        \item  Assume \(\Lambda\) to be bounded and let \(\varphi\) be of Riesz--type as in \eqref{eq:Riesz} with power \(s >0\).
            Then, if \(p_{\mathrm ir}>0\), there exist \( \varepsilon_0(d, s, \alpha, \beta, \delta, \bar\rho)>0\) and \(C>0\) such that
            \begin{align}
            \label{in:L1j1}
                \Big\|\frac{\rho_{1}^{\varepsilon, c}}{\mu_{\varepsilon, \delta}}-f\Big\|_{L^1(\R^d)} \leq \frac{(\varepsilon/\varepsilon_0)^{p_{\mathrm{ir}}}}{1-(\varepsilon/\varepsilon_0)^{p_{\mathrm{ir}}}}, \quad \forall \varepsilon \in (0, \varepsilon_0),
            \end{align}
            and, for \(j \geq 2\),
            \begin{align}   
            \label{in:L1j2}
            \left\|\frac{\rho_{j}^{\varepsilon, c}}{\mu_{\varepsilon, \delta}^j}\right\|_{L^1((\R^d)^j)} \leq \frac{C^{j}\, \omega(\varepsilon)^{j-1}}{1-(\varepsilon/\varepsilon_0)^{p_{\mathrm{ir}}}}\, j^{j-2}, \quad \forall \varepsilon \in (0, \varepsilon_0),
            \end{align}
            where 
              \begin{align}
        \label{eq:omega_epsilon}
        \omega(\varepsilon) \coloneqq \begin{cases}
            \varepsilon^{(s+\alpha)\frac ds},  &\text{if} \quad s>d,\\
            \varepsilon^{s+\alpha}\ln(\frac 1\varepsilon),  &\text{if} \quad s=d,\\
            \varepsilon^{s +\alpha} &\text{if} \quad s<d.
            \end{cases}
        \end{align}
    \end{enumerate}
        
    \end{proposition}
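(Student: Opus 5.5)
We will prove Proposition~\ref{prop:L1cum} with the same cluster expansion used for Theorem~\ref{thm:main-theorem}, but integrating over all variables, so that no vertex has to be kept fixed.

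\textbf{Step 1 (graph series).} By the Mayer expansion of the Boltzmann factors, $\psi_n^{\varepsilon,\alpha}=\sum_{G}\prod_{e\in E(G)}\eta^{\varepsilon,\alpha}_e$. The standard Ruelle--Penrose manipulation of the ratio with $\mathcal Z_\varepsilon$ then gives
\[
\frac{\rho^{\varepsilon,c}_j(\vc z_j)}{\mu_{\varepsilon,\delta}^j}=f^{\otimes j}(\vc z_j)\sum_{n\ge0}\frac{\mu_{\varepsilon,\delta}^n}{n!}\int f^{\otimes n}(\vc z_{j+1,\dots,j+n})\sum_{G\in\mathcal C_{j+n}}\prod_{e\in E(G)}\eta^{\varepsilon,\alpha}_e\,\rmd z_{j+1}\cdots\rmd z_{j+n},
\]
a sum over connected graphs on $j+n$ vertices. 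For $j=1$, the term $n=0$ is exactly $f$. Hence $\rho_1^{\varepsilon}/\mu_{\varepsilon,\delta}-f$ is the tail $n\ge1$ of this series.

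\textbf{Step 2 (tree-graph inequality).} Stability \eqref{eq:stability} gives the Penrose/Procacci tree-graph inequality
\[
\Bigl|\sum_{G\in\mathcal C_N}\prod_{e\in E(G)}\eta^{\varepsilon,\alpha}_e\Bigr|\le \rme^{2\beta\varepsilon^\alpha BN}\sum_{T\in\mathcal T_N}\prod_{e\in E(T)}|\eta^{\varepsilon,\alpha}_e|.
\]
For Riesz potentials $B=0$, and the inequality $|\sum_G\prod\eta|\le\sum_T\prod|\eta|$ holds directly since every $\eta_e\in[-1,0]$.

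\textbf{Step 3 (integrating trees).} We integrate every position, removing leaves one at a time. Each edge contributes a factor
\[
\sup_y\int_\Lambda|\eta^{\varepsilon,\alpha}(x,y)|\,\rmd x=:\omega(\varepsilon).
\]
The velocities integrate to $\rho\le\bar\rho$ for the $n$ background vertices and to $1$ for the $j$ roots, which are also integrated in $L^1$.
\begin{itemize}
\item In the finite-range case \eqref{gamma_ass} gives $\omega\le C\varepsilon^\gamma$.
\item In the Riesz case, split $\int_\Lambda|\eta|$ into $|x|\le r_*$ and $|x|>r_*$, where $r_*=\varepsilon^{1+\alpha/s}$ is the distance at which $\beta\varepsilon^{\alpha+s}|x|^{-s}=1$. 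There $|\eta|\le1$ and $|\eta|\le\beta\varepsilon^{\alpha+s}|x|^{-s}$ respectively, and $\Lambda$ is bounded.
\end{itemize}
The Riesz split gives $\varepsilon^{(s+\alpha)d/s}$ when $s>d$, $\varepsilon^{s+\alpha}\ln(1/\varepsilon)$ when $s=d$, and $\varepsilon^{s+\alpha}$ when $s<d$. This is exactly \eqref{eq:omega_epsilon}.

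\textbf{Step 4 (counting and summing).} A tree on $j+n$ vertices has $j+n-1$ edges, and Cayley's formula counts $(j+n)^{j+n-2}$ such trees. The $L^1$ norm is therefore bounded by
\[
\sum_{n\ge0}\frac{(j+n)^{j+n-2}}{n!}\,\bigl(\rme^{2\beta\varepsilon^\alpha B}\bigr)^{j+n}\,\bigl(\mu_{\varepsilon,\delta}\bar\rho\,\omega(\varepsilon)\bigr)^{n}\,\omega(\varepsilon)^{j-1}.
\]
Use $(j+n)^{j+n-2}\le j^{j-2}\,\rme^{j+n}\,n!\,C^{n}$, which follows from $\binom{j+n}{n}\le 2^{j+n}$ together with Stirling's formula. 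The sum becomes $C^j j^{j-2}\omega^{j-1}\sum_n(C'\bar\rho\,\varepsilon^{-d-\delta}\omega(\varepsilon))^n$.

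\textbf{Step 5 (the ratio and the $j=1$ bound).}
\begin{itemize}
\item In the finite-range case, $\varepsilon^{-d-\delta}\omega\le C\varepsilon^{p_{\mathrm{fr}}}$, and the geometric series is bounded by $(1-(\varepsilon/\varepsilon_0)^{p_{\mathrm{fr}}})^{-1}$ once $\varepsilon_0$ is chosen to absorb $C'\bar\rho$.
\item In the Riesz case, the same computation gives $\varepsilon^{p_{\mathrm{ir}}}$ by \eqref{eq:plongrange}. When $s=d$ a logarithm appears, which is absorbed by a slightly smaller $\varepsilon_0$ or by the bound $\ln(1/\varepsilon)\le C\varepsilon^{-\eta}$.
\item For $j=1$ the sum starts at $n=1$. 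This produces the extra factor $(\varepsilon/\varepsilon_0)^{p}$ in front of the geometric series, with prefactor $\rme^{2\beta\varepsilon^\alpha B}$, which equals $1$ in the Riesz case.
\end{itemize}

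\textbf{Main obstacle.} The delicate point is the constant bookkeeping in Step 4: the $j^{j-2}$ must come out cleanly while every $n$-dependence is placed in the geometric ratio. We would handle this with the refined Penrose partition scheme, which bounds the tree weights with a generating function $T(x)=\sum_n n^{n-1}x^n/n!$. Its convergence for $x<\rme^{-1}$ fixes the size of $\varepsilon_0$.
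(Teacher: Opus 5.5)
Your argument follows the paper's proof: the representation formula \eqref{formula:truncorr}, the tree-graph inequality, and leaf-by-leaf integration, where each of the $j+n-1$ edges contributes $\bar\rho\,C\,\omega(\varepsilon)$ because no vertex is held fixed. It then uses Cayley's formula with $\frac{(j+n)^{j+n-2}}{n!}\le j^{j-2}C^j(2\rme)^n$, and sums a geometric series in $\mu_{\varepsilon,\delta}\,\omega(\varepsilon)$. Your explicit $r_*$-splitting of $\sup_y\int_\Lambda|\eta^{\varepsilon,\alpha}(x,y)|\,\rmd x$ is a useful addition, since the paper only uses it implicitly through \eqref{eq:int_l}. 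The generating-function refinement in your last paragraph is unnecessary, because the elementary bound in your Step 4 already isolates $j^{j-2}$. Also, in Step 2 the bound for $\eta_e\in[-1,0]$ comes from Penrose's partition identity with $0\le 1+\eta_e\le 1$, not directly from the sign.

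The step that fails is your treatment of $s=d$ in the $j=1$ bound \eqref{in:L1j1}.

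\emph{Why absorption does not work.} There the ratio of the series is $q(\varepsilon)=C'\bar\rho\,\varepsilon^{p_{\mathrm{ir}}}\ln(1/\varepsilon)$. Since $\ln(1/\varepsilon)\to\infty$, no choice of a ``slightly smaller $\varepsilon_0$'' gives $q(\varepsilon)\le(\varepsilon/\varepsilon_0)^{p_{\mathrm{ir}}}$ on a whole interval $(0,\varepsilon_0)$. Your alternative, $\ln(1/\varepsilon)\le C\varepsilon^{-\eta}$, only proves the weaker exponent $p_{\mathrm{ir}}-\eta$.

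\emph{Why no method can do better.} The $n=1$ term of $\rho^{\varepsilon}_1/\mu_{\varepsilon,\delta}-f$ is $f(z_1)\,\mu_{\varepsilon,\delta}\int\eta^{\varepsilon,\alpha}(x_1,x_2)\,\rmd\rho(x_2)\le 0$. Its $L^1$ norm $\mu_{\varepsilon,\delta}\iint|\eta^{\varepsilon,\alpha}|\,\rmd\rho\,\rmd\rho$ is of order $\varepsilon^{p_{\mathrm{ir}}}\ln(1/\varepsilon)$, for instance whenever $\rho$ is bounded below on a ball. The terms with $n\ge 2$ are only $O(q^2)$. So \eqref{in:L1j1} cannot hold as literally stated at $s=d$.

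\emph{What is actually proved.} Your argument yields $q/(1-q)$, that is, the bound with $\kappa(\varepsilon,\varepsilon_0)^{p_{\mathrm{ir}}}$ in place of $(\varepsilon/\varepsilon_0)^{p_{\mathrm{ir}}}$. This matches Theorem~\ref{thm:main-theorem} once $\varepsilon_0$ is fixed by $C'\bar\rho\,\varepsilon_0^{p_{\mathrm{ir}}}\ln(1/\varepsilon_0)=1$, since then $q=\kappa^{p_{\mathrm{ir}}}$. The paper's own proof has the same gap: it dismisses the Riesz case by ``substituting $p_{\mathrm{fr}}$ with $p_{\mathrm{ir}}$'' without addressing $s=d$.

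\emph{The case $j\ge 2$ is fine.} The logarithm is already contained in the prefactor $\omega(\varepsilon)^{j-1}$. In the geometric series one only needs $q\le 1/2$ on $(0,\varepsilon_0)$, and the resulting factor $2$ is absorbed into $C^j$.
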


    We also establish the asymptotic behavior of the 
    \emph{fluctuation field} related to the maximally chaotic states \eqref{eq:MCS}. 
    For \(h \in C_b(\Omega_1)\), we define
    \begin{equation}
    \label{eq:fluctfield}
    \zeta_{\varepsilon}(h)\coloneqq\sqrt{\mu_{\varepsilon,\delta}}\left(\pi_{\varepsilon}(h)- \mathbb{E}_{\varepsilon}[\pi_\varepsilon (h)] \right), 
    \end{equation}
    where $\mathbb{E}_{\varepsilon}$ denotes the expectation with respect to the maximally chaotic distribution \eqref{eq:MCS} and the random empirical measure is given by
    \[
    \pi_{\varepsilon}(\rmd x,\rmd v)\coloneqq \dfrac{1}{\mu_{\varepsilon,\delta}}\sum_{i=1}^{\mathcal{N}} \delta_{x_i}(\rmd x)
    \otimes \delta_{v_i}(\rmd v),    \]
    with \(\mathcal{N}\) the random number of particles and \(\vc z_{\mathcal{N}}=(z_1,\dots,z_{\mathcal{N}}) \in\Omega=\cup_n \Omega_n\).

    \begin{corollary}
            \label{corollary1}
    Let \(\alpha \geq 0\), \(\delta>-d\) and consider the \emph{fluctuation field} \(\zeta_\varepsilon\) introduced in \eqref{eq:fluctfield}.
    Assume that the stable interaction potential \(\varphi\) is either finite-range as in \eqref{eq:support} or infinite-range as 
    in \eqref{eq:Riesz} and consider the finite/infinite-range parameters \(p_{\rm fr}, p_{\rm ir}\) introduced in \eqref{eq:p-sr} and \eqref{eq:plongrange}. The following results hold:
        \begin{enumerate}
        \item If \(p_{\rm fr}>0\) or \(p_{\rm ir}>0\), the \emph{fluctuation field} \(\zeta_\varepsilon\) converges in law to a centered Gaussian field \(\zeta\) with covariance
        \begin{align}
        \label{eq:limitcov}
            \mathbb{E}[\zeta(h)\zeta(g)] = \int_{\Omega_2}h(z_1)g(z_2)f(z_1)\delta(z_1-z_2)\rmd z_1 \rmd z_2, \quad h, g \in C_b(\Omega_1).
        \end{align}
    
    \item In the critical cases \(p_{\rm fr}=0\) or \(p_{\rm ir}=0\) with \(s \neq d\), in the regime 
    where \(\bar \rho\) is sufficiently small, the fluctuation field still converges in law to a centered Gaussian field \(\zeta\), whose covariance structure is given by
         \begin{align}
        \label{eq:limitcov2}
            \mathbb{E}[\zeta(h)\zeta(g)]& = \int_{\Omega_2}h(z_1)g(z_2)\left[ \bar \rho_1^c(z_1)\delta(z_1-z_2)+ \bar \rho^{c}_2(z_1,z_2)\right] \rmd z_1 \rmd z_2, 
        \end{align}
    with \(\bar \rho^{c}_1 \coloneqq \lim_{\varepsilon\to 0} \rho_1^{\varepsilon} \mu_{\varepsilon,\delta}^{-1}\) and \(\bar \rho^{c}_2 \coloneqq \lim_{\varepsilon\to 0} \rho_2^{\varepsilon,c}\mu_{\varepsilon,\delta}^{-1}\).
        \end{enumerate}
    \end{corollary}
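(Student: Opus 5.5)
The plan is to prove Corollary \ref{corollary1} by the method of cumulants: convergence of all cumulants of \(\zeta_\varepsilon(h)\) to those of a Gaussian, together with Cramér–Wold for joint laws of \((\zeta_\varepsilon(h_1),\dots,\zeta_\varepsilon(h_m))\). The first step is the standard identity expressing the joint cumulants of linear statistics of a point process through truncated correlation functions, summed over set partitions. For \(\pi_\varepsilon(h)=\mu_{\varepsilon,\delta}^{-1}\sum_i h(z_i)\), the \(k\)-th cumulant is
\[
\kappa_k(\pi_\varepsilon(h))=\mu_{\varepsilon,\delta}^{-k}\sum_{\sigma \vdash [k]}\int \rho^{\varepsilon,c}_{|\sigma|}(\vc z_{|\sigma|})\prod_{B\in\sigma} h^{|B|}(z_B)\,\rmd \vc z_{|\sigma|}.
\]
Here each block of \(\sigma\) corresponds to coinciding particles, and the formula follows from expanding \(\log \E_\varepsilon[\rme^{t\pi_\varepsilon(h)}]\) via the generating functional of correlations. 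Consequently \(\kappa_k(\zeta_\varepsilon(h))=\mu_{\varepsilon,\delta}^{k/2}\kappa_k(\pi_\varepsilon(h))\). A term indexed by a partition with \(m\) blocks is bounded by \(\mu^{k/2-k+m}\,\|h\|_\infty^k\,\|\rho^{\varepsilon,c}_m/\mu^m\|_{L^1}\).

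Next I insert Proposition \ref{prop:L1cum}. Write \(\mu=\mu_{\varepsilon,\delta}\), \(\omega_{\mathrm{fr}}=\varepsilon^{\gamma}\), and \(\omega_{\mathrm{ir}}=\omega(\varepsilon)\). For \(m\ge2\) the term is then \(O\bigl(\mu^{m-k/2}\,\omega^{m-1}\bigr)\). The key observation is that \(\mu\,\omega\) is exactly \(\varepsilon^{p}\) with \(p=p_{\mathrm{fr}}\) or \(p_{\mathrm{ir}}\): for instance \(\varepsilon^{-d-\delta}\varepsilon^{\gamma}=\varepsilon^{p_{\mathrm{fr}}}\), and similarly \(\varepsilon^{-d-\delta}\varepsilon^{(s+\alpha)d/s}=\varepsilon^{d\alpha/s-\delta}\). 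The one caveat is the \(s=d\) case, which carries an extra logarithm, still killed since \(p>0\). The \(m\)-block term is therefore \(O\bigl(\mu^{1-k/2}(\mu\omega)^{m-1}\bigr)\).
\begin{itemize}
\item For \(k\ge3\), since \(\mu\omega\le 1\) for small \(\varepsilon\), everything is \(O(\mu^{1-k/2})\to 0\) because \(\delta>-d\) makes \(\mu\to\infty\).
\item For \(k=2\), the one-block term gives \(\int h^2\rho_1^\varepsilon/\mu\to\int h^2 f\) by the \(L^1\) bound of Proposition \ref{prop:L1cum}. The two-block term is \(O(\mu\omega)=O(\varepsilon^p)\to0\) when \(p>0\), which yields \eqref{eq:limitcov}.
\end{itemize}
Polarization gives the bilinear form in \(h,g\). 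The constants \(C^m m^{m-2}\) are harmless because \(k\) is fixed. Since Gaussian laws are determined by their moments, convergence of cumulants implies convergence in law.

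In the critical case \(p=0\) the same bookkeeping shows \(\mu\omega=O(1)\), so for \(k\ge3\) the terms are still \(O(\mu^{1-k/2})\) and vanish. Gaussianity persists provided the \(L^1\) bounds of Proposition \ref{prop:L1cum} extend, with the small-\(\bar\rho\) convergence condition replacing the small-\(\varepsilon\) condition; this mirrors the second parts of Theorem \ref{thm:main-theorem}, and I would re-run the tree-graph bound there with \(\bar\rho/\bar\rho_0\) as the expansion parameter. For \(k=2\), the two-block term is now of order one, and it converges to \(\int hg\,\bar\rho^c_2\) once the limit \(\rho_2^{\varepsilon,c}/\mu\to\bar\rho^c_2\) exists in the weak sense. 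I would obtain that limit from the convergent cluster expansion: its terms converge individually under the rescaling \(x\mapsto x/\varepsilon\), and dominated convergence applies using the tree bounds. The one-block term similarly converges to \(\int h^2\bar\rho^c_1\).

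The main obstacle is this critical case. There one must justify existence of the limits \(\bar\rho_1^c,\bar\rho_2^c\) and the uniform summability of the expansion in the small-density regime. The noncritical case is essentially bookkeeping on top of Proposition \ref{prop:L1cum}.
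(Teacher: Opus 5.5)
Your proposal is correct and follows the paper's route: the covariance is written through \(\rho_1^{\varepsilon}\) and \(\rho_2^{\varepsilon,c}\) and controlled by Proposition~\ref{prop:L1cum} via \(\mu_{\varepsilon,\delta}\,\omega=\varepsilon^{p}\) (up to a logarithm at \(s=d\)), and the critical case goes through termwise limits of the cluster expansion plus dominated convergence; your set-partition formula for \(\kappa_k\), giving the \(O(\mu_{\varepsilon,\delta}^{1-k/2})\) bound for \(k\ge 3\), actually supplies the vanishing of the higher cumulants that the paper only asserts. One correction in the critical case: the termwise limits should not be taken under \(x\mapsto x/\varepsilon\) in general---for Riesz tails with \(s<d\) the paper stays at the macroscopic scale, using \(\mu_{\varepsilon,\delta}\,\eta^{\varepsilon,\alpha}(x,y)\to-\beta|x-y|^{-s}\) so that only trees survive and \(\bar\rho_2^c\) is a genuine macroscopic function, while for \(s>d\) the relevant scale is \(\varepsilon^{1+\alpha/s}\), the one encoded in \(\omega(\varepsilon)\).
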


    \section{Proofs of Theorem \ref{thm:main-theorem}, Proposition \ref{prop:L1cum} and Corollary \ref{corollary1}}
    \label{sec3}

    We start by proving that when \(p_{\mathrm{fr}}\) or \(p_{\mathrm{ir}}\) is positive, the average number of particles converges to \(\mu_{\varepsilon, \delta}\) as \(\varepsilon \to 0\).
    \begin{lemma}[Average number of particles]
    \label{lemma:average-number}
        Let \(\varphi\) be a stable interaction potential as described in \eqref{eq:support} or \eqref{eq:Riesz}. Then, if \(p_{\rm fr}>0\) (respectively, if \(p_{\rm ir}>0\)), it holds that
        \begin{equation}
        \label{sec3:avnum}
                    \lim_{\varepsilon\to0} \dfrac{\E_\varepsilon[\mathcal{N}]}{\mu_{\varepsilon,\delta}}=1.
        \end{equation}
    \end{lemma}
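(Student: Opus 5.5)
Since $\E_\varepsilon[\mathcal N]=\int_{\Lambda\times\R^d}\rho_1^\varepsilon(z)\,\rmd z$ by Definition~\ref{def:correlation-functions}, and $f$ is a probability density, I would reduce the claim to showing
\[
\Big|\frac{\E_\varepsilon[\mathcal N]}{\mu_{\varepsilon,\delta}}-1\Big| \le \Big\|\frac{\rho_1^\varepsilon}{\mu_{\varepsilon,\delta}}-f\Big\|_{L^1}\longrightarrow 0 .
\]
To control this, I would write the one-point function through the standard cluster expansion of the grand canonical measure \eqref{eq:MCS}:
\[
\frac{\rho_1^\varepsilon(z_1)}{\mu_{\varepsilon,\delta}}=f(z_1)\Big(1+\sum_{n\ge1}\frac{\mu_{\varepsilon,\delta}^n}{n!}\int f^{\otimes n}(\vc z_{2,n+1})\, \Phi^T(\vc x_{n+1})\,\rmd \vc z_{2,n+1}\Big),
\]
where $\Phi^T$ denotes the sum over connected graphs on $[n+1]$ of products of the Mayer functions \eqref{eq:mayer}. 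Equivalently, $\rho_1^\varepsilon/\mu_{\varepsilon,\delta}$ equals $f$ times the derivative of the pressure with respect to a local activity.

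Next I would apply the Penrose tree-graph inequality, using the stability constant $B$ from \eqref{eq:stability}, to bound $|\Phi^T|\le \rme^{2\beta\varepsilon^\alpha B n}\sum_{T\in\mathcal T_{n+1}}\prod_{E(T)}|\eta^{\varepsilon,\alpha}|$. Integrating the tree leaves-first from the root $x_1$, each leaf integration contributes a factor $\bar\rho\,\mu_{\varepsilon,\delta}\sup_y\int|\eta^{\varepsilon,\alpha}(x,y)|\,\rmd x$. By Cayley's formula, the $n$-th term is then bounded by $(n+1)^{n-1}/n!\,(C\bar\rho\,\mu_{\varepsilon,\delta}\|\eta\|_{1})^n$.

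It remains to evaluate the single-edge integral in each setting:
\begin{itemize}
\item For finite range, \eqref{gamma_ass} gives $\mu_{\varepsilon,\delta}\|\eta\|_1\le C\varepsilon^{\gamma-d-\delta}=C\varepsilon^{p_{\rm fr}}$.
\item For Riesz potentials on bounded $\Lambda$, I would split the integral into $|x|<\varepsilon^{1+\alpha/s}$, where $|\eta|\le1$, and its complement, where $|\eta|\le\beta\varepsilon^{\alpha+s}|x|^{-s}$. This yields $\omega(\varepsilon)$ as in \eqref{eq:omega_epsilon}, so that $\mu_{\varepsilon,\delta}\,\omega(\varepsilon)=\varepsilon^{p_{\rm ir}}$, with a logarithmic factor when $s=d$.
\end{itemize}

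Since $(n+1)^{n-1}/n!\le \rme^{n+1}$, the series converges geometrically once $\varepsilon$ is small enough, and it is $O(\varepsilon^{p})$ with $p=p_{\rm fr}$ or $p=p_{\rm ir}$. Integrating against $f$ then gives the $L^1$ bound, which tends to $0$, and this proves \eqref{sec3:avnum}.

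The main step requiring care is the Riesz tail estimate: its boundedness relies on $\Lambda$ being bounded when $s\le d$. The case $s=d$ produces an extra $\log(1/\varepsilon)$ factor, which is still absorbed by $\varepsilon^{p_{\rm ir}}$ because $p_{\rm ir}>0$.
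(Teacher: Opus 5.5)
Your proposal is correct, but it is not the paper's proof of this lemma. What you wrote is essentially the $j=1$ case of the proof of Proposition~\ref{prop:L1cum}, combined with the observation $\bigl|\E_\varepsilon[\mathcal N]/\mu_{\varepsilon,\delta}-1\bigr|\le\bigl\|\rho_1^\varepsilon/\mu_{\varepsilon,\delta}-f\bigr\|_{L^1}$.

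The paper instead works directly with the ratio of the configuration integrals $A(n,\varepsilon)=\int\psi_n^{\varepsilon,\alpha}\,\rmd\rho^{\otimes n}$ and uses no cluster expansion at all. Since $\varphi\ge0$, one has $A(n+1,\varepsilon)\le A(n,\varepsilon)$, hence $\E_\varepsilon[\mathcal N]\le\mu_{\varepsilon,\delta}$. The elementary inequality $\prod_i(1-|\eta_i|)\ge1-\sum_i|\eta_i|$ gives $A(n+1,\varepsilon)\ge A(n,\varepsilon)\bigl(1-n\bar\rho\int|\eta^{\varepsilon,\alpha}(0,x)|\,\rmd x\bigr)$. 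Hence $\E_\varepsilon[\mathcal N]/\mu_{\varepsilon,\delta}\ge1-\bar\rho\,\mu_{\varepsilon,\delta}\int|\eta^{\varepsilon,\alpha}(0,x)|\,\rmd x$.

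What each route buys:
\begin{itemize}
\item The paper's argument gives an explicit two-sided bound valid for every $\varepsilon>0$, with no tree-graph inequality and no convergence threshold $\varepsilon_0$. Its price is that it uses repulsiveness essentially, through the monotonicity of $A(n,\varepsilon)$ and $|\eta^{\varepsilon,\alpha}|\le1$. So it only covers nonnegative potentials: the paper writes it for the Riesz case and leaves aside stable finite-range potentials with attractive parts.
\item Your route needs the full machinery: the Penrose inequality with the stability constant, Cayley's formula, and smallness of $\varepsilon$. In exchange it treats both classes uniformly, attractive stable potentials included. It also proves more, namely $\rho_1^\varepsilon/\mu_{\varepsilon,\delta}\to f$ in $L^1$ with rate $\varepsilon^{p}$, times $\ln(1/\varepsilon)$ when $s=d$, a factor you track correctly.
\end{itemize}
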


    \begin{proof}
    We give a proof for interaction potentials of Riesz--type. 
        By definition, we have the expression
        \[
        \dfrac{\E_\varepsilon[\mathcal{N}]}{\mu_{\varepsilon,\delta}}=
        \dfrac{\sum_{n \geq 0} \frac{1}{n!}\mu_{\varepsilon,\delta}^n A(n+1,\varepsilon)}{\sum_{n \geq 0} \frac{1}{n!}\mu_{\varepsilon,\delta}^n\,A(n,\varepsilon)},
        \]
        where
        \[
        A(n,\varepsilon):=\int_{\Omega_{n}}\prod_{1\leq i<j\leq n} \exp\left\{- \beta \varepsilon^\alpha\varphi\left(\frac{x_i-x_j} {\varepsilon}\right)\right\} \rmd\rho^{\otimes n}( {\vc x}_{n}).
        \]
        Since \(\varphi\geq0\), we have immediately that \(A(n+1,\varepsilon)\le A(n,\varepsilon)\), this implies that \(\E_\varepsilon[\mathcal{N}]\leq \mu_{\varepsilon,\delta}\).
        For a lower bound, we write the Boltzmann factors in terms of Mayer functions:
        \[
         \exp\left\{- \beta\varepsilon^\alpha\varphi\left(\frac{x_i-x_j} {\varepsilon}\right)\right\}=1-\abs{\eta^{\varepsilon,\alpha}(x_i,x_j)},
        \]
        and then use the elementary inequality \(\prod_{1\leq i\leq n}(1-\abs{\eta^{\varepsilon,\alpha}(x_i,x_{n+1})})\geq 1-\sum_{i=1}^n \abs{\eta^{\varepsilon,\alpha}(x_i,x_{n+1})}\). This implies 
        \[
        A(n+1,\varepsilon)\geq A(n,\varepsilon) - \bar \rho\,  nA\,(n,\varepsilon) \int\abs{\eta^{\varepsilon,\alpha}(0,x)} \rmd x.
        \]
        Hence, the ratio of the expected number of particles and \(\mu_{\varepsilon,\delta}\) is bounded from below as
        \[
          \dfrac{\E_\varepsilon[\mathcal{N}]}{\mu_{\varepsilon,\delta}} \geq 1- \bar \rho \mu_{\varepsilon,\delta} \int\abs{\eta^{\varepsilon,\alpha}(0,x)} \rmd x\geq 1- \bar \rho \, C_{\beta,s}\mu_{\varepsilon,\delta} \omega(\varepsilon)=1-\bar \rho \, C_{\beta,s} \varepsilon^{p_{\rm ir}}.
        \]
        This implies \eqref{sec3:avnum}. 
    \end{proof}

    \begin{proof}[Proof of Theorem \ref{thm:main-theorem}]
      We divide the proof in the following steps. In Step 1 we recall how to obtain explicit representation formulas for the partition function and the truncated correlation functions. In Step 2 we study the decay of truncated correlation functions in the case of short-range potentials, while in Step 3 we consider the case of long-range ones. 

    \emph{Step 1: Representation formulas.} 
   Consider the partition function
    \begin{equation}
    \label{eq:partfun}
        \mathcal{Z}_\varepsilon= \sum_{n \geq 0} \dfrac{\mu_{\varepsilon,\delta}^n}{n!}\int_{\Omega_n} f^{\otimes n}(\vc{z}_n) \psi^{\varepsilon, \alpha}_n(\vc x_n) \,{\rm d}\vc z_n,
    \end{equation}
    and the truncated correlation functions \(\rho^{\varepsilon,c}_j\) introduced in \eqref{eq:cumulants}.
    By a standard combinatorial argument (see e.g. \cite{SimonellaWinter2026}), we can show that
    \begin{equation}
    \label{formula:partfun}
        \mathcal{Z}_\varepsilon=\exp \left\{\mu_{\varepsilon, \delta} \sum_{n=1}^\infty \dfrac{\mu_{\varepsilon, \delta}^{n-1}}{n!} \int_{\Lambda^n} 
        u^{\varepsilon, \alpha}_n(\vc{x}_n){\rmd} \rho^{\otimes n}(\vc x_n)
        \right\},
    \end{equation}
    and
    \begin{equation}
    \label{formula:truncorr}
    \frac{\rho^{\varepsilon,c}_j({\vc z}_j)}{\mu_{\varepsilon,\delta}^j}
        =
    f^{\otimes j}(\vc z_j)
    \sum_{n=0}^{\infty}
    \frac{\mu_{\varepsilon,\delta}^n}{n!}
    \int_{\Lambda^n}
    u^{\varepsilon,\alpha}_{j+n}(\vc x_{j+n})
    \, \rmd \rho(x_{j+1}) \cdots \rmd \rho(x_{j+n}).
    \end{equation}
   Here, recalling the definition of Mayer function $\eta^{\varepsilon,\alpha}(x,y)\coloneqq e^{-\beta \varepsilon^\alpha \varphi(\frac{x-y}{\varepsilon})}-1$,
    we have introduced the Ursell functions 
     \[
        u^{\varepsilon, \alpha}_{n}(\vc{x}_{n})\coloneqq \sum_{G \in \mathcal{C}_{n}} \prod_{\{\gamma, \gamma'\}\in E(G)}\eta^{\varepsilon,\alpha}(x_\gamma, x_{\gamma'}), \quad \vc{x}_n \in \Lambda^n,
        \]
    where $\mathcal{C}_n$ denotes the set of connected graphs $G$ with $n$ vertices \([n] \coloneqq \{1,\dots,n\}\). The sets of vertices and edges in \(G\) are denoted by $V(G)$ and $E(G)$, respectively.
    
    Formula \eqref{formula:partfun} can be deduced by expanding the Boltzmann factors $\psi^{\varepsilon, \alpha}_n$ in \eqref{eq:partfun}, obtaining that
        \[
        \psi^{\varepsilon, \alpha}_n(\vc{x}_n)= \prod_{1 \le \gamma<\gamma'\le n} \left(1+ \eta^{\varepsilon, \alpha}(x_\gamma, x_{\gamma'})\right)=
        \sum_{G \in \mathcal{G}_n} \prod_{\{\gamma, \gamma'\}\in E(G)} \eta^{\varepsilon,\alpha}(x_\gamma, x_{\gamma'}),
        \]  
        where $\mathcal{G}_n$ denotes the set of graphs $G$ with $n$ vertices \(\{1,\dots, n\}\).
        Reorganizing the sum by isolating the connected graph-components, we obtain that
        \begin{equation}
        \label{eq:connectedgraph}
        \psi^{\varepsilon, \alpha}_n(\vc{x}_n)= \sum_{k=1}^n \sum_{\{\pi_1,\dots,\pi_k\} \vdash {[n]}} \prod_{i=1}^k u^{\varepsilon,\alpha}_{|\pi_i|}(\vc{x}_{\pi_i}),
        \end{equation}
        where we recall that, as in \eqref{eq:cumulants}, the notation $\{\pi_1,\dots,\pi_k\} \vdash {[n]}$ stands for the possible nontrivial partitions of the set $[n]$.
        The expression of the partition function follows by substituting \eqref{eq:connectedgraph} inside \eqref{eq:partfun} and reorganizing the sum.
        A similar procedure gives the representation formula in \eqref{formula:truncorr}.

        The convergence of the series appearing in \eqref{formula:partfun} and \eqref{formula:truncorr} can be established for small values of \(\varepsilon\). Since the argument is analogous to those presented in the next two steps, we omit the details here. 
        A key ingredient is
        the following lemma, which estimates 
        cluster expansions over connected graphs of order \(k\) in terms of cluster expansions over trees of the same order.

        \begin{lemma}[Tree-graph inequality (\cite{Penrose1967}, \cite{PoghosyanUeltschi2009}, \cite{Ueltschi2017})]
        \label{lem:tree_graph}
        
        Assume that the interaction potential satisfies the stability condition. Then, for all $k \geq 1$ and all configurations $\vc x_k \in \Lambda^k$, one has
        \begin{equation}
            \label{eq:tree_new}
        \left|
            \sum_{G \in \mathcal{C}_k}
\prod_{\{\gamma,\gamma'\} \in E(G)}
\eta^{\varepsilon,\alpha}(x_\gamma, x_{\gamma'})
\right|
\leq
\rme^{2k\beta \varepsilon^\alpha B}
\sum_{T \in \mathcal{T}_k}
\prod_{\{\gamma,\gamma'\} \in E(T)}
\big|\eta^{\varepsilon,\alpha}(x_\gamma, x_{\gamma'})\big|,
\end{equation}
where \(\mathcal{T}_k\) denotes the set of labeled trees with \(k\) vertices.
\end{lemma}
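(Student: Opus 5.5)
The plan is to follow the interpolation approach of \cite{PoghosyanUeltschi2009, Ueltschi2017} rather than a Penrose partition scheme. A pure partition scheme only handles nonnegative potentials well, whereas here \(\varphi\) may have an attractive part controlled only through \eqref{eq:stability}. Write \(V_{\gamma\gamma'} \coloneqq \varepsilon^\alpha \varphi((x_\gamma-x_{\gamma'})/\varepsilon)\), and note that the rescaled potential is again stable with constant \(\varepsilon^\alpha B\), since \(\sum_{i<j}V_{ij}\ge -\varepsilon^\alpha B k\). Fix \(\vc x_k\) with all \(V_{\gamma\gamma'}<+\infty\); if some \(V_{\gamma\gamma'}=+\infty\), approximate by truncated potentials \(\min(\varphi, M)\), which remain stable with the same \(B\), and pass to the limit \(M\to\infty\) on both sides at the end.

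\emph{Step 1 (interpolation formula).} Observe that \(u^{\varepsilon,\alpha}_k(\vc x_k)=\sum_{G\in\mathcal C_k}\prod_{E(G)}\eta^{\varepsilon,\alpha}\) is the connected part of \(\prod_{\gamma<\gamma'}\rme^{-\beta V_{\gamma\gamma'}}\). Then apply the Brydges--Kennedy/Battle--Federbush forest formula, which gives
\begin{equation*}
u^{\varepsilon,\alpha}_k(\vc x_k)=\sum_{T\in\mathcal T_k}\prod_{\{\gamma,\gamma'\}\in E(T)}\bigl(-\beta V_{\gamma\gamma'}\bigr)\int \rmd\mu_T(\vc s)\,\exp\Bigl\{-\beta\sum_{\gamma<\gamma'} s_{\gamma\gamma'}V_{\gamma\gamma'}\Bigr\}.
\end{equation*}
Here \(\mu_T\) is a probability measure (of total mass at most one) on interpolation parameters. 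For each \(\vc s\) in its support, the matrix \((s_{\gamma\gamma'})\) is a convex combination of matrices \(\bigl(\cf{\gamma,\gamma' \text{ in the same block of } \pi}\bigr)\) over set partitions \(\pi\) of \([k]\). Applying \eqref{eq:stability} block by block then gives
\[
\sum_{\gamma<\gamma'}s_{\gamma\gamma'}V_{\gamma\gamma'}\ge -\varepsilon^\alpha Bk,
\]
so the exponential is bounded by \(\rme^{\beta\varepsilon^\alpha Bk}\). This yields the bound \(|u_k|\le \rme^{\beta\varepsilon^\alpha Bk}\sum_T\prod_{E(T)}\beta|V_{\gamma\gamma'}|\).

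\emph{Step 2 (from \(\beta|V|\) to \(|\eta|\)).} This step is the main obstacle, because the bound from Step 1 is in terms of \(\beta|V|\) rather than \(|\eta|\), and \(\beta|V|\) can be much larger than \(|\eta|=|\rme^{-\beta V}-1|\); for a hard core, \(|V|\) is even infinite. Following \cite{Ueltschi2017}, I would apply the interpolation not to \(\beta V\) itself but to a modified pair weight. Define \(\tilde V_{\gamma\gamma'}\) by \(\rme^{-\beta\tilde V}=1+\eta\), with the interpolation performed along \(t\mapsto 1+t\eta\), so that the tree-edge factors produced are \(\eta_{\gamma\gamma'}/(1+t\eta_{\gamma\gamma'})\). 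These factors must be recombined with the remaining weights \(\prod(1+t\eta)\), and the combined weights need to be controlled by stability. Keeping the \(\eta\)-form along tree edges costs a second exponential factor: on each tree edge one uses \(|\eta|\le \beta|V|\,\rme^{\beta V^-}\) together with \(\rme^{\beta V^-_{\gamma\gamma'}}\le \rme^{2\beta\varepsilon^\alpha B}\). The latter holds because stability applied to two points gives \(V\ge -2\varepsilon^\alpha B\); more precisely, one redistributes the pair contributions of the tree edges into the stability budget. This is the origin of the constant \(\rme^{2k\beta\varepsilon^\alpha B}\) rather than \(\rme^{k\beta\varepsilon^\alpha B}\). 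The delicate point is to check that, after this redistribution, the remaining interpolated sum is still bounded below by \(-\varepsilon^\alpha B k\) uniformly in \(\vc s\), so that the two factors combine to at most \(\rme^{2k\beta\varepsilon^\alpha B}\).

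\emph{Step 3 (conclusion).} With this, each tree edge carries \(|\eta^{\varepsilon,\alpha}|\) and the total exponential loss is at most \(\rme^{2k\beta\varepsilon^\alpha B}\). Summing over \(T\in\mathcal T_k\) gives \eqref{eq:tree_new}. For the purely repulsive Riesz case, \(B=0\) and the bound reduces to Penrose's original inequality; there one can alternatively invoke the Penrose partition scheme directly, since \(\prod_{R(T)\setminus T}(1+\eta)\in[0,1]\).
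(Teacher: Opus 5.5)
The paper does not prove this lemma; it quotes it from \cite{Penrose1967, PoghosyanUeltschi2009, Ueltschi2017}. Your argument therefore has to stand on its own, and it has a genuine gap.

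Your Step 1 is sound for finite $V$. What it proves, however, is a Brydges--Federbush type bound, $|u_k^{\varepsilon,\alpha}|\le \rme^{\beta\varepsilon^\alpha Bk}\sum_{T}\prod_{E(T)}\beta|V_{\gamma\gamma'}|$. That is a different inequality, and it is useless for hard cores and for the singular Riesz kernels: $\beta|V|$ is not locally integrable for $s\ge d$, while $|\eta^{\varepsilon,\alpha}|\le 1$. The whole content of the lemma is in Step 2, and Step 2 is not an argument.

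\emph{The inequality goes the wrong way.} To trade $\beta|V|$ for $|\eta|$ you would need $\beta|V|\le C|\eta|$, which is false. The inequality you quote, $|\eta|\le\beta|V|\rme^{\beta V^-}$, goes in the opposite direction.

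\emph{The interpolation along $1+t\eta$ does not close.} Note first that $\tilde V$ defined by $\rme^{-\beta\tilde V}=1+\eta$ is just $V$. If you apply the forest formula to $F(\vc s)=\prod_{e}(1+s_e\eta_e)$, the tree edges do carry $\eta_e$. But the leftover weight is $\prod_{e\notin E(T)}(1+s_e\eta_e)$ at the interpolated parameters, and your stability argument does not reach it. The block decomposition of Step 1 needs the exponent $-\beta\sum s_eV_e$, which is linear in $\vc s$. Concavity of $s\mapsto\log(1+s\eta)$ instead gives $1+s\eta\ge\rme^{-s\beta V}$, which is the wrong direction. Moreover, completing the blocks by reinserting the missing tree edges costs a factor $1/(1+s_e\eta_e)$. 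This factor is not uniformly bounded for repulsive edges and blows up for hard-core ones.

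\emph{The constants do not match.} Even granting your heuristics, $\rme^{\beta\varepsilon^\alpha Bk}$ from the interpolation times $\rme^{2\beta\varepsilon^\alpha B}$ per tree edge gives $\rme^{\beta\varepsilon^\alpha B(3k-2)}$. This exceeds $\rme^{2k\beta\varepsilon^\alpha B}$ for $k\ge 3$. The ``delicate point'' you postpone is the lemma itself.

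What is missing is a representation in which the tree edges carry $\eta$ itself and the leftover weight is a genuine Boltzmann factor to which stability applies. The standard one is Penrose's tree-graph identity. Let $R$ be a partition scheme: $E(T)\subseteq E(R(T))$, and $\mathcal C_k$ is the disjoint union over $T\in\mathcal T_k$ of the intervals $\{G: E(T)\subseteq E(G)\subseteq E(R(T))\}$. Summing over each interval gives
\[
u_k^{\varepsilon,\alpha}(\vc x_k)=\sum_{T\in\mathcal T_k}\prod_{e\in E(T)}\eta_e\prod_{e\in E(R(T))\setminus E(T)}(1+\eta_e),
\]
with leftover weight $\exp\{-\beta\sum_{e\in E(R(T))\setminus E(T)}V_e\}$. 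No interpolation and no truncation are needed, since $\eta_e=-1$ causes no harm here.

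The nontrivial input is a choice of $R$ for which stability controls this factor uniformly in $T$. This is Penrose's contribution, and also that of the configuration-dependent scheme of Procacci and Yuhjtman, which even gives $\rme^{\beta\varepsilon^\alpha Bk}$. So your claim that partition schemes only handle nonnegative potentials well has it backwards: this is precisely the route for stable potentials with an attractive part.

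Your closing remark for $B=0$ is correct and suffices for the Riesz case of Theorem~\ref{thm:main-theorem}. It does not cover the lemma as stated, which the paper uses for attractive finite-range potentials.

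Finally, the assertion that $\min(\varphi,M)$ is stable with the same $B$ is unjustified. For a hard core with an attractive shell, doubling every particle of a near-optimal configuration violates the optimal constant $B$ whenever $M<2B$.
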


        \emph{Step 2: Finite-range case.} 
    Starting from \eqref{formula:truncorr} and using the tree-graph inequality \eqref{eq:tree_new}, we obtain
    \begin{equation}
    \label{eq:bound_start}
    \left|
    \frac{\rho^{\varepsilon,c}_j(\vc z_j)}{\mu_{\varepsilon,\delta}^j}
\right|
\le
f^{\otimes j}(\vc  z_j)\, \rme^{2j\beta  \varepsilon^\alpha B}
\sum_{n\ge 0}
\frac{\mu_{\varepsilon,\delta}^n}{n!}
\rme^{2n\beta \varepsilon^\alpha B}
\sum_{T\in\mathcal T_{j+n}}
\inI{T}^{\varepsilon,\alpha}(\vc x_j)
\end{equation}
where,
for fixed $n$ and $T\in\mathcal T_{j,n}$, 
we introduced
\begin{equation}
\label{int_graph}
\inI{T}^{\varepsilon,\alpha}(\vc x_j)
\coloneqq
\int_{\Lambda^n}
\prod_{\{\gamma,\gamma'\}\in E(T)}
\big|\eta^{\varepsilon,\alpha}(x_\gamma,x_{\gamma'})\big|
\, \rmd\rho(x_{j+1}) \cdots \rmd\rho(x_{j+n}).
\end{equation}
Since the interaction potential \(\varphi(\abs{\cdot})\) is supported in \(B(0,1)\), the Mayer function $\eta^{\varepsilon,\alpha}(x,y)$ vanishes unless $|x-y|\le \varepsilon$. Hence the integration can be restricted to configurations such that all edges of the tree connect points at distance of order $\varepsilon$. 
In particular, $\inI{T}^{\varepsilon,\alpha}(\vc x_j)=0$ unless $n\ge n_0(\vc x_j)$, where $n_0$ denotes the minimal number of balls of radius $\varepsilon$ required to connect $\{x_1,\dots,x_j\}$.

We now estimate $\inI{T}^{\varepsilon,\alpha}$ by exploiting the tree structure. Using Fubini’s theorem, we integrate iteratively along the tree:
\begin{enumerate}
    \item[(i)] first over the leaves (vertices of degree one), thereby reducing the size of the tree;
\item[(ii)] then repeating the procedure until only $j-1$ chains connecting the fixed points remain;
\item[(iii)] finally integrating along these chains.
\end{enumerate}
By \eqref{gamma_ass},
each elementary integration is controlled by
\begin{equation*}
\int |\eta^{\varepsilon,\alpha}(0,x)|\, \rmd\rho(x)
\le \, C_{\beta, \varphi}\,\bar \rho \, \varepsilon^{\gamma}.
\end{equation*}
Similarly, a chain with $m+1$ edges is bounded by
\(
(1+\rme^{2\beta \varepsilon^\alpha B}) \big(C_{\beta,\varphi}\,\bar \rho\, \varepsilon^{\gamma}\big)^m.
\)
As a consequence, we obtain
\begin{align}
\label{ineq:improvable}
\inI{T}^{\varepsilon,\alpha}(\vc x_j)
\le
(1+\rme^{2\beta  \varepsilon^\alpha B})^{j-1}
\big(C_{\beta, \varphi}\, \bar\rho\, \varepsilon^{\gamma}\big)^n.
\end{align}

Plugging the estimate \eqref{ineq:improvable} into \eqref{eq:bound_start}, we deduce that the rescaled truncation correlation functions are bounded as
\begin{align*}
\left|
\frac{\rho^{\varepsilon,c}_j(\vc z_j)}{\mu_{\varepsilon,\delta}^j}
\right|
&\le
f^{\otimes j}(\vc z_j)\,
\rme^{2j\beta \varepsilon^\alpha B}
(1+\rme^{2\beta \varepsilon^\alpha B})^{j-1}
\sum_{n\ge n_0}
\frac{1}{n!}
\big(C_{\beta,\varphi}\,\bar \rho\, \mu_{\varepsilon,\delta}\, \varepsilon^{\gamma}\big)^n
\rme^{2n\beta \varepsilon^\alpha B}
|\mathcal T_{j+n}|.
\end{align*}
Using Cayley’s formula $|\mathcal T_k|=k^{k-2}$ for the number of labeled trees together with the combinatorial bound
\begin{equation*}
\frac{(j+n)^{j+n-2}}{n!}
\le
j^{j-2}C^j (2\rme)^n,
\end{equation*}
allows us to infer
\begin{align*}
\left|
\frac{\rho^{\varepsilon,c}_j(\vc z_j)}{\mu_{\varepsilon,\delta}^j}
\right|
\le
f^{\otimes j}(\vc z_j)\, C^j(1+\rme^{2\beta \varepsilon^\alpha B})^j
j^{j-2}
\sum_{n\ge n_0}
\big(2\,C_{\beta,\varphi}\,\bar \rho \, \rme^{2\beta \varepsilon^\alpha B+1}\, \mu_{\varepsilon,\delta}\, \varepsilon^{\gamma}\big)^n.
\end{align*}
Finally, observing that
\(
\mu_{\varepsilon,\delta}\, \varepsilon^{\gamma}
=
\varepsilon^{\gamma -d -\delta},
\)
we conclude that the series is absolutely convergent for \(p_{\mathrm fr}=\gamma -d-\delta>0\) and $\varepsilon<\varepsilon_0$ where
\[
\varepsilon_0\coloneqq \left(\dfrac{1}{{C'_{\beta,\varphi}}\,\bar \rho}\right)^{\frac{1}{\gamma -d-\delta}}\wedge 1,\]
with \(C_{\beta,\varphi}':= 2\, \rme^{2\beta B+1}\,C_{\beta,\varphi}\). If \(p_{\mathrm fr}=0\), the series is absolutely convergent for \(\bar \rho < \bar\rho_0\) where
\(
\bar\rho_0:=1/C'_{\beta,\varphi}.
\)

\emph{Step 3: Infinite-range case.}        
       Our aim is to bound the analogous expression of \eqref{eq:bound_start}. Since \(B=0\) for Riesz--type potentials, we get
    \begin{align*}
        \label{ineq:long-range-cluster-expansion}
        \frac{\abs{\rho_j^{\varepsilon, c}(\vc{z}_j)}}{\mu_{\varepsilon, \delta}^j} \leq f^{\otimes j}(\vc{z}_j) 
        \sum_{n=0}^{\infty} \frac{\mu_{\varepsilon, \delta}^{n}}{n!} 
        \sum_{T \in \mathcal{T}_{j,n}} \inI{T}^{\varepsilon,\alpha}(\vc{x}_j), \quad \vc{z}_j = (\vc{x}_j, \vc{v}_j) \in \Lambda^j \times \R^j,
    \end{align*}
    where \(\inI{T}^{\varepsilon,\alpha}(\vc{x}_j)\) is defined in \eqref{int_graph}.
    To this end, we state a proposition that estimates, through a precise graph reduction procedure, the integral quantity \(\inI{T}^{\varepsilon,\alpha}(\vc{x}_j)\). This result will be proved in the following Section \ref{sec4}.
    
    \begin{proposition}[Graph Reduction Procedure]
    \label{graph_red}
        Given \(\vc x_j \in \Lambda^j\) and \(T \in \mathcal{T}_{j,n}\), it holds that:
        \[
        \inI{T}^{\varepsilon,\alpha}(\vc{x}_j) \le \rme ^{2(j-1)/\rme}\,\left(C_{\beta, s} \bar \rho\, \omega(\varepsilon)\right)^{n}\max_{T^* \in \mathcal{T}_{j}} \left(\prod_{\{\gamma, \gamma'\} \in E(T^*)}\abs{\eta^{\varepsilon,\alpha}(x_{\gamma}, x_{\gamma'})}\right),
        \]
        where \(C_{\beta,s}\) is a given constant and \(\omega(\varepsilon)\) is defined in \eqref{eq:omega_epsilon}.
    \end{proposition}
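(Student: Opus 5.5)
The plan is to integrate out the $n$ background vertices $x_{j+1},\dots,x_{j+n}$ of $T\in\mathcal T_{j,n}$ one at a time, by induction on $n$, so that each integration costs at most a factor $C_{\beta,s}\bar\rho\,\omega(\varepsilon)$ and the surviving integrand is again a product of Mayer functions along a tree. At the end we are left with a tree on the $j$ fixed vertices only, which is bounded by the maximum over $\mathcal T_j$. Two ingredients are needed, both coming from the explicit form $|\eta^{\varepsilon,\alpha}(x,y)| = 1-\rme^{-\beta\varepsilon^{\alpha+s}|x-y|^{-s}}$.

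\textbf{Single-point bound.} Since $\Lambda$ is bounded and $\rho\le\bar\rho$, we have
\[
\sup_y \int_\Lambda |\eta^{\varepsilon,\alpha}(x,y)|\,\rmd\rho(x)\le C_{\beta,s}\bar\rho\,\omega(\varepsilon).
\]
To see this, split the integration at $|x-y|=\varepsilon^{1+\alpha/s}$. On the inner ball, bound $|\eta|\le 1$. Outside, use $|\eta|\le \beta\varepsilon^{\alpha+s}|x-y|^{-s}$ and integrate over a bounded domain. The three cases $s>d$, $s=d$, $s<d$ give exactly the three cases of \eqref{eq:omega_epsilon}.

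\textbf{Monotonicity and a triangle-type inequality.} The map $r\mapsto 1-\rme^{-c r^{-s}}$ is decreasing. Given a vertex $v$ with neighbours $a,b$, we have $|x_a-x_b|\le 2\max(|x_a-x_v|,|x_b-x_v|)$, and so
\[
|\eta(x_a,x_v)|\,|\eta(x_b,x_v)| \le \min\bigl(|\eta(x_a,x_v)|,|\eta(x_b,x_v)|\bigr) \le C\,|\eta(x_a,x_b)|,
\]
using $1-\rme^{-2^s u}\le 2^s(1-\rme^{-u})$.

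\textbf{Reduction procedure, following the order (i)--(iii) of Step 2.}
\begin{enumerate}
\item[(a)] Repeatedly remove background leaves. Each removal is bounded by the single-point bound, and the remaining integrand is unchanged.
\item[(b)] Now every leaf is a fixed vertex. Pick a background vertex $v$ of degree $k\ge2$ with neighbours $a_1,\dots,a_k$.
\item[(c)] Do not merely bound the star $\prod_i|\eta(x_{a_i},x_v)|$ by an integrable function times a tree on the $a_i$'s. Instead, keep one factor $|\eta(x_{a_1},x_v)|$ to be integrated in $x_v$. Replace each remaining factor $|\eta(x_{a_i},x_v)|$ by $C|\eta(x_{a_i},x_{a_1})|$ or by $C|\eta(x_{a_i},x_{a_{i'}})|$.
\item[(d)] The choice in (c) is made region by region. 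Partition space according to which neighbour $a_{i^*}$ is nearest to $x_v$. In that region, use the factor involving $a_{i^*}$ for the integration, and reconnect every other $a_i$ to $a_{i^*}$. This is valid because $|x_{a_i}-x_{a_{i^*}}|\le 2|x_{a_i}-x_v|$ there.
\item[(e)] Summing over the $k$ regions gives a factor $k$ and a star on the $a_i$ centred at $a_{i^*}$, which is again a tree. Iterating until no background vertex is left produces $T^*\in\mathcal T_j$.
\end{enumerate}

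\textbf{Main obstacle: the constants.} The factors $C$ from each reconnected edge and $k$ from each region sum can pile up. The target is $\rme^{2(j-1)/\rme}$, i.e.\ a cost depending only on the $j-1$ edges of $T^*$ and not on $n$ or the degrees. To get this, I would avoid the generic $2^s$ loss by using the sharper $\min$ comparison, which gives constant $1$ whenever the nearest neighbour is used. The nearest-neighbour choice yields $|x_{a_i}-x_{a_{i^*}}|\le 2|x_{a_i}-x_v|$, so a loss only occurs through a factor like $\sup_u u\,\rme^{-u}=1/\rme$ per final edge, via estimates of the type $x\,\rme^{-x}\le 1/\rme$. Each of the $j-1$ final edges is charged at most twice, which gives $\rme^{2(j-1)/\rme}$.

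The factor $k$ from the regions must be absorbed into $C_{\beta,s}$. For this I would integrate over $x_v$ in all regions simultaneously, bounding by $\int\max_i|\eta(x_{a_i},x_v)|\,\rmd\rho(x_v)\le \sum_i\int|\eta(x_{a_i},x_v)|$. If this is too lossy, I would instead bound the integral in one region by the whole single-point integral and dominate the reconnected product uniformly by $\max_{T^*}$. This bookkeeping, which makes the loss independent of the vertex degrees, is the delicate step of Section~\ref{sec4}.
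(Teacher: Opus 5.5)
Your local tools are correct and are the ones the paper uses: the single-point bound \eqref{eq:int_l}, monotonicity of $|\eta^{\varepsilon,\alpha}|$ in the distance, and nearest-neighbour rewiring at the price of a dilation $\eta^{\varepsilon,\alpha}\mapsto\eta^{\varepsilon,\alpha}_{1/2}$. The gap is exactly at the step you call the main obstacle: nothing in your scheme controls the accumulated losses, and rule (d) actually lets them blow up.

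\textbf{Why rule (d) fails.} You choose the nearest neighbour among \emph{all} neighbours of $v$. Whenever that neighbour is a free vertex $w$, every fixed neighbour of $v$ is re-hung on $w$. Fixed vertices therefore pile up on free ones. Two losses are then paid at degrees that grow during the iteration:
\begin{itemize}
\item the region factor $k$;
\item the rewiring cost, up to $2^s$ per rewired edge since $|\eta^{\varepsilon,\alpha}_{1/2}|\le 2^s|\eta^{\varepsilon,\alpha}|$ with ratio close to $2^s$ at large distances (equivalently, a compounding dilation $\eta_{1/2}\to\eta_{1/4}\to\cdots$).
\end{itemize}

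\textbf{A concrete example.} Take a free path $v_1-\cdots-v_n$ with $m\ge2$ fixed leaves on each $v_i$, so $j=nm$, and process $v_1,v_2,\dots$ in turn. Consider the positive-measure region where the nearest neighbour of $v_i$ is $v_{i+1}$ for all $i<n$. There $v_i$ is processed with degree $im+1$, and the leaves of $v_1$ are rewired about $n$ times. The total number of rewirings is about $mn^2/2=jn/2$. This term therefore carries a factor of order $2^{sjn/2}$, which is not of the form $\rme^{O(j)}C^n$.

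\textbf{Your suggested remedies do not help.}
\begin{itemize}
\item The bound $\int\max_i|\eta^{\varepsilon,\alpha}(x_{a_i},x_v)|\,\rmd\rho(x_v)\le\sum_i\int|\eta^{\varepsilon,\alpha}(x_{a_i},x_v)|\,\rmd\rho(x_v)$ just reintroduces $k$.
\item Dominating the star by $\max_{T^*}$ is possible only when all $a_i$ are fixed, since otherwise the star still contains variables to be integrated. Even then the factor $k$ remains.
\item The min comparison never gives constant $1$: it produces $\eta^{\varepsilon,\alpha}_{1/2}$.
\item The constant $\rme^{2(j-1)/\rme}$ does not arise from $\sup_u u\rme^{-u}$.
\end{itemize}

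\textbf{What the paper does instead.} It confines all lossy operations to a number of vertices controlled by $j$, and it forbids fixed vertices from migrating onto free ones.
\begin{itemize}
\item It splits the free forest into components $C_i$ with at least two fixed boundary vertices and components with a single one. There are $\ell_1\le j-1$ components of the first kind, with $\sum_i|\fix{\partial}C_i|\le\ell_1+j-1\le2(j-1)$.
\item It first integrates free leaves and degree-two chain vertices. Each costs $C_{\beta,s}\bar\rho\,\omega(\varepsilon)$ by \eqref{eq:int_l}--\eqref{eq:int_c}, with undilated output, so arbitrarily many of them are harmless.
\item The remaining reduced component has all free degrees $\ge3$ and at most $|\fix{\partial}C_i|-2$ vertices.
\item In Lemma~\ref{lemma_mod_conn}, each remaining free vertex compares only its \emph{fixed} neighbours and keeps the nearest. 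The surplus fixed neighbours become fixed--fixed edges at once and are never touched again.
\item The number of resulting domains is $\prod_\gamma p_\gamma\le\max_{t>0}(N/t)^t=\rme^{N/\rme}$, where $\sum_\gamma p_\gamma=N=|\fix{\partial}C_i|$. So here the factors $k$ are harmless because they do not accumulate. Multiplied over components, this AM--GM count is the origin of $\rme^{2(j-1)/\rme}$.
\item Finally, a BFS-ordered sweep keeps every free vertex with exactly two active neighbours: its fixed one and its unique later free one. Each step is then a binary choice, giving $4^{|\reduc|}$ trees in total, absorbed into $C^n$. The paper uses this structure to keep each final edge at dilation at most $1/4$.
\end{itemize}
A mechanism of this type, above all preventing fixed vertices from piling up on free ones, is what your plan is missing.
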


    By Proposition \ref{graph_red} and using the inequality \(\frac{(j+n)^{j+n-2}}{n!} \leq j^{j-2}C^{j}(2\rme)^n\), we obtain the following upper bound
    \begin{multline}
        \label{ineq:upper-bound-truc}
        \frac{\abs{\rho_{j}^{\varepsilon, c}(\vc{z}_j)}}{\mu_{\varepsilon, \delta}^j}
        \leq 
        f^{\otimes j}(\vc{z}_j) 
        j^{j-2}C^{j} 
        \sum_{n=0}^{\infty}
      (2 {\rme} \, C_{\beta, s} \, \bar \rho\, \omega(\varepsilon)\mu_{\varepsilon,\delta} 
      )^n 
      \max_{T \in \mathcal{T}_{j}} \left(\prod_{\{\gamma, \gamma'\} \in E(T)}\abs{\eta^{\varepsilon,\alpha}(x_{\gamma}, x_{\gamma'})}\right).
     \end{multline}
     Recalling the definition of \(\omega(\varepsilon)\) in \eqref{eq:omega_epsilon}, if \(p_{\mathrm ir}>0\) and \(s \neq d\),
     the series on the r.h.s. of \eqref{ineq:upper-bound-truc} is convergent, when \(\varepsilon>0\) satisfies
     \begin{align}
        \label{ineq:plr-radius-of-convergence}
         \varepsilon < \varepsilon_0 \coloneqq \left(\frac{1}{C'_{\beta, s} \, \bar \rho}\right)^{\frac{1}{p_{\mathrm{ir}}}},
     \end{align}
     where \(C'_{\beta,s}\coloneqq 2 {\rme} \, C_{\beta, s} \,  
     \) and \(p_{\mathrm{ir}}\) is defined in \eqref{eq:plongrange}.
     If \(s=d\), a logarithmic correction appears and the series on the r.h.s. of \eqref{ineq:upper-bound-truc} is
     convergent for \(\varepsilon<\varepsilon_0\), where \(\varepsilon_0\) is sufficiently small so that 
     \begin{equation}
     \label{ineq:plr-rof}
    C'_{\beta,s} \bar \rho\, \varepsilon_0^{p_{\mathrm ir}}\,\ln(1/\varepsilon_0)<1.
     \end{equation}
     When \eqref{ineq:plr-radius-of-convergence}
     and \eqref{ineq:plr-rof} are satisfied, we get the claimed upper bounds. The case \(p_{\mathrm ir}=0\) with \(s \neq d\) is similar to the analogous finite-range case and can be treated by using the smallness of \(\bar \rho\).
     \end{proof}

    We now proceed to prove Proposition \ref{prop:L1cum}, which establishes \(L^1\)--bounds of the rescaled cumulants.
  \begin{proof}[Proof of Proposition \ref{prop:L1cum}]
        We give the proof for finite-range stable potentials as in \eqref{eq:support}, since the proof for the Riesz--type case is obtained by imposing \(B=0\) and substituting \(p_{\rm fr}\) with \(p_{\rm ir}\).
        
        We start from the representation formula in \eqref{formula:truncorr}. If \(j=1\), by the tree-graph inequality in \eqref{lem:tree_graph}, we get
           \begin{align*}
            \left \|\frac{\rho^\varepsilon_1}{\mu_{\varepsilon,\delta}}-f\right\|_{L^1}
            &\le
            \, \rme^{2\beta  \varepsilon^\alpha B}
\sum_{n\ge 1}
\frac{\mu_{\varepsilon,\delta}^n}{n!}
e^{2n\beta \varepsilon^\alpha B}
\sum_{T\in\mathcal T_{n+1}}
\int_{\Lambda^{n+1}}
\prod_{\{\gamma,\gamma'\}\in E(T)}
\big|\eta^{\varepsilon,\alpha}(x_\gamma,x_{\gamma'})\big|
\, {\rm d}\rho^{\otimes n+1}(\vc x_{n+1}).
            \end{align*}
   Since there are no fixed vertices, we bound the tree integral
   using Fubini's theorem iteratively, starting from the leaves of the tree. There are \(n\) edges in a tree of order \(n+1\) and integration of each edge produces as an upper bound the factor \(\bar{\rho}C_{\beta, \varphi}\varepsilon^{\gamma}\). With the help of \eqref{gamma_ass}, the integral can be bounded as
    \begin{align*}
        \int_{\Lambda^{n+1}}\prod_{\{\gamma,\gamma'\}\in E(T)}
\big|\eta^{\varepsilon,\alpha}(x_\gamma,x_{\gamma'})\big| \rmd \rho^{\otimes (n+1)}(\vc{x}_{n+1}) \leq \prod_{j=1}^{n} \int_{\Lambda} \abs{\eta^{\varepsilon,\alpha}(0,x)} \rmd \rho(x) \leq (\bar{\rho}\, C_{\beta,\varphi}\,\varepsilon^\gamma)^{n}.
    \end{align*}
    
    Using \(\frac{(n+1)^{n-1}}{n!} \leq \rme^n\), we obtain the following upper bound
    \begin{equation*}
         \left \|\frac{\rho^\varepsilon_1}{\mu_{\varepsilon,\delta}}-f\right\|_{L^1}
        \leq 
     \rme^{2\beta B}
\sum_{n=1}^{\infty}
      ( \rme^{2\beta B+1} \, C_{\beta, \varphi} \, \bar \rho\, \varepsilon^\gamma\mu_{\varepsilon,\delta} 
      )^n.
     \end{equation*}
     Therefore, since \(\varepsilon^{\gamma}\mu_{\varepsilon,\delta} = \varepsilon^{p_{\mathrm{fr}}}\), the series on the r.h.s. is convergent when \(\varepsilon>0\) satisfies
     \begin{align*}
         \varepsilon < \varepsilon_0 \coloneqq \left(\frac{1}{\rme^{2\beta B +1} \, C_{\beta,\varphi}  \, \bar \rho}\right)^{\frac{1}{p_{\mathrm{fr}}}},
     \end{align*}
     where the exponent \(p_{\mathrm{fr}}\) is defined in \eqref{eq:p-sr}.

    If \(j \geq 2\), we have by means of \eqref{formula:truncorr}
        \begin{align*}
            \left \|\frac{\rho^\varepsilon_j}{\mu_{\varepsilon,\delta}^j}\right\|_{L^1}
            &\le
            \, \rme^{2\, j\,\beta  \varepsilon^\alpha B}
\sum_{n\ge 0}
\frac{\mu_{\varepsilon,\delta}^n}{n!}
e^{2n\beta \varepsilon^\alpha B}
\sum_{T\in\mathcal T_{n+j}}
\int_{\Lambda^{n+j}}
\prod_{\{\gamma,\gamma'\}\in E(T)}
\big|\eta^{\varepsilon,\alpha}(x_\gamma,x_{\gamma'})\big|
\, {\rm d}\rho^{\otimes n+j}(\vc x_{n+1})\\
& \le \rme^{2\, j\,\beta  \varepsilon^\alpha B}
\sum_{n\ge 0}
\frac{\mu_{\varepsilon,\delta}^n}{n!}
e^{2n\beta \varepsilon^\alpha B}
\abs{\mathcal T_{n+j}}(C_{\beta,\varphi}\bar \rho \, \varepsilon^\gamma)^{n+j-1}.
            \end{align*}
    Using Cayley’s formula $|\mathcal T_k|=k^{k-2}$ together with the combinatorial bound
\begin{equation*}
\frac{(j+n)^{j+n-2}}{n!}
\le
j^{j-2}C^j (2\rme)^n,
\end{equation*}
allows us to infer
 \begin{align*}
            \left \|\frac{\rho^\varepsilon_j}{\mu_{\varepsilon,\delta}^j}\right\|_{L^1}
            \le \rme^{2\, j\,\beta  \varepsilon^\alpha B}\,
            j^{j-2}\, C^j\, (C_{\beta,\varphi}\, \bar \rho \, \varepsilon^\gamma)^{j-1}
\sum_{n\ge 0}
(2\mu_{\varepsilon,\delta}\, \varepsilon^\gamma\, \rme^{2\beta B+1}C_{\beta,\varphi} \bar \rho)^n,
            \end{align*}
from which the thesis follows.
    \end{proof}

    We now pass to the proof of the Corollary \ref{corollary1} concerning the fluctuations of the limit field, which we stated at the end of Section \ref{sec2}.
     \begin{proof}[Proof of Corollary \ref{corollary1}]
    We present the proof only for the long-range interaction case, the short-range case can be treated by similar arguments. 
    The covariance of the fluctuation field \(\zeta_{\varepsilon}\), defined in \eqref{eq:fluctfield}, is given by (see, e.g., \cite[p.881]{Bodineau2020}):
        \begin{align*}
            \E_\varepsilon[\zeta_{\varepsilon}(h)\zeta_{\varepsilon}(g)] &=\mu_{\varepsilon, \delta} \mathrm{Cov}[\pi_\varepsilon(h),\pi_\varepsilon(g)]=\frac{1}{\mu_{\varepsilon,\delta}}\mathrm{Cov}\left[\sum_{i=1}^\mathcal{N}h(x_i),\sum_{j=1}^\mathcal{N}g(x_j)\right] \\
            &=\frac{1}{\mu_{\varepsilon,\delta}}\left(\int_{\Omega_1}h(z_1)g(z_1)\rho_1^{\varepsilon}(z_1)\rmd z_1
            +\int_{\Omega_2} h(z_1)g(z_2)\rho_{2}^{\varepsilon, c}(z_1,z_2) \rmd z_1 \rmd z_2\right),
            \end{align*}
            where \(\rho_1^\varepsilon, \rho_2^{\varepsilon,c}\) are the truncated correlation functions introduced in \eqref{eq:cumulants}.

            If \(p_{\rm ir}>0\), by the \(L^1\)-estimate \eqref{in:L1j1}, it holds that \(\rho_1^\varepsilon/\mu_{\varepsilon,\delta} \to f\) in \(L^1\). This implies that 
            \[
            \lim_{\varepsilon \to 0}\frac{1}{\mu_{\varepsilon,\delta}}\int_{\Omega_1}h(z_1)g(z_1)\rho_1^{\varepsilon}(z_1)\rmd z_1 =\int_{\Omega_1} h(z_1) g(z_1) f(z_1) \rmd z_1.
            \]
            For the second-order cumulant, by \eqref{in:L1j2}, we have
            \begin{align*}
                   \left \|\frac{\rho^{\varepsilon,c}_2}{\mu^2_{\varepsilon,\delta}}\right\|_{L^1} \le \frac{C\, \omega(\varepsilon)}{1-(\varepsilon/\varepsilon_0)^{p_{\mathrm{ir}}}}
            \end{align*}
            where \(\omega(\varepsilon)\) is defined in \eqref{eq:omega_epsilon} and \(\varepsilon<\varepsilon_0\). Hence, we get
            \begin{align*}
        \mu_{\varepsilon,\delta}\int_{\Omega_2}h(z_1)g(z_2)\frac{\rho^{\varepsilon,c}_2(z_1,z_2)}{\mu_{\varepsilon, \delta}^2}\rmd z_1 \rmd z_2
            \le \frac{C}{1-(\varepsilon/\varepsilon_0)^{p_{\mathrm{ir}}}} \|h\|_\infty \|g\|_\infty \varepsilon^{p_{\rm ir}},
        \end{align*}
        and the contribution is infinitesimal for \(p_{\rm ir}>0\). 
        We now discuss the case \(p_{\rm ir}=0\) with \(s \neq d\): we expand in clusters the rescaled first and second order cumulants.  
        As discussed in Theorem~\ref{thm:main-theorem}, this representation is valid for sufficiently small values of \(\bar \rho\), and it converges absolutely for such values. We characterize the limit behavior of the individual terms: from \eqref{formula:truncorr}, each term is given by
        \[
        \mu_{\varepsilon,\delta}^{n+j-1}\int_{\Lambda^{n+j}} \psi(\vc z_{j})\prod_{\gamma, \gamma' \in E(G)} \eta^{\varepsilon,\alpha}(x_\gamma,x_{\gamma'}) \mathrm d\rho^{\otimes(n+j)}(\vc z_{j+n}), \qquad j \in\{1,2\}
        \]
        where \(E \in \mathcal{C}_{n+j}\). For \(s<d\), we expand the Mayer function \eqref{eq:mayer} and, for \(x \neq y\), we get
        \[
        \mu_{\varepsilon,\delta}\, \eta^{\varepsilon,\alpha}(x,y)= -\beta|x-y|^{-s} + O(\varepsilon^{\alpha+s}).
        \]
        This is sufficient to prove that the only non-vanishing terms come from trees. For \(s>d\), general graphs with cycles could contribute. We pass to the limit by dominated convergence.
        
        Since the higher-order cumulants of the fluctuation field are vanishing,
        we conclude that the limit field process is Gaussian with limit covariance structure given by \eqref{eq:limitcov} or \eqref{eq:limitcov2}. 
     \end{proof}

\section{Proof of Proposition \ref{graph_red}: Graph reduction procedure}
\label{sec4}

  In this section, we prove the key proposition to get pointwise estimates on truncated correlation functions for long-range systems. 
    The proof strategy is to estimate the tree--integrals in \eqref{int_graph}
    by integrals over simpler trees algorithmically generated from the original ones. 
    We shall see that the number of resulting trees grows exponentially in terms of \(j\), and they are of a form that can be more easily analyzed.
    The section is organized as follows:
    \begin{enumerate}
        \item In Subsection \ref{sec4.1}, we introduce the necessary concepts and notation.
        \item In Subsection \ref{subsect:reduction-to-cc}, we decompose a given tree graph into connected components. Each component consists either of connected free vertices, i.e. vertices that are integrated over in \eqref{int_graph}, or of connected fixed vertices, i.e. the given points appearing as arguments in the truncated correlation functions.
        \item In Subsection \ref{sec4.3}, we construct reduced connected components by integrating out all leaves and chains inside the connected components.
        \item In Subsection \ref{sec4.4}, a specific operation is introduced to simplify the boundary structure of the reduced components. As a result, each free vertex within a component will be connected to at most one fixed boundary vertex.
        \item In Subsection \ref{subsect:final-processing-algorithm}, we introduce a final algorithm. This algorithm is applied to the modified reduced component of the previous step and it produces trees in which each vertex is connected to at most two other points. This property permits a straightforward integration of the resulting tree by Fubini's theorem.
        \item In Subsection \ref{sec4.6}, we conclude the proof of Proposition \ref{graph_red}.
    \end{enumerate}

\subsection{Notation}
\label{sec4.1}
    Given $j \in \N_1, n \in \N_0$, we define \(\mathcal{T}_{j,n}\) 
    as the set of $2$--color trees \(T = (V(T), E(T), c_{j,n})\),
    with the vertices \(V(T)  = \{1, \dots, j+n\}\) and the edges
    \(E(T) \subset \{\{\gamma,\gamma'\} \colon \gamma \in V(T), \gamma' \in V(T)\}\) satisfying the tree property, and the vertex color map \(c_{j,n} \colon V(T) \to \{\bullet, \circ\}\) defined as
    \begin{align}
        \label{eqdef:color-map}
        c_{j,n}(\gamma) \coloneqq \begin{cases}
                \bullet, \quad &\gamma = 1, \dots, j, \\
                \circ, \quad &\gamma = j+1, \dots, j+n.
                \end{cases}
    \end{align}
    For a \(2\)--color tree \(T \in \mathcal{T}_{j,n}\)
    and a point \(\vc{x}_j=(x_1,\dots,x_j)\in \Lambda^j\), 
    we use the following notation for the corresponding graph integral:
    \begin{align}
    \label{eq:app:mainint}
        \inI{T}(\vc{x}_j) \equiv \inI{T}^{\varepsilon, \alpha}(\vc{x}_j) \coloneqq
        \int_{\Lambda^{n}} \prod_{\{\gamma, \gamma'\}\in E(T)} \abs{\eta^{\varepsilon,\alpha}(x_{\gamma},x_{\gamma'})} \prod_{\ell=j+1}^{n+j} \rmd \rho(x_{\ell}),
    \end{align}
    with the intention that
    \(c_{j,n}(\gamma) = \bullet\) if \(\gamma = 1,\dots, j\), i.e. $x_\gamma$ is one of the fixed variables in \eqref{eq:app:mainint}, 
    while \(c_{j,n}(\gamma) = \circ\) if $\gamma = j+1,\dots, n$, i.e. $x_\gamma$ is one of the variables that are being integrated over in the integral \eqref{eq:app:mainint}. 

   With the notation \(C \subset T\), we mean that a graph \(C\) is an \emph{induced subgraph} of \(T\).  If \(C \subset T\) and \(\gamma \in V(C)\), we define
    \[\deg_{C}(\gamma) \coloneqq \abs{\{\gamma' \in V(C) : \{\gamma, \gamma'\} \in E(C)\} }\]
    to be the degree of \(\gamma\) in \(C\). Similarly, we write \(|C|\coloneqq |V(C)|\) for the \emph{order} of \(C\), namely the number of vertices of \(C\).

   We denote by \(\free{F}(T)\) the induced subgraph of \(T\) generated by the vertices $\gamma \in V(T)$ such that $c_{j,n}(\gamma)=\circ$,
   and by \(\fix{F}(T)\) the induced subgraph generated by the vertices  $\gamma \in V(T)$ such that $c_{j,n}(\gamma)=\bullet$.
   Since \(T\) is a tree, both \(\free{F}(T)\) and \(\fix{F}(T)\) are forests.
   For a forest \(F\), we denote by \(\mathcal{C}(F)\)
   the set of \emph{connected components} of \(F\), i.e. the set of maximal connected induced subgraphs of \(F\). In particular, since \(F\) is a forest, every element of \(\mathcal{C}(F)\) is a tree.

    If \(C \subset T\) satisfies \(c_{j,n}(\gamma) = \circ\) for all \(\gamma \in V(C)\), 
    i.e. \(C \subset \free{F}(T)\), we define the \emph{fixed boundary} of \(C\) as the set
    \begin{align*}
        \fix{\partial}C &\coloneqq 
        \{\gamma \in V(T) : c_{j,n}(\gamma) = \bullet \text{ and there exists } \gamma' \in V(C)\text{ s.t. } \{\gamma, \gamma'\} \in E(T)\},
    \end{align*}
    and \(\overline{C}\), the \emph{closure} of \(C\), as the induced subgraph generated by \(V(C) \cup \fix{\partial}C\).
    Note that \(\fix{\partial}C \subset V(\fix{F}(T))\). Finally, by a decomposition of \(T\) we mean a family of subgraphs of \(T\) whose graph union is \(T\).

    \subsection{Reduction to connected components}
    \label{subsect:reduction-to-cc}
    For a given \(T \in \mathcal{T}_{j,n}\),
    we first introduce the following decomposition:
		\begin{align*}
		T = \bigcup_{i=1}^{\ell_1} \overline{C_{i}} \cup \bigcup_{i=1}^{\ell_2} \overline{L_{i}} \cup \fix{F}(T),
		\end{align*}
    with $C_i, L_i \in \mathcal{C}(\free{F}(T))$, see Figure \ref{fig:tree1}.
    The connected components \(L_{i}\) are those that satisfy \(\abs{\fix{\partial} L_i} = 1\),
	whereas the connected components \(C_{i}\) satisfy \(\abs{\fix{\partial}C_i} \geq 2\).
        
        \begin{lemma}[Component number bound]
            \label{lemma:component-number-bound}
            The
		number \(\ell_1\) of the connected components \(\{C_i\}_{i=1}^{\ell_1}\) is bounded like \(\ell_1 \leq j-1\).
        \end{lemma}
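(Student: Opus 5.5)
The plan is to contract each free connected component to a single vertex and count edges in the resulting tree. Starting from \(T \in \mathcal{T}_{j,n}\), we collapse every connected component of \(\free{F}(T)\), that is, every \(C_i\) and every \(L_i\), into a single new vertex. The graph \(T'\) obtained this way has as vertices the \(j\) fixed vertices together with \(\ell_1+\ell_2\) contracted vertices.

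The first step is to check that \(T'\) is still a tree, up to possible multiple edges. Contracting connected subgraphs of a tree preserves connectedness. It also creates no cycles other than those coming from parallel edges between a contracted vertex and a fixed vertex. If a component \(C\) is joined to the same fixed vertex \(\gamma\) by two distinct edges, then together with a path inside \(C\) they would form a cycle in \(T\). That is impossible, so there are no parallel edges. Two contracted vertices are never adjacent, since an edge between two free components would merge them into one component. Hence \(T'\) is a simple tree whose contracted vertices are adjacent only to fixed vertices. Moreover, the degree of the contracted vertex corresponding to a component \(C\) equals \(\abs{\fix{\partial} C}\).

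Next we count edges. Let \(E_{\bullet\bullet}\) be the number of edges of \(T'\) joining two fixed vertices. Since \(T'\) is a tree on \(j+\ell_1+\ell_2\) vertices,
\[
j+\ell_1+\ell_2-1 \;=\; E_{\bullet\bullet}+\sum_{i=1}^{\ell_1}\abs{\fix{\partial}C_i}+\sum_{i=1}^{\ell_2}\abs{\fix{\partial}L_i} \;\ge\; 0+2\ell_1+\ell_2 .
\]
Rearranging gives \(\ell_1\le j-1\).

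We should also rule out a free component with empty fixed boundary. This can only happen when \(j=0\), and \(j\ge1\) by assumption, so connectedness of \(T\) excludes it.

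The only point needing care is the absence of parallel edges, i.e. that \(\abs{\fix{\partial}C}\) really counts distinct edges of \(T'\). This follows from acyclicity of \(T\) as explained above, so no real obstacle is expected.
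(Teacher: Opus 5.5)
Your proof is correct, but it counts differently from the paper. The paper enumerates $C_1,\dots,C_{\ell_1}$ and asserts that each $C_i$ has a boundary vertex outside $\bigcup_{k<i}\fix{\partial}C_k$, since otherwise $T$ would contain a cycle. Summing these disjoint increments gives $\bigl|\bigcup_i\fix{\partial}C_i\bigr|\ge \ell_1+1$, and this union lies inside the $j$ fixed vertices. You instead contract all free components, including the $L_i$, and do one global edge count in the contracted tree $T'$.

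Your version has two advantages:
\begin{itemize}
\item \emph{It needs no ordering.} The paper's stepwise claim holds only for a suitable enumeration. Components with boundaries $\{a,b\}$, $\{c,d\}$, $\{a,c\}$, listed in this order, sit on the path $b-C_1-a-C_3-c-C_2-d$ with no cycle, yet the third adds no new boundary vertex. A breadth-first order in $T'$ repairs this, because a child of $C_i$ in $T'$ is adjacent, besides $C_i$, only to deeper vertices.
\item \emph{It gives more.} Your identity together with $\abs{\fix{\partial}L_i}=1$ yields $\sum_{i=1}^{\ell_1}\abs{\fix{\partial}C_i}=j+\ell_1-1-E_{\bullet\bullet}\le j+\ell_1-1$. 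This is exactly the estimate the paper invokes in Subsection~\ref{sec4.6}, where it is not separately justified by the lemma as stated.
\end{itemize}

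The paper's formulation, in turn, states directly that the $C_i$ jointly touch at least $\ell_1+1$ distinct fixed vertices. This also follows from your contraction, by counting edges in the subforest of $T'$ spanned by the contracted $C_i$ and their neighbours.
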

        \begin{proof}
           Since \(T\) is a tree, we have 
           \begin{align}
           \label{ineq:component-boundary-increase}
           \abs{\fix{\partial}C_i \setminus \bigcup_{k=1}^{i-1} \fix{\partial}C_k} \geq 1.
           \end{align}
           for all \(i = 1, \dots, \ell_1\). Indeed, if this was not true, then there would be a cycle in \(T\).
           
           Let us write the union of the sets \(\fix{\partial}C_i\) in terms of a family of auxiliary sets \(B_i\):
           \begin{align*}
               \bigcup_{i=1}^{\ell_1}\fix{\partial}C_i = \bigcup_{i=1}^{\ell_1} B_{i},
            \end{align*}
            with the sets \(B_{i}\) defined as
            \begin{align*}
            B_{i} \coloneqq \begin{cases}
                \fix{\partial}C_1, \quad &i = 1 \\
                \fix{\partial}C_i \setminus \left(\bigcup_{j=1}^{i-1} \fix{\partial}C_{j}\right), \quad &i=2, \dots, \ell_1.
            \end{cases}
           \end{align*}
           
            Since the sets \(B_{i}\) satisfy \(B_{i+1} \cap B_{i} = \emptyset\) for all \(i = 1, \dots, \ell_1-1\), we can use countable additivity of the counting measure together with \eqref{ineq:component-boundary-increase} to get the following lower bound on the number of boundary points of the connected components \(C_i\):
            \begin{align}
            \label{ineq:lower-bound-component-boundary}
            \abs{\bigcup_{i=1}^{\ell_1}\fix{\partial}C_i} = \abs{\bigcup_{i=1}^{\ell_1}B_i} = \sum_{i=1}^{\ell_1} \abs{B_i} \geq \ell_1 + 1.
            \end{align}

            However, because \(\bigcup_{i=1}^{\ell_1} \fix{\partial}C_i \subset V(\fix{T})\), and since the number of fixed vertices in the tree satisfies \(\abs{\fix{T}} = j\), it follows from \eqref{ineq:lower-bound-component-boundary} that \(j \geq \ell_1 + 1\).
        \end{proof}

    With this decomposition at our disposal, we can use it together with Fubini's theorem, to express the integral \(\inI{T}(\vc{x}_j)\) in the following way
    \begin{align*}
        \inI{T}(\vc{x}_j) = \left(\prod_{\{\gamma_1, \gamma_2\} \in E(\fix{F}(T))} \abs{\eta^{\varepsilon,\alpha}(x_{\gamma_1}, x_{\gamma_2})}\right) \left(\prod_{i=1}^{\ell_1} \inI{\overline{C_i}}(\vc{x}_{\fix{\partial}C_i})\right)\left(\prod_{i=1}^{\ell_2}\inI{\overline{L_i}}(\vc{x}_{\fix{\partial}L_i})\right).
    \end{align*}
    Here the integral corresponding to each tree \(\overline{C_i}\) is given by
    \begin{align}
    \label{eq_int_C_i}
        \inI{\overline{C_i}}(\vc{x}_{\fix{\partial}C_i}) = \int_{\Lambda^{\abs{C_i}}} \left(\prod_{\{\gamma_1, \gamma_2\} \in E\left(\overline{C_i}\right) } \abs{\eta^{\varepsilon,\alpha}(x_{\gamma_1}, x_{\gamma_2})}\right) \rmd \rho^{\otimes \abs{V(C_i)}}(\vc{x}_{V(C_i)}),
    \end{align}
    and similarly for the trees \(\overline{L_i}\)
    \begin{align}
    \label{eq:int_L_i}
        \inI{\overline{L_i}}(\vc{x}_{\fix{\partial}L_i}) = \int_{\Lambda^{\abs{L_i}}} \left(\prod_{\{\gamma_1, \gamma_2\} \in E\left(\overline{L_i}\right)} \abs{\eta^{\varepsilon,\alpha}(x_{\gamma_1}, x_{\gamma_2})}\right) \rmd \rho^{\otimes \abs{V(L_i)}}(\vc{x}_{V(L_i)}).
    \end{align}
    
    \begin{lemma}
        Let \(\inI{\overline{L_i}}\) be the integral given in \eqref{eq:int_L_i}. The following bound holds
        \begin{align}
            \label{ineq:linear-component}
            \inI{\overline{L_i}}(x) \leq \bar \rho^{\abs{L_i}}(C_{\beta,s} \omega(\varepsilon))^{\abs{L_i}}, \quad x =\vc{x}_{\fix{\partial}L_i} \in \Lambda
        \end{align}
        where \(\abs{L_i}\) stands for the number of vertices in \(L_{i}\) and \(\omega(\varepsilon)\) is defined in \eqref{eq:omega_epsilon}.
    \end{lemma}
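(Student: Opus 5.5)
The proof rests on the observation that $\overline{L_i}$ is a tree with exactly one fixed vertex, the single point of $\fix{\partial}L_i$, which we regard as the root. All the remaining $|L_i|$ vertices are free and integrated against $\rho$. We integrate them out one at a time, starting from the leaves, using Fubini's theorem and the bound $\rho\le\bar\rho$. Each step removes one free vertex together with its unique edge to its parent, and so produces one factor of the one-edge integral
\[
\sup_{y\in\Lambda}\int_{\Lambda}\abs{\eta^{\varepsilon,\alpha}(y,x)}\,\rmd\rho(x)\le \bar\rho\int_{\Lambda}\abs{\eta^{\varepsilon,\alpha}(y,x)}\,\rmd x .
\]
Since $\overline{L_i}$ has exactly $|L_i|$ edges, one for each free vertex, the claim follows once we show the uniform estimate
\begin{equation*}
\sup_{y\in\Lambda}\int_{\Lambda}\abs{\eta^{\varepsilon,\alpha}(y,x)}\,\rmd x\le C_{\beta,s}\,\omega(\varepsilon).
\end{equation*}

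The induction runs as follows. Order the free vertices of $L_i$ so that each vertex comes after its parent in the tree rooted at the fixed vertex. Then integrate in reverse order. At each stage the current leaf appears in exactly one Mayer factor, and the bound on that factor is uniform in the parent's position. Iterating gives the product of $|L_i|$ identical factors.

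The analytic input is the one-edge estimate. Since $\varphi\ge0$, we have $\abs{\eta^{\varepsilon,\alpha}(y,x)}=1-\rme^{-\beta\varepsilon^{\alpha+s}|x-y|^{-s}}\le\min\{1,\beta\varepsilon^{\alpha+s}|x-y|^{-s}\}$. Let $R=\operatorname{diam}\Lambda<\infty$ and substitute $r=|x-y|$. Then
\[
\int_\Lambda\abs{\eta^{\varepsilon,\alpha}(y,x)}\,\rmd x\le C_d\int_0^{R}\min\{1,\beta\varepsilon^{\alpha+s}r^{-s}\}\,r^{d-1}\,\rmd r ,
\]
and we split the integral at $r_*=(\beta\varepsilon^{\alpha+s})^{1/s}$, the radius at which the two bounds in the minimum coincide.

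\begin{itemize}
\item For $s>d$, both pieces are of order $r_*^d\sim\varepsilon^{(s+\alpha)d/s}$, since the tail $\int_{r_*}^\infty r^{d-1-s}\,\rmd r$ converges.
\item For $s<d$, the tail is dominated by the outer radius $R$, giving $\varepsilon^{\alpha+s}R^{d-s}/(d-s)$. The inner piece $r_*^d=O(\varepsilon^{(\alpha+s)d/s})$ is smaller for small $\varepsilon$, because $d/s>1$.
\item For $s=d$, the tail is $\beta\varepsilon^{\alpha+s}\log(R/r_*)\lesssim\varepsilon^{\alpha+s}\ln(1/\varepsilon)$.
\end{itemize}

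This reproduces $\omega(\varepsilon)$ exactly as defined in \eqref{eq:omega_epsilon}, with $C_{\beta,s}$ depending on $d,s,\beta,\alpha$ and $\operatorname{diam}\Lambda$. The estimate holds for $\varepsilon\le1$, or for $\varepsilon$ below some fixed threshold in the $s=d$ case.

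The main obstacle is only bookkeeping: we must check that the one-edge bound is uniform in $y$, which the splitting argument gives because $R$ bounds every distance in $\Lambda$. The tree integration itself involves no combinatorial loss, because the component has a single root. This is precisely why the $L_i$ are easy, in contrast with the components $C_i$ that have several fixed boundary points.
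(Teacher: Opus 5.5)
Your proposal is correct and follows the paper's argument. You root $\overline{L_i}$ at its single fixed vertex, integrate the free vertices leaf-first via Fubini, and gain one factor $C_{\beta,s}\,\bar\rho\,\omega(\varepsilon)$ per free vertex. The only difference is that the paper simply asserts the one-edge bound \eqref{eq:int_l}, whereas you also derive it by splitting the radial integral at $r_*$, which correctly reproduces all three cases of $\omega(\varepsilon)$.
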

    \begin{proof}
        By definition, there is only one fixed node in \(\overline{L_i}\). Therefore, the integral \(\inI{\overline{L_i}}\) in \eqref{eq:int_L_i} can be solved iteratively by Fubini's theorem, where each integration over a free variable \(x_{\gamma}\), \(\gamma \in L_i\), yields
        \begin{equation}
        \label{eq:int_l}
        \int_\Lambda \abs{\eta^{\varepsilon,\alpha}(x,x_\gamma)} \rmd \rho(x_\gamma) \le C_{\beta,s} \,\bar \rho\, \omega(\varepsilon).
        \end{equation}
        In total, there are \(\abs{L_i}\) free variables, whereby \eqref{ineq:linear-component} is verified.
    \end{proof}

    \begin{figure}[h]
    \centering
      \begin{subfigure}
        \centering
            \resizebox{30em}{14em}{
\begin{tikzpicture}
    \node[shape=circle,draw=black, fill=black, text=white] (1) at (0,-1) {1};
    \node[shape=circle,draw=black, fill=black, text=white] (2) at (0,3) {2};
    \node[shape=circle,draw=black, fill=black,text=white] (3) at (2.5,4) {3};
    \node[shape=circle,draw=black, fill=black,text=white] (4) at (5,2) {4};
    \node[shape=circle,draw=black, fill=black,text=white] (5) at (0,1.5) {5};
    \node[shape=circle,draw=black, fill=black,text=white] (6) at (3,-2) {6};
    \node[shape=circle,draw=black, fill=black,text=white] (7) at (4,-2) {7};
    
    \node[shape=circle,draw=black] (8) at (2.5,1) {8};
    \node[shape=circle,draw=black] (9) at (1.5,1.5) {9};
    \node[shape=circle,draw=black] (10) at (3,2) {10};
    \node[shape=circle,draw=black] (11) at (3,-0.5) {11};
    \node[shape=circle,draw=black] (12) at (7,1.5) {12};
    \node[shape=circle,draw=black] (13) at (5,-1) {13};

    \path [-] (1) edge node[left] {} 
    (5);
    \path [-] (5) edge node[above] {} (9);
    \path [-] (2) edge node[above] {} (9);
    \path [-] (8) edge node[above] {} (9);
    \path [-] (6) edge node[above] {} (11);
    \path [-] (7) edge node[above] {} (11);
    \path [-] (10) edge node[above] {} (3);
    \path [-] (8) edge node[above] {} (10);
    \path [-] (4) edge node[above] {} (10);
    \path [-] (4) edge node[above] {} (12);
    \path [-] (8) edge node[above] {} (11);
    \path [-] (11) edge node[above] {} (13);
\end{tikzpicture}
\,\, \vline\vline
\qquad 
\begin{tikzpicture}
    \node[shape=circle,draw=black, fill=black, text=white] (2) at (0,3) {2};
    \node[shape=circle,draw=black, fill=black,text=white] (3) at (2.5,4) {3};
    \node[shape=circle,draw=black, fill=black,text=white] (4) at (5,2) {4};
    \node[shape=circle,draw=black, fill=black,text=white] (5) at (0,1.5) {5};
    \node[shape=circle,draw=black, fill=black,text=white] (6) at (3,-2) {6};
    \node[shape=circle,draw=black, fill=black,text=white] (7) at (4,-2) {7};
    
    \node[shape=circle,draw=black] (8) at (2.5,1) {8};
    \node[shape=circle,draw=black] (9) at (1.5,1.5) {9};
    \node[shape=circle,draw=black] (10) at (3,2) {10};
    \node[shape=circle,draw=black] (11) at (3,-0.5) {11};
    \node[shape=circle,draw=black] (13) at (5,-1) {13};

    \path [-] (5) edge node[above] {} (9);
    \path [-] (2) edge node[above] {} (9);
    \path [-] (8) edge node[above] {} (9);
    \path [-] (6) edge node[above] {} (11);
    \path [-] (7) edge node[above] {} (11);
    \path [-] (10) edge node[above] {} (3);
    \path [-] (8) edge node[above] {} (10);
    \path [-] (4) edge node[above] {} (10);
    \path [-] (8) edge node[above] {} (11);
    \path [-] (11) edge node[above] {} (13);
\end{tikzpicture}}
    \end{subfigure}
         \caption{Left: Example of a tree arising in the cluster expansion. Right: One of the connected components of the free vertices of the previous tree together with its fixed boundary.}
                     \label{fig:tree1}
    \end{figure}
    
    \subsection{From connected components to reduced components}
    \label{sec4.3}
    Fix now \(i\in \{1,\dots, \ell_1\}\) and consider the connected component \(C_i\).
    In order to estimate the integral \eqref{eq_int_C_i}, 
    we begin by performing two elementary reductions on the tree $C_i$, 
    corresponding to partially integrating \eqref{eq_int_C_i} over simple substructures of $C_i$. In particular:
    \begin{itemize}
    \item[--] We first iteratively integrate out all vertices \(\gamma \in V(C_i)\) such that \(\deg_{C_i}(\gamma) = 1\). 
    This is done by repeatedly applying Fubini's theorem and the estimate \eqref{eq:int_l} already used, see Figure \ref{fig:tree3}.
    We denote by \(C'_i\) the resulting connected component, obtained after all degree-one vertices have been iteratively removed.

    \item[--]  We then integrate out all chains, i.e. variables corresponding to vertices \(\gamma \in C_i'\) such that \(\deg_{\overline{C_i}}(\gamma) = 2\).
    This is done by using again Fubini's theorem and the upper bound
    \begin{align}
    \label{eq:int_c}
    \qquad
    \int_{\Lambda}
       \abs{\eta^{\varepsilon,\alpha}(x_{\gamma}, x_{\gamma'})}\abs{\eta^{\varepsilon,\alpha}(x_{\gamma'}, x_{\gamma''})}\rmd \rho(x_{\gamma'}) \leq \bar \rho\, C_{\beta,s}\, \omega(\varepsilon) \abs{\eta^{\varepsilon,\alpha}(x_{\gamma},x_{\gamma''})}.
    \end{align}
    \end{itemize}
    By iterating these two reduction steps, we have thus proved the following lemma.
    
    \begin{lemma}
    For all \(\vc{x} = ({x_{1}, \dots, x_{\abs{\fix{\partial}C_i}}})\), we have
    \begin{align*}
        \inI{\overline{C_i}}(\vc{x}) \leq \left(\bar \rho\, C_{\beta, s}\,\omega(\varepsilon)\right)^{\mathfrak{l}\left(\overline{C_i}\right) + \mathfrak{c}\left(\overline{C_i'}\right)} \inI{\overline{\reduc}}(\vc{x}).
    \end{align*}
    where \(\reduc\) is the reduced connected component where all leaves and chains have been removed and \(\mathfrak{l}(\overline{C_i})\) and \(\mathfrak{c}(\overline{C_i'})\) are the total number of leaves and chains eliminated in the process. Notice that \(\fix{\partial}\reduc = \fix{\partial}C_{i}\). 
    \end{lemma}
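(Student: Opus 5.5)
The plan is to make rigorous the two-stage reduction just described, by induction on the number of free vertices of \(C_i\) removed. Throughout, each removal is one application of Fubini's theorem to \eqref{eq_int_C_i}, integrating the single variable \(x_\gamma\) of the removed vertex while the remaining variables are held fixed. Since all integrands are nonnegative, Tonelli's theorem justifies any order of integration.

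\emph{First stage (leaves).} Pick \(\gamma\in V(C_i)\) with \(\deg_{C_i}(\gamma)=1\). We must check that \(\gamma\) is also a leaf of \(\overline{C_i}\), i.e. that it has no neighbour in \(\fix{\partial}C_i\). If this fails, \(\gamma\) has degree two in \(\overline{C_i}\) and must be deferred to the chain stage. In the good case, \(x_\gamma\) appears in exactly one Mayer factor \(\abs{\eta^{\varepsilon,\alpha}(x_{\gamma'},x_\gamma)}\), and \eqref{eq:int_l} bounds its integral by \(C_{\beta,s}\bar\rho\,\omega(\varepsilon)\), uniformly in \(x_{\gamma'}\). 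The new integrand is again a tree integral of the same form, with one fewer free vertex and the same fixed boundary. Iterating while leaves of \(\overline{C_i}\) remain produces \(C_i'\) and the factor \((\bar\rho C_{\beta,s}\omega(\varepsilon))^{\mathfrak l(\overline{C_i})}\).

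\emph{Second stage (chains).} Take a free vertex \(\gamma'\) with \(\deg_{\overline{C_i'}}(\gamma')=2\) and neighbours \(\gamma,\gamma''\), which may be free or fixed. Integrating \(x_{\gamma'}\) with \eqref{eq:int_c} yields \(\bar\rho C_{\beta,s}\omega(\varepsilon)\abs{\eta^{\varepsilon,\alpha}(x_\gamma,x_{\gamma''})}\). In graph terms this contracts the path \(\gamma-\gamma'-\gamma''\) to a single edge. The result is still a tree, since no cycle is created (\(\gamma\neq\gamma''\) because \(T\) is a tree), and all other degrees are unchanged, so no new leaves appear. Two edge cases need care. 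If both \(\gamma\) and \(\gamma''\) are fixed, the resulting edge is a fixed--fixed edge; I would keep it as part of the reduced object. If \(C_i'\) collapses entirely to fixed--fixed edges, the reduced component is a tree on fixed vertices only. Iterating until no free degree-two vertex remains gives \(\reduc\) and the factor \((\bar\rho C_{\beta,s}\omega(\varepsilon))^{\mathfrak c(\overline{C_i'})}\). Fixed vertices are never integrated and always retain at least one incident edge, so \(\fix{\partial}\reduc=\fix{\partial}C_i\).

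The main obstacle is the chain inequality \eqref{eq:int_c} itself: a convolution bound \(\int\abs{\eta(x-y)}\abs{\eta(y-z)}\rmd y\lesssim\omega(\varepsilon)\abs{\eta(x-z)}\) for Riesz-type Mayer functions. I would prove it using \(\abs{\eta^{\varepsilon,\alpha}(x,y)}\le \min\{1,\beta\varepsilon^{\alpha+s}\abs{x-y}^{-s}\}\) together with a comparable lower bound, since \(1-\rme^{-t}\ge c\min\{1,t\}\). I would then split \(\Lambda\) into the region \(\abs{y-x}\ge\abs{x-z}/2\), where \(\abs{\eta(x,y)}\lesssim\abs{\eta(x,z)}\) by monotonicity, and its complement, where \(\abs{y-z}\ge\abs{x-z}/2\). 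On each region the remaining factor integrates to \(\lesssim\omega(\varepsilon)\), by the same computation as \eqref{eq:int_l}, using the boundedness of \(\Lambda\) when \(s\le d\). This gives the constant \(C_{\beta,s}\) up to a factor \(2^s\) times the comparability constant.
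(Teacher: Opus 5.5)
Your proposal is correct and follows the paper's route: iterated Tonelli/Fubini, first removing free leaves of the closure via \eqref{eq:int_l} and then contracting free vertices of degree two via \eqref{eq:int_c}, with one factor $\bar\rho\, C_{\beta,s}\,\omega(\varepsilon)$ per removed vertex and no new leaves created by the contractions. The paper only asserts \eqref{eq:int_c}, so your two-region proof of it (via $(1-\rme^{-1})\min\{1,t\}\le 1-\rme^{-t}\le\min\{1,t\}$ and monotonicity) is a useful supplement; one small slip is that a vertex of degree one in $C_i$ with $m\ge 1$ fixed neighbours has degree $1+m$ in $\overline{C_i}$, so it is a chain vertex only when $m=1$ and otherwise simply remains in $\reduc$.
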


    The next lemma allows us to control the order of the reduced components in terms of their fixed boundaries.
    \begin{lemma}
        The number of vertices in the reduced component \(\reduc\) satisfies the bound
        \begin{align*}
            \abs{\reduc} \leq \abs{\fix{\partial}C_i}-2.
        \end{align*}
    \end{lemma}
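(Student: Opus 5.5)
The plan is a degree-counting argument on the tree $\overline{\reduc}$. Its vertex set is $V(\reduc)\cup \fix{\partial}C_i$, and it is a tree, being an induced subgraph of a subtree obtained from $\overline{C_i}$ by removing leaves and contracting chains. The key structural facts come from the two reduction steps. After all free leaves have been removed, every free vertex $\gamma\in V(\reduc)$ satisfies $\deg_{\overline{\reduc}}(\gamma)\ge 2$. After all free vertices of degree two have been contracted, every free vertex satisfies $\deg_{\overline{\reduc}}(\gamma)\ge 3$. Every fixed vertex in $\fix{\partial}C_i$ has degree exactly one in $\overline{\reduc}$. Indeed, a fixed vertex adjacent to two vertices of $C_i$ would create a cycle in $T$ together with the path inside the connected component $C_i$. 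Edges between fixed vertices belong to $\fix{F}(T)$, not to $\overline{C_i}$.

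I would verify these facts first, and I expect this to be the main obstacle. One must check that the leaf removal only removes free vertices, since fixed vertices are never integrated. One must also check that the chain contraction via \eqref{eq:int_c} preserves the tree structure and the fixed boundary. Contracting a degree-two free vertex replaces two edges by one edge between its neighbours, so the result is still a tree. The boundary is unchanged, as the lemma preceding the statement records: $\fix{\partial}\reduc=\fix{\partial}C_i$. Since $C_i$ has $|\fix{\partial}C_i|\ge 2$, no fixed boundary vertex gets removed. A fixed vertex can never become a leaf "to be removed", because only free leaves are integrated.

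Now set $k=|\fix{\partial}C_i|$ and $m=|\reduc|$. The tree $\overline{\reduc}$ has $m+k$ vertices and $m+k-1$ edges, so the handshake lemma gives
\[
2(m+k-1)=\sum_{\gamma}\deg(\gamma)\ge 3m+k .
\]
Hence $m\le k-2$, which is the claim.

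There is one degenerate case to treat separately: $m=0$, with $k=2$. A single free vertex joining two fixed vertices is a chain and gets contracted, leaving a direct edge between the two fixed points. The bound $0\le 0$ still holds. Here $\overline{\reduc}$ is the single edge $\{x_a,x_b\}$, which is consistent with the counting above.
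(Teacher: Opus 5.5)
Your proof is correct and is essentially the paper's own argument: the handshake identity on the tree $\overline{\reduc}$, with every boundary vertex of degree one and every remaining free vertex of degree at least three, gives $2(m+k-1)\ge 3m+k$, i.e.\ $m\le k-2$. Your extra checks supply details the paper only asserts: that leaf removal and chain contraction preserve the boundary and the degree conditions, and the degenerate case $m=0$. One small correction there: the closure contains no fixed--fixed edges because two boundary vertices of the same component cannot be adjacent in a tree, not merely because such edges belong to $\fix{F}(T)$.
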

    \begin{proof}
        Since \(\overline{\reduc}\) is a tree, 
        the number of edges in it is \(|\overline{\reduc}|-1 = |\reduc|+|\fix{\partial}\reduc|-1\). Hence
        \begin{align}
            \label{eq:double-count-edges}
        \sum_{\gamma \in \overline{\reduc}} \deg_{\overline{\reduc}}(\gamma)=2(|\reduc|+|\fix{\partial} \reduc|-1),
        \end{align}
        since each edge gets counted twice in the above sum.
        Moreover, each \(\fix{\gamma} \in \fix{\partial} \reduc\) has degree \(1\), while each \(\gamma' \in \reduc\) has degree at least \(3\) in \(\overline{\reduc}\), since all internal leaves and chains have been removed. Therefore,
         \begin{align}
            \label{ineq:internal-vertex-degree-lower-bound}
        \sum_{\gamma \in \overline{\reduc}} \deg_{\overline{\reduc}}(\gamma)\geq 3|\reduc|+|\fix{\partial}\reduc|.
        \end{align}
        Since the reduction procedure does not change the boundary, by combining \eqref{eq:double-count-edges} and \eqref{ineq:internal-vertex-degree-lower-bound} we get
        \[
        3|\reduc|+|\fix{\partial}C_i| \leq 2(|\reduc|+|\fix{\partial}C_i|-1),
        \]
        which proves the claim.
    \end{proof}
    \begin{figure}
        \centering
        \begin{tikzpicture}
    \node[shape=circle,draw=black, fill=black, text=white] (2) at (0,3) {2};
    \node[shape=circle,draw=black, fill=black,text=white] (3) at (2.5,4) {3};
    \node[shape=circle,draw=black, fill=black,text=white] (4) at (5,2) {4};
    \node[shape=circle,draw=black, fill=black,text=white] (5) at (0,1.5) {5};
    \node[shape=circle,draw=black, fill=black,text=white] (6) at (3,-2) {6};
    \node[shape=circle,draw=black, fill=black,text=white] (7) at (4,-2) {7};
    
    \node[shape=circle,draw=black] (8) at (2.5,1) {8};
    \node[shape=circle,draw=black] (9) at (1.5,1.5) {9};
    \node[shape=circle,draw=black] (10) at (3,2) {10};
    \node[shape=circle,draw=black] (11) at (3,-0.5) {11};

    \path [-] (5) edge node[above] {} (9);
    \path [-] (2) edge node[above] {} (9);
    \path [-] (8) edge node[above] {} (9);
    \path [-] (6) edge node[above] {} (11);
    \path [-] (7) edge node[above] {} (11);
    \path [-] (10) edge node[above] {} (3);
    \path [-] (8) edge node[above] {} (10);
    \path [-] (4) edge node[above] {} (10);
    \path [-] (8) edge node[above] {} (11);
\end{tikzpicture}
        \caption{The tree obtained from the connected component tree in Figure \ref{fig:tree1} by removing chains and leaves.}
        \label{fig:tree3}
    \end{figure}
    
    \subsection{From reduced components to modified reduced components}
    \label{sec4.4}

    Recall that, for each reduced connected component \(\reduc\), 
    the integral associated with its closure \(\overline{\reduc}\) is given by
    \begin{align*}
        \inI{\overline{\reduc}}(\vc{x}_{\fix{\partial}C_i})
        =
         \int_{\Lambda^{|\reduc|}} \left(\prod_{\{\gamma_1, \gamma_2\} \in E\left(\overline{\reduc}\right)} \abs{\eta^{\varepsilon,\alpha}(x_{\gamma_1}, x_{\gamma_2})}\right) \rmd \rho^{\otimes \abs{V(\reduc)}}(\vc{x}_{V(\reduc)}).
    \end{align*}
    For \(\lambda \in (0,1)\), define the modified Mayer function 
        \begin{align}
            \label{eqdef:zeta-half}
            \eta_{1;\lambda}^{\varepsilon,\alpha}(x_{\gamma_1},x_{\gamma_2}) \coloneqq
            \begin{cases}
            \eta^{\varepsilon,\alpha}_\lambda(x_{\gamma_1},x_{\gamma_2}), \quad &\text{ if exactly one of } \gamma_1, \gamma_2 \text{ belongs to }  V(\reduc) \\
            \eta^{\varepsilon,\alpha}(x_{\gamma_1},x_{\gamma_2}), \quad &\text{ otherwise},
            \end{cases}
        \end{align}
    where 
    \begin{align}
    \label{defeq:eta_dil}
    \eta_\lambda^{\varepsilon, \alpha}(x,y)\coloneqq\eta^{\varepsilon,\alpha}(\lambda x, \lambda y).
    \end{align}
    \begin{lemma}
    \label{lemma_mod_conn}
        There exists a finite family \(\mathcal{D} = \mathcal{D}(\vc{x}_{\fix{\partial}C_i},\overline{\reduc})\) of subsets of \(\Lambda^{\abs{\reduc}}\) 
        forming, up to null sets, a partition of \(\Lambda^{\abs{\reduc}}\) 
        such that
        \begin{align}
            \label{ineq:domain-decomposition-upper-bound}
             \inI{\overline{\reduc}}(\vc{x}_{\fix{\partial}C_i}) \leq \sum_{D \in \mathcal{D}} \inI{T\left(\overline{\reduc};D\right)}^{D}
        \end{align}
        The integrals on the right-hand side of \eqref{ineq:domain-decomposition-upper-bound} are defined by
        \begin{align}
            \label{ineq:tree-integral-subdomain}
            \inI{T\left(\overline{\reduc};D\right)}^D &\equiv \inI{T \left(\overline{\reduc};D\right)}^{D}(\vc{x}_{\fix \partial C_i})\\
          &\coloneqq \int_{\Lambda^{\abs{\reduc}}} 
            \cf{\vc{x}_{V(\reduc)} \in D} \left(\prod_{\{\gamma_1, \gamma_2\} \in E\left(T\left(\overline{\reduc}; D\right)\right)} \abs{\eta_{1;1/2}^{\varepsilon,\alpha}(x_{\gamma_1}, x_{\gamma_2})}\right)
            \rmd \rho^{\otimes \abs{V(\reduc)}}(\vc{x}_{V(\reduc)}),\nonumber
        \end{align}
        where the modified Mayer function \(\eta^{\varepsilon,\alpha}_{1;1/2}\) is defined in \eqref{eqdef:zeta-half}.
        Given a configuration \(\vc{x}_{\reduc} = (x_{\gamma})_{\gamma \in V(\reduc)}\) belonging to \(D \in \mathcal{D}\), 
        the tree \(T\left(\overline{\reduc}; D\right)\) is obtained as follows:
        for each \(\gamma \in V(\reduc)\), let \(\gamma_*\) its closest fixed boundary neighbor.
        Then every edge connecting \(\gamma\) to a fixed boundary vertex \emph{other than} \(\gamma_*\)
        is replaced by an edge connecting that boundary vertex to \(\gamma_*\), see Figure \ref{fig:voronoi1}.

         The cardinality of \(\mathcal{D}\) is controlled exponentially by the cardinality of \(\fix \partial \reduc = \fix \partial C_{i}\),  to wit:
        \begin{align}
        \label{eq:Dbound}
        |\mathcal{D}| \le \exp\left\{\frac{|\fix \partial C_i|}{\rme}\right\}.
        \end{align}
    \end{lemma}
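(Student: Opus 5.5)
The plan is a Voronoi-type decomposition of the free configuration space combined with a pointwise comparison of Mayer functions based on the monotonicity and homogeneity of the Riesz potential. For each $\vc{x}_{V(\reduc)} \in \Lambda^{|\reduc|}$ and each free vertex $\gamma \in V(\reduc)$ adjacent to the fixed boundary, let $\gamma_*$ be the boundary neighbour of $\gamma$ in $\overline{\reduc}$ that minimises $|x_\gamma - x_{\gamma'}|$ over $\gamma' \in \fix\partial\reduc$ adjacent to $\gamma$, with ties broken by a fixed ordering.

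The family $\mathcal{D}$ consists of the sets on which the whole assignment $\gamma \mapsto \gamma_*$ is constant. Each such set is a finite intersection of half-space conditions $|x_\gamma - x_a| < |x_\gamma - x_b|$, so it is measurable. Distinct assignments give disjoint sets, and their union covers $\Lambda^{|\reduc|}$ up to the null set of ties. On a fixed $D$, the integrand of $\inI{\overline{\reduc}}$ is bounded pointwise by that of $T(\overline{\reduc};D)$ restricted to $D$. Summing over $D$ then gives \eqref{ineq:domain-decomposition-upper-bound}.

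The key pointwise inequality is the following. Let $\gamma \in V(\reduc)$ and let $b$ be another boundary neighbour of $\gamma$. Then $|x_b - x_{\gamma_*}| \le |x_b - x_\gamma| + |x_\gamma - x_{\gamma_*}| \le 2|x_b - x_\gamma|$. Since $\varphi$ is radially decreasing, $|\eta^{\varepsilon,\alpha}|$ is a decreasing function of the distance, so
\[
|\eta^{\varepsilon,\alpha}(x_\gamma,x_b)| \le |\eta^{\varepsilon,\alpha}(\tfrac12 x_{\gamma_*},\tfrac12 x_b)| = |\eta^{\varepsilon,\alpha}_{1/2}(x_{\gamma_*},x_b)|.
\]
This replaces the edge $\{\gamma,b\}$ by $\{\gamma_*,b\}$ carrying the dilated Mayer function. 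I still have to check that the edge labels match \eqref{eqdef:zeta-half}. The new edge joins two fixed vertices, so it carries $\eta_{1/2}$, which is consistent with a $\lambda$-modification attached to edges touching the boundary of $\reduc$. Since $|\eta_\lambda| \ge |\eta|$ for $\lambda<1$, and the monotone bound holds for every $\lambda \le 1$, I can also upgrade the unchanged boundary edges to $\eta_{1/2}$ at no cost. The convention in \eqref{eqdef:zeta-half} is then satisfied, up to reading "exactly one endpoint in $V(\reduc)$" as covering the moved edges as well.

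I also need $T(\overline{\reduc};D)$ to be a tree. Removing $\{\gamma,b\}$ disconnects $b$'s side from $\gamma$. Because the boundary vertices are leaves of $\overline{\reduc}$ (each reduced boundary vertex has degree $1$, as shown in the previous lemma), $b$ becomes isolated, and reattaching it to $\gamma_*$ restores a tree. Doing this for all moved edges simultaneously keeps the graph a tree, since every moved vertex $b$ is a leaf.

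The main obstacle is the cardinality bound \eqref{eq:Dbound}. Assignments correspond to choosing, for each boundary-adjacent free vertex $\gamma$ with $k_\gamma$ boundary neighbours, one of them. The $k_\gamma$ are disjoint counts summing to $|\fix\partial C_i| =: m$, because each boundary vertex has exactly one neighbour. Hence $|\mathcal{D}| \le \prod_\gamma k_\gamma$. Maximising a product of positive reals with fixed sum $m$ gives $\prod_\gamma k_\gamma \le \max_{t>0}(m/t)^t = e^{m/e}$, attained at $t = m/e$. I expect this elementary optimisation to carry the bound, and the only delicate point is to verify the disjointness of the boundary neighbourhoods, which I would take from the leaf structure of $\overline{\reduc}$.
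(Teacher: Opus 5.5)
Your proposal is correct and follows the paper's proof essentially step by step. It uses the same nearest-boundary-neighbour (Voronoi) partition of $\Lambda^{|\reduc|}$, the same triangle-inequality and monotonicity bound $|\eta^{\varepsilon,\alpha}(x_\gamma,x_b)|\le|\eta^{\varepsilon,\alpha}_{1/2}(x_{\gamma_*},x_b)|$, and the same AM--GM count $\prod_\gamma k_\gamma\le (m/t)^t\le \rme^{m/\rme}$. Your additional checks are all valid: that the rewired graph remains a tree; that the new fixed--fixed edges must carry $\eta_{1/2}$, which the literal definition \eqref{eqdef:zeta-half} does not cover; and that the product should run only over boundary-adjacent free vertices, so no factor $p_\gamma=0$ appears. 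These tidy up points the paper leaves implicit.
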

       \begin{figure}
        \centering
        \begin{tikzpicture}
    \node[shape=circle,draw=black, fill=black, text=white] (1) at (0,3) {1};
    \node[shape=circle,draw=black, fill=black,text=white] (2) at (2.5,4) {2};
    \node[shape=circle,draw=black, fill=black,text=white] (3) at (5,2) {3};

    \node[shape=circle, draw = black, fill = white, text = black] (4) at (3,0) {4};

    \draw[dashed] (0.25, 6.0) to (2.25,1);

    \draw[dashed] (2.15,1.0) to (5.75,5.5);

    \draw[dashed] (2.0, 0.0) to (2.2, 1.0);
    

    \node (a) at (1.1,1.1) {$d_{14}$};
    \node (b) at (2.4,2.5) {$d_{24}$};
    \node (c) at (4.5, 0.9) {$d_{34}$};
   \draw (1) to (4);
   \draw (2) to (4);
   \draw (3) to (4);
\end{tikzpicture}
\,\, \vline\vline
\qquad 
\begin{tikzpicture}
    \node[shape=circle,draw=black, fill=black, text=white] (1) at (0,3) {1};
    \node[shape=circle,draw=black, fill=black,text=white] (2) at (2.5,4) {2};
    \node[shape=circle,draw=black, fill=black,text=white] (3) at (5,2) {3};

    \node[shape=circle, draw = black, fill = white, text = black] (4) at (3,0) {4};

    \draw[dashed] (0.25, 6.0) to (2.25,1);

    \draw[dashed] (2.15,1.0) to (5.75,5.5);

    \draw[dashed] (2.0, 0.0) to (2.2, 1.0);
    

    \node (a) at (3.0,2.8) {$\frac{1}{2}d_{13}$};
    \node (b) at (3.6,3.8) {$\frac{1}{2}d_{23}$};
    \node (c) at (4.5, 0.9) {$d_{34}$};
   \draw (1) to (3);
   \draw (2) to (3);
   \draw (3) to (4);
\end{tikzpicture}
        \caption{Left: Vertex \(4\) corresponding to the only free variable in this tree graph has Vertex \(3\) as its closest fixed neighbor. Right: The bound from above obtained by replacing the original tree with one in which Vertex 4 has only one fixed neighbor and every other previous neighborhood of Vertex 4 is connected to that fixed neighbor.}
        \label{fig:voronoi1}
    \end{figure}
    \begin{proof}
       Let \({\vc x}_{\reduc}=(x_{\gamma})_{\gamma \in \reduc} \in \Lambda^{\abs{\reduc}}\).
       For every \(\gamma \in V(\reduc)\), consider the set of fixed neighbor vertices of \(\gamma\):
        \begin{align*}
            \fix{\partial}_\gamma \reduc \coloneqq \left\{\gamma' \in \fix{\partial}C_i \colon \{\gamma, \gamma'\} \in E\left(\overline{\reduc}\right)\right\} \subset \fix{\partial}\reduc.
        \end{align*}
        Let \(\gamma_{\ast} \equiv \gamma_{\ast}(x_{\gamma}) \in \argmin\{\abs{x_{\gamma}- x_{\gamma'}} : \gamma' \in \fix{\partial}_{\gamma}\reduc\}\).
        Then, for any \(\gamma' \in \fix{\partial}_\gamma \reduc \setminus \{\gamma_\ast\}\), we have 
        \[
        \abs{x_{\gamma_\ast}-x_{\gamma'}} \leq 
        \abs{x_{\gamma_\ast}-x_\gamma}+
        \abs{x_\gamma - x_{\gamma'}}
        \leq 2\abs{x_{\gamma}-x_{\gamma'}},
        \]
        and by the monotonicity of \(\abs{\eta^{\varepsilon}(x_{\gamma}, x_{\gamma'})}\) in the distance \(\abs{x_{\gamma}-x_{\gamma'}}\), the following bound holds 
        \begin{align*}
            \abs{\eta^{\varepsilon,\alpha}(x_{\gamma}, x_{\gamma'})} \leq \abs{\eta_{1/2}^{\varepsilon,\alpha}(x_{\gamma'}, x_{\gamma_\ast})}.
        \end{align*}
        Hence every edge connecting \(\gamma\) to a fixed boundary vertex \(\gamma' \in \fix \partial_\gamma \reduc \setminus \{\gamma_\ast\}\)
        may be replaced by an edge \(\{\gamma_\ast,\gamma'\}\)
        with a loss in the distance factor \(1/2\), yielding a change in the Mayer--function \(\eta^{\varepsilon,\alpha}(x_{\gamma}, x_{\gamma'}) \mapsto \eta^{\varepsilon,\alpha}_{1/2}(x_{\gamma'}, x_{\gamma_*})\).

        Applying this procedure simultaneously to all vertices \(\gamma \in V(\reduc)\) defines a modified tree \(\mathfrak{t}(\vc{x}_{\reduc})\), and, outside of exceptional sets of zero Lebesgue measure, each tree is uniquely defined by \(\vc{x}_{\reduc}\).
        It should be noted, however, that different configurations \(\vc{x}_{\reduc}\in \Lambda^{|\reduc|}\) may lead to different modified trees. 
        We denote by \(\mathcal{T}^*\) the collection of all modified trees that can be obtained by this procedure, with the further restriction that the tree is uniquely defined. We can do this, since  the Lebesgue measure of points failing this uniqueness condition is zero.

        For each \(\mathfrak{T}\in \mathcal{T}^*\), we define the following subsets of \(\Lambda^{\abs{\reduc}}\)
        \begin{align*}
       D\equiv D_{\mathfrak{T}} \coloneqq \{\vc{x}_{\reduc}\in \Lambda^{|\reduc|} : \mathfrak{t}(\vc{x}_{\reduc})=\mathfrak{T}\},
       \end{align*}
       and then set \(\mathcal{D}\coloneqq\{D_{\mathfrak{T}} : \mathfrak{T} \in \mathcal{T}^*\}\).
       By construction, the sets in \(\mathcal{D}\) form, up to modifications on a null set, a partition of \(\Lambda^{|\reduc|}\). Moreover,
       \begin{align*}
       \abs{\mathcal{D}} = \abs{\mathcal{T}^*}.
       \end{align*}
       Integrating over each domain and summing over the trees \(\mathfrak{T}\in \mathcal{T}^*\) yields \eqref{ineq:domain-decomposition-upper-bound}.    

       It remains to estimate the cardinality of the family \(\mathcal{D}\). For every \(\gamma \in V(\reduc)\), the construction of the modified tree
       requires choosing one vertex in \(\fix \partial_\gamma \reduc\). Therefore
       \begin{align}
            \label{eq:number-of-tree-families}
       |\mathcal{D}|\le \prod_{\gamma \in V(\reduc)}
       p_\gamma, \qquad
       p_\gamma\coloneqq
      \abs{\fix{\partial}_\gamma \reduc},
        \end{align}
        with the constraint that \(\sum_{\gamma \in V(\reduc)}p_\gamma = \abs{\fix{\partial}C_i}\). Indeed, each fixed boundary vertex is connected to exactly one free vertex in \(\reduc\).
        
        By the \emph{arithmetic mean, geometric mean} (AM--GM) inequality,
        the product in \eqref{eq:number-of-tree-families} is maximized when \(p_\gamma = \frac{\abs{\fix{\partial}C_i}}{\abs{\reduc}}\) for all \(\gamma \in V(\reduc)\). Hence
        \begin{align}
            \label{ineq:number-of-tree-domains}
            |\mathcal{D}|\leq \left(\frac{\abs{\fix{\partial}C_i}}{\abs{\reduc}}\right)^{\abs{\reduc}}.
        \end{align}
        Maximizing the right hand side of \eqref{ineq:number-of-tree-domains} over \(|\reduc|\) subject to the condition
        \(
            1 \leq  \abs{\reduc} \leq \abs{\fix{\partial}C_i}
        \)
        yields the upper bound in \eqref{eq:Dbound}.
    \end{proof}

    \subsection{Algorithm inside the modified reduced component}
    \label{subsect:final-processing-algorithm}

    Our next task is to estimate the integrals \eqref{ineq:tree-integral-subdomain}, namely
    \begin{align}
    \label{sec4.6:int}
        \inI{T\left(\overline{\reduc};D\right)}^D = 
        \int_{\Lambda^{\abs{\reduc}}} \cf{\vc{x}_{V(\reduc)}\in D} \left(\prod_{\{\gamma_1,\gamma_2\} \in E\left(T\left(\overline{\reduc};D\right)\right)} \abs{\eta_{1;1/2}^{\varepsilon,\alpha}(x_{\gamma_1}, x_{\gamma_2})}\right)\rmd \rho^{\otimes \abs{V(\reduc)}}(\vc{x}_{V(\reduc)}).
    \end{align}

    Given \(D \in \mathcal{D}\), let \(\mathfrak{T} \coloneqq T\left(\overline{\reduc}; D\right)\). 
    To estimate \eqref{sec4.6:int}, we apply the following Algorithm \ref{alg:algorithm-final-step}. 
    The inputs of the algorithm are the tree \(\mathfrak{T}\),
    the fixed vertices \(\vc{x}_{\fix\partial{\mathfrak{T}}}\) and \(\vc{x}_{V(\reduc)} \in D \in \mathcal{D}\).
    It produces a reduced weighted tree \(\finaltr\) that depends on all the above objects. Final trees \(\finaltr\) that can be produced by the algorithm from \(\mathfrak{T}\), \(D\), and \(\vc{x}_{\fix \partial \mathfrak{T}}\) by considering different input points \(\vc{x}_{V(\reduc)} \in D\) are said to be \emph{accessible} from \(\mathfrak{T}\) and denote this relation by \(\finaltr \succ \mathfrak{T}\).
    For a fixed tree \(\mathfrak{T}\), fixed \(\vc{x}_{\fix\partial{\mathfrak{T}}}\) and \(D \in \mathcal{D}\), the total number of distinct accessible trees is bounded by \(4^{\abs{\reduc}}\). 
    
    \begin{figure}
        \centering
        \resizebox{32em}{17em}{\begin{tikzpicture}
    \node[shape=circle,draw=black, fill=black, text=white] (2) at (0,3) {2};
    \node[shape=circle,draw=black, fill=black,text=white] (3) at (2.5,4) {3};
    \node[shape=circle,draw=black, fill=black,text=white] (4) at (5,2) {4};
    \node[shape=circle,draw=black, fill=black,text=white] (5) at (0,1.5) {5};
    \node[shape=circle,draw=black, fill=black,text=white] (6) at (3,-2) {6};
    \node[shape=circle,draw=black, fill=black,text=white] (7) at (4,-2) {7};
    
    \node[shape=circle,draw=black] (8) at (2.5,1) {8};
    \node[shape=circle,draw=black] (9) at (1.5,1.5) {9};
    \node[shape=circle,draw=black] (10) at (3,2) {10};
    \node[shape=circle,draw=black] (11) at (3,-0.5) {11};

    \path [-] (5) edge node[above] {} (9);
    \path [-] (2) edge node[left] {} (5);
    \path [-] (8) edge node[above] {} (9);
    \path [-] (6) edge node[above] {} (11);
    \path [-] (7) edge node[below] {} (6);
    \path [-] (4) edge node[above] {} (3);
    \path [-] (8) edge node[above] {} (10);
    \path [-] (4) edge node[above] {} (10);
    \path [-] (8) edge node[above] {} (11);
\end{tikzpicture}
\,\,\vline\vline
\qquad
\begin{tikzpicture}
    \node[shape=circle,draw=black, fill=black, text=white] (2) at (0,3) {2};
    \node[shape=circle,draw=black, fill=black,text=white] (3) at (2.5,4) {3};
    \node[shape=circle,draw=black, fill=black,text=white] (4) at (5,2) {4};
    \node[shape=circle,draw=black, fill=black,text=white] (5) at (0,1.5) {5};
    \node[shape=circle,draw=black, fill=black,text=white] (6) at (3,-2) {6};
    \node[shape=circle,draw=black, fill=black,text=white] (7) at (4,-2) {7};
    
    \node[shape=circle,draw=black] (3') at (2.5,1) {$3'$};
    \node[shape=circle,draw=black] (2') at (1.5,1.5) {$2'$};
    \node[shape=circle,draw=black] (4') at (3,2) {$4'$};
    \node[shape=circle,draw=black] (1') at (3,-0.5) {$1'$};

    \path [-] (5) edge node[above] {} (9);
    \path [-] (2) edge node[left] {} (5);
    \path [-] (8) edge node[above] {} (9);
    \path [-] (6) edge node[above] {} (11);
    \path [-] (7) edge node[below] {} (6);
    \path [-] (4) edge node[above] {} (3);
    \path [-] (8) edge node[above] {} (10);
    \path [-] (4) edge node[above] {} (10);
    \path [-] (8) edge node[above] {} (11);
\end{tikzpicture}}
        \caption{Left: Starting point of Algorithm \ref{alg:algorithm-final-step}. Right: The tree obtained after the preprocessing step in Algorithm \ref{alg:algorithm-final-step}.}
        \label{fig:tree4}
    \end{figure}

   \begin{algorithm}
    \label{alg:algorithm-final-step}
    The algorithm consists of a preprocessing step followed by steps 1 and 2. Step 2 is further divided into two substeps. Recall that the inputs of the algorithm will be modified reduced components, where each free vertex has at least degree 3 and at most one fixed neighbor vertex.
    
    \begin{itemize}
        \item The pre-processing step gives us an enumeration of the free vertices such that each free vertex has at most one adjacent free vertex with a higher number in the ordering. 
        \item Steps 1 and 2a. rewire the edges of the tree to keep the degree of the free vertices below 2. In case this rewiring produces additional connections between free and fixed vertices, an additional rewiring is performed in Step 2b. See Figure \ref{fig:tree6}.
    \end{itemize}
    
    \textbf{Pre-processing step.}
   Let \(\gamma_0 \in V(\free{F}(\mathfrak{T})) = V(\reduc)\) be the free vertex that has the smallest \(\gamma' \in \fix\partial\mathfrak{T} \subset \N_1\) as its unique fixed neighbor. Starting from \(\gamma_0\), we then enumerate the vertices \(V(\reduc)\) using the breadth--first--search. 
   At each level, prioritize the remaining \(\gamma'\) with the smallest index according to the original vertex labels.\footnote{We fix this prioritization rule for the sake of specificity.}
    This produces an enumeration \(\tilde I \colon [\abs{\reduc}] \to V(\reduc)\). 
    We then reverse this enumeration and define \(I \colon [\abs{\reduc}] \to V(\reduc)\) by setting \({I}(h) \coloneqq \tilde I(\abs{\reduc}-h+1), h =1, \dots, \abs{\reduc}\), see Figure \ref{fig:tree4}.
    The crucial property satisfied by the enumeration is that for each \(\gamma \in V(\reduc)\),
    there exists \emph{at most} one adjacent vertex \(\gamma' \sim \gamma\)
    such that \({I}^{-1}(\gamma')>I^{-1}(\gamma)\). This is a direct consequence of the fact that \(\mathfrak{T}\) is a tree and the labeling \(I\) is based on the BF--search (cf. Theorem 6.2 in \cite{bondy_graph_2008}).

    Let \(n_0 \coloneqq \abs{\reduc}\). For each \(k=1,\dots,n_0\),
    define \(\fix{\gamma}_{k,1}\) to be the unique fixed neighbor of \(I(k)\) whenever such neighbor exists, otherwise set \(\fix{\gamma}_{k,1}\coloneqq \gamma_{\emptyset}\). Note that due to the pre-processing step, we always have \(\fix{\gamma}_{1,1} \neq \gamma_{\emptyset}\).

    Initialize \(\mathfrak{T}_{1}^{\rm r}\) as the weighted \(2\)-color tree
    \[\mathfrak{T}_{1}^{\rm r} \coloneqq (V(\mathfrak{T}_{1}^{\rm r}), E(\mathfrak{T}_{1}^{\rm r}), c_{\mathfrak{T}_{1}^{\rm r}}, w_{\mathfrak{T}_{1}^{\rm r}}),\]
    with
    \begin{align*}
        V({\mathfrak{T}_{1}^{\rm r}}) &\coloneqq V(\mathfrak{T}) \\
        E(\mathfrak{T}_{1}^{\rm r}) &\coloneqq E(\mathfrak{T}) \\
        c_{\mathfrak{T}_{1}^{\rm r}} &\coloneqq c_{j,n}\vert_{V(\mathfrak{T})},
    \end{align*}
    with \(c_{j,n}\) defined in \eqref{eqdef:color-map}, 
    and the weights given by 
    \(w_{\mathfrak{T}_1^{\rm r}} \colon E(\mathfrak{T}_1^{\rm r}) \to [0,+\infty)\) with
    \begin{align*}
        w_{\mathfrak{T}_1^{\rm r}}(\{\gamma,\gamma'\}) \coloneqq 
        \frac{1}{2}\abs{x_{\gamma}-x_{\gamma'}}.
    \end{align*}
    The edge weights encode distance estimates between the corresponding variables. They will be used as a distance input for the Mayer functions in the tree integrals.
    
    The algorithm constructs a finite sequence of weighted trees \((\mathfrak{T}_{k}^{\rm r})_{k=1}^{n_0}\). 
    For each \(k\), let 
      \[N_k \coloneqq \{\gamma' \in V(\mathfrak{T}) : \{{I}(k), \gamma'\} \in E(\mathfrak{T}_{k}^{\rm r})\}\]
    denote the set of neighbours of \( {I}(k)\) in the tree \(\mathfrak{T}_{k}^{\rm r}\). 
    Finally, initialize \(D_1 = \emptyset\) and, for \(k \geq 2\), define \(D_{k} \coloneqq \{I(1),\dots,{I}(k-1)\}\) the set of vertices
    processed before iteration \(k\).
    Throughout the algorithm, at any given iteration \(k\), the set \(N_{k} \setminus D_{k} \subset V(\overline{\reduc})\) is of size \(2\). This is in fact an invariant property of the algorithm.
	
    The preprocessing step is now complete. For each \(k = 1,\dots, n_0-1\), perform Steps 1 and 2 below, see Figure \ref{fig:tree6}. 
	
    \textbf{Step 1.} Let \(\gamma ={I}(k)\). Choose \(\gamma^*_{k} \in N_{k} \setminus D_k\) such that 
			\begin{align}
            \label{ineq:nearest_point}
            \abs{x_{\gamma}-x_{\gamma^*_k}}     \leq \abs{x_\gamma -  x_{\gamma'}}, \quad \text{ for all }\gamma' \in N_{k}\setminus D_k.
            \end{align}
        If both vertices are at the same distance from \(\gamma\), we pick the one with the smaller label.
    
	\textbf{Step 2a.} Let \(\gamma={I}(k)\).
    By construction, \(\gamma\) has at most one free neighbor \(\gamma'=I(k')\) satisfying \( k' >k\).
    We necessarily have either
			\(\gamma^*_k = \gamma'\) or \(\gamma_{k}^* = \fix{\gamma}_{k,k}\). 
    In both cases, 
    the triangle inequality together with \eqref{ineq:nearest_point} yields:
            \[\frac{1}{2}\abs{x_{\gamma'}-x_{\fix{\gamma}_{k,k}}} \leq \abs{x_{\gamma}-x_{\gamma^*_k}}.\]
			We update the tree by replacing the edge \(\{\gamma, \gamma_{k}^\ast\}\) with \(\{\gamma', \fix{\gamma}_{k,k}\}\), that is
            \[E(\mathfrak{T}_{k+1,a}^r) \coloneqq \left(E(\mathfrak{T}_{k}^r)\setminus \{\{\gamma, \gamma_{k}^\ast\}\}\right) \cup \{\{\gamma', \fix{\gamma}_{k,k}\}\}.\] Here, the added edge is associated with a distance factor \(1/2\abs{x_{\gamma'}-x_{\fix{\gamma}_{k,k}}}\). In other words, we set 
            \begin{align*}
                \begin{cases}
                &w_{\mathfrak{T}_{k+1,a}^r}(\{\gamma_1, \gamma_2\}): = w_{\mathfrak{T}^r_k}(\{\gamma_1, \gamma_2\}), \quad \forall \{\gamma_1, \gamma_2\} \in E(\mathfrak{T}_{k}^r)\setminus \{\{\gamma, \gamma_{k}^\ast\}\} \\
                &w_{\mathfrak{T}_{k+1,a}^r}(\{\gamma, \fix{\gamma}_{k,k}\}) \coloneqq \frac{1}{2}w_{\mathfrak{T}^r_k}(\{\gamma, \gamma^\ast_k\}).
                \end{cases}
            \end{align*}
            
            The vertex \(\gamma = {I}(k)\) has now been processed, so we set \(D_{k+1} \coloneqq \{{I}(k)\} \cup D_k\).
            
    \begin{figure}
        \centering
        \resizebox{40em}{8em}{
\begin{tikzpicture}
    \node[shape=circle,draw=black, fill=black, text=white, minimum size=3em] (b1) at (0,3) {$\fix{\gamma}_{k,k}$};
    \node[shape=circle,draw=black, fill=black, text=white,minimum size=3em] (b2) at (5,3) {$\fix{\gamma}_{k',k}$};
    
    \node[shape=circle,draw=black,minimum size=3em] (w1) at (0,0) {$\gamma$};
    \node[shape=circle,draw=black,minimum size=3em] (w2) at (5,0) {$\gamma'$};

    \path [-] (b1) edge node[above] {} (w1);
    \path [-] (w1) edge node[left] {} (w2);
    \path [-] (b2) edge node[above] {} (w2);

    \path [-] (w1) edge (-1.5,0);

    \path[-] (b1) edge (-1,4);
    \path[-] (b2) edge (4,4);

    \path [-] (w2) edge (6.5,0);
    \path [-] (w2) edge (3.5,-1);
\end{tikzpicture}
\qquad
\begin{tikzpicture}
    \node[shape=circle,draw=black, fill=black, text=white, minimum size=3em] (b1) at (0,3) {$\fix{\gamma}_{k,k}$};
    \node[shape=circle,draw=black, fill=black, text=white,minimum size=3em] (b2) at (5,3) {$\fix{\gamma}_{k',k}$};
    
    \node[shape=circle,draw=black,minimum size=3em] (w1) at (0,0) {$\gamma$};
    \node[shape=circle,draw=black,minimum size=3em] (w2) at (5,0) {$\gamma'$};

    \path [-, dotted] (b1) edge node[above] {} (w1);
    \path [-] (b1) edge node[above] {$1/2$} (w2);
    \path [-] (w1) edge node[left] {} (w2);
    
    \path [-] (b2) edge node[above] {} (w2);

    \path [-] (w1) edge (-1.5,0);

    \path[-] (b1) edge (-1,4);
    \path[-] (b2) edge (4,4);

    \path [-] (w2) edge (6.5,0);
    \path [-] (w2) edge (3.5,-1);
    
\end{tikzpicture}
\qquad
\begin{tikzpicture}
    \node[shape=circle,draw=black, fill=black, text=white, minimum size=3em] (b1) at (0,3) {$\fix{\gamma}_{k,k}$};
    \node[shape=circle,draw=black, fill=black, text=white,minimum size=3em] (b2) at (5,3) {$\fix{\gamma}_{k',k}$};
    
    \node[shape=circle,draw=black,minimum size=3em] (w1) at (0,0) {$\gamma$};
    \node[shape=circle,draw=black,minimum size=3em] (w2) at (5,0) {$\gamma'$};

    \path [-, dotted] (b1) edge node[above] {} (w1);
    \path [-, dotted] (b1) edge node[above] {} (w2);
    \path [-] (w1) edge node[left] {} (w2);

    \path [-] (b1) edge node[above] {$1/4$} (b2);
    \path [-] (b2) edge node[above] {} (w2);

    \path [-] (w1) edge (-1.5,0);

    \path[-] (b1) edge (-1,4);
    \path[-] (b2) edge (4,4);

    \path [-] (w2) edge (6.5,0);
    \path [-] (w2) edge (3.5,-1);
    
\end{tikzpicture}
}
        \caption{Possible local operations at an iteration of the algorithm. This is illustrated from the point of view of the vertex labeled with \(\gamma\), with \(\gamma'\) being the unique adjacent vertex such that \(I^{-1}(\gamma')>I^{-1}(\gamma)\).}
        \label{fig:tree6}
    \end{figure}
		\textbf{Step 2b.} 
        Consider \(\gamma' = I(k')\). 
        
        If \(\fix{\gamma}_{k', k} = \gamma_{\emptyset}\), set \(\fix{\gamma}_{k', k+1}\coloneqq \fix{\gamma}_{k,k}\). 
            
            Otherwise, choose \(\fix{\gamma}_{k'} \in \{\fix{\gamma}_{k',k}, \fix{\gamma}_{k,k}\}\) such that 
			\begin{align*}
            \abs{x_{\gamma'}-x_{\fix{\gamma}_{k'}}} \leq
					\abs{x_{\gamma'} - x_{\fix{\gamma}}}, \quad \text{ for both } \fix{\gamma} \in \{\fix{\gamma}_{k',k}, \fix{\gamma}_{k,k}\}.
                    \end{align*}
                    
					Let 
					\(\fix{\bar{\gamma}}_{k'}\) be the other element. We have,
					\begin{align*}
                    \frac{1}{2}\abs{x_{\fix{\gamma}_{k',k}}-x_{\fix{\gamma}_{k,k}}} \leq \abs{x_{\gamma'}-x_{\fix{\bar{\gamma}_{k'}}}}.
                    \end{align*}
                    
                    We remove the edge 
					\(\{\gamma', \fix{\bar{\gamma}_{k'}}\}\) and add the edge \(\{\fix{\gamma}_{k',k}, \fix{\gamma}_{k,k}\}\), i.e. we set
                    \[E(\mathfrak{T}^{\rm r}_{k+1}) \coloneqq 
                    \left(
                    E(\mathfrak{T}^{\rm r}_{{k+1},a}) \setminus \{\{\gamma', \fix{\overline{\gamma}_{k'}}\}\}\right) \cup \{\{\fix{\gamma}_{k',k}, \fix{\gamma}_{k,k}\}\},\] 
                    where we associate the distance factor
					 \(\frac{1}{2}\abs{x_{\fix{\gamma}_{k',k}}-x_{\fix{\gamma}_{k,k}}}\) to the latter edge. 
                     In other words, we set 
                     \begin{align*}
                        \begin{cases}
                &w_{\mathfrak{T}^{\rm r}_{k+1}}(\{\gamma_1, \gamma_2\}) \coloneqq w_{\mathfrak{T}^{\rm r}_k}(\{\gamma_1, \gamma_2\}), \quad \forall \{\gamma_1, \gamma_2\} \in E(\mathfrak T^{\rm r}_{k+1,a})\setminus \{\{\gamma', \fix{\gamma}_{k'}\}\} \\
                &w_{\mathfrak{T}^{\rm r}_{k+1}}(\{\gamma, \fix{\gamma}_{k,k}\}) \coloneqq \frac{1}{2}w_{\mathfrak T^{\rm r}_{k+1,a}}(\{\fix{\gamma}_{k', k}, \fix{\gamma}_{k,k}\}).
                \end{cases}
            \end{align*}
					Set \(\fix{\gamma}_{k', k+1} \coloneqq \fix{\bar{\gamma}}_{k'}\) for \(k'\) and for all other \(k'' > k\), set \(\fix{\gamma}_{k'', k+1} = \fix{\gamma}_{k'', k}\). 
                    
                    Repeat Steps 1 and 2 for \(k=1,\dots, n_0-1\). After the final iteration, the algorithm returns the weighted tree \(\mathfrak{T}^{\mathrm{r}} =\mathfrak{T}^{\mathrm{r}}_{n_0}\).
        \end{algorithm}
		
        The output tree \(\mathfrak{T}^r\) satisfies the following properties:
        \begin{enumerate}[label=(\roman*)]
            \item Each free vertex \(\gamma\) is connected to at most two other vertices. Moreover, for each  \(k = 1, \dots, n_0 = \abs{\reduc}\), either \({I}(k)\) has a neighbor \(\gamma' = {I}(k')\) with \(k' > k\) or \({I}(k)\) has a fixed neighbor \(\fix{\gamma} \in \fix{\partial}\reduc\). 
            \item  The edges involving fixed vertices might have modified weights, 
        but in total this can be controlled, since at the final step
        \begin{align*}
        w_{\finaltr}(\{\gamma, \gamma'\}) \geq \frac{1}{4}\abs{x_{\gamma}-x_{\gamma'}}, \quad \{\gamma, \gamma'\} \in E(\finaltr).
        \end{align*}
        \item Since the produced tree depends on \(\vc{x}_{V(\reduc)} \in D\), Steps 1 and 2 can lead to bifurcations. The combinatorial effect of having to consider all possible bifurcation points in Steps 2a) and 2b) leads to an exponential bound \[2^{n_0-1} 2^{n_0-1} = 4^{n_0 - 1} = 4^{\abs{\reduc}-1}.\]
        \end{enumerate}

        Finally, we estimate 
        \begin{align*}      
        \inI{T\left(\overline{\reduc};D\right)}^D \leq \sum_{\mathfrak{T}^{\rm r} \in \mathcal{T}_{j, n_0}} \cf{\mathfrak{T}^{\rm r} \text{ accessible from }   {T\left(\overline{\reduc};D\right)}} \inI{\mathfrak T^{\rm r}}(\vc{x}_{\fix{\partial}C_{i}}).
        \end{align*}
        
       Each tree--integral \(\inI{\finaltr}(\vc{x}_{\fix \partial C_i})\) is estimated by integrating out the free variables one after the other using Fubini's theorem. Because of property (i), this can be obtained using \eqref{eq:int_l} and \eqref{eq:int_c}. Iterating this procedure over \(k = 1, \dots, n_0 = \abs{\reduc}\), we always have a free variable \(x_{\gamma}, \gamma \in [n+j]\setminus [j]\) that can be integrated out. 

        Hence, we obtain the bound
        \begin{align*}
            \inI{\mathfrak T^{\rm r}}(\vc{x}_{\fix{\partial}C_{i}}) \leq (4\, \bar \rho \, C_{\beta,s}\,\omega(\varepsilon))^{\abs{\reduc}}\left( \prod_{\{\gamma, \gamma'\} \in E(\fix F(\mathfrak T^{\rm r}))}\abs{\eta^{\varepsilon,\alpha}_{1/4}(x_{\gamma}, x_{\gamma'})}\right),
        \end{align*}
        where the rescaled Mayer--function \(\eta_{1/4}^{\varepsilon, \alpha}\) is defined in \eqref{defeq:eta_dil}.
        
        This concludes the proof of the following Lemma \ref{lemma:main-upper-bound}.

    \begin{lemma}
    \label{lemma:main-upper-bound}
    Given \(j \in \N_1\), \(n \in \N_0\), \(\vc x_j \in \Lambda^j\), and a tree \(T \in \mathcal{T}_{j,n}\), 
    it holds that
    \begin{multline*}
        \inI{T}^{\varepsilon,\alpha}(\vc{x}_j) \leq \left(\prod_{\{\gamma, \gamma'\} \in E(\fix{F}(T))} |\eta^{\varepsilon,\alpha}(x_{\gamma}, x_{\gamma'})|\right)\\
        \times\prod_{(C_{i})_{i=1}^{\ell_1} \in \mathcal{C}(\free{F}(T))} \sum_{D \in \mathcal{D}_i} (C_{\beta, s} \bar \rho\, \omega(\varepsilon))^{\mathfrak{l}(C_i)+\mathfrak{c}(C'_i)}\inI{T(\overline{\reduc};D)}^D(\vc{x}_{\fix{\partial}C_i}),
    \end{multline*}
        where \(\abs{\mathcal{D}_i} \leq \exp\{\frac{\abs{\fix{\partial}C_i}}{{\rme}}\}\) and, for each tree \(\mathfrak{T}=T(\overline{\reduc}; D)\), the following bound holds
        \begin{align*}
            \inI{\mathfrak{T}}^{D}(\vc{x}_{\fix{\partial}C_{i}}) \leq (4\bar \rho\, C_{\beta,s} \omega(\varepsilon))^{\abs{\reduc}}\max_{\finaltr \succ  \mathfrak{T}}\left( \prod_{\{\gamma,\gamma'\} \in E(\fix F(\mathfrak{T}^{\rm r}))}\abs{\eta^{\varepsilon,\alpha}_{1/4}(x_{\gamma}, x_{\gamma'})}\right).
        \end{align*}
        Here we recall that the notation \(\finaltr \succ \mathfrak{T}\) means that the tree \(\finaltr\) is \emph{accessible} from \(\mathfrak{T}\), or in other words is one of the outputs of the algorithm for some \(\vc{x}_{V(\reduc)} \in D_i\).
    \end{lemma}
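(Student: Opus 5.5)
The proof assembles the reductions of Subsections \ref{subsect:reduction-to-cc}--\ref{subsect:final-processing-algorithm}; the only genuinely new argument is the integration of the final tree $\finaltr$ produced by Algorithm \ref{alg:algorithm-final-step}.

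First, fix $T\in\mathcal{T}_{j,n}$ and decompose it as $T=\bigcup_i\overline{C_i}\cup\bigcup_i\overline{L_i}\cup\fix{F}(T)$. By Fubini, $\inI{T}^{\varepsilon,\alpha}(\vc x_j)$ factorizes into three pieces: the product of $|\eta^{\varepsilon,\alpha}|$ over the edges of $\fix F(T)$, the product of the $\inI{\overline{L_i}}$, and the product of the $\inI{\overline{C_i}}$. Each $\inI{\overline{L_i}}$ is bounded by \eqref{ineq:linear-component}. Its factor $(C_{\beta,s}\bar\rho\,\omega(\varepsilon))^{|L_i|}$ is at most $1$ once $\varepsilon$ is small, so the $L_i$ can be dropped. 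If one prefers to keep these factors, they can be absorbed into the stated product (the displayed statement lists only the $C_i$ contributions).

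Next, for each $C_i$, remove leaves and chains using \eqref{eq:int_l} and \eqref{eq:int_c}. This produces the factor $(C_{\beta,s}\bar\rho\,\omega(\varepsilon))^{\mathfrak l(C_i)+\mathfrak c(C_i')}$ times $\inI{\overline{\reduc}}$. Then apply Lemma \ref{lemma_mod_conn}, which bounds $\inI{\overline{\reduc}}$ by $\sum_{D\in\mathcal D_i}\inI{T(\overline{\reduc};D)}^D$ with $|\mathcal D_i|\le \exp\{|\fix\partial C_i|/\rme\}$. This proves the first displayed inequality.

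For the second inequality, fix $D$ and $\mathfrak T=T(\overline{\reduc};D)$. Run Algorithm \ref{alg:algorithm-final-step} pointwise in $\vc x_{V(\reduc)}\in D$. Each step replaces an edge by a new one whose distance is at least half the removed distance. Since $|\eta^{\varepsilon,\alpha}|$ is monotone decreasing in the distance (the Riesz potential is repulsive and radially decreasing), the integrand is bounded pointwise by the integrand of the output tree $\finaltr(\vc x)$, with every weight at least $\tfrac14$ of the true distance. Thus the Mayer factors become $\eta_{1/4}^{\varepsilon,\alpha}$ on fixed--fixed edges and at worst $\eta_{1/2}$ or $\eta_{1/4}$ on free edges.

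Bound the indicator $\cf{\finaltr(\vc x)=\mathfrak T^{\rm r}}$ by $1$ and sum over accessible trees. There are at most $4^{|\reduc|-1}$ of them by property (iii). Each $\inI{\finaltr}$ is then integrated vertex by vertex in the order $I(1),\dots,I(n_0)$. By property (i), at step $k$ the vertex $I(k)$ has degree at most two, so it is either a leaf, integrated with \eqref{eq:int_l}, or a chain vertex, integrated with \eqref{eq:int_c}. A chain integration creates a new edge between the two neighbours with $|\eta|$, and this preserves the structure for later steps. Each integration costs one factor $C_{\beta,s}\bar\rho\,\omega(\varepsilon)$, possibly with a worse constant because of the dilation $\lambda$: one checks $\int|\eta_\lambda|\,\rmd\rho\le\lambda^{-d}\cdots$, which is absorbed into $C_{\beta,s}$.

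The remaining fixed--fixed edges carry $|\eta^{\varepsilon,\alpha}_{1/4}|$. Combining everything gives $(4\bar\rho C_{\beta,s}\omega(\varepsilon))^{|\reduc|}$ times the maximum over accessible trees of the product of $|\eta_{1/4}^{\varepsilon,\alpha}|$ over the edges of $\fix F(\finaltr)$.

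The main obstacle is verifying that \eqref{eq:int_c} holds uniformly for the dilated Mayer functions and for chains whose endpoints are mixed free/fixed. This needs $\int|\eta(x,y)||\eta(y,z)|\,\rmd y\lesssim\omega(\varepsilon)|\eta(x,z)|$. To prove it, split the $y$-integral at the midpoint region: on each half, one factor is comparable to $|\eta(x,z)|$ up to the dilation by $2$, and the other integrates to $\omega(\varepsilon)$. The Riesz homogeneity makes the dilation cost a constant.
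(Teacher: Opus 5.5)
Your proposal is correct and follows the paper's own assembly essentially step by step: the Fubini factorization over $\fix{F}(T)$, the $L_i$ and the $C_i$; removal of leaves and chains; the domain splitting of Lemma~\ref{lemma_mod_conn}; pointwise use of Algorithm~\ref{alg:algorithm-final-step} with the indicator bounded by one and the $4^{|\reduc|-1}$ count of accessible trees; and Fubini in the order $I(1),\dots,I(n_0)$. Your midpoint-splitting proof of \eqref{eq:int_c} and your remark on the dropped $L_i$ factors supply details the paper leaves implicit. There is one slip in wording: the monotonicity step needs the \emph{removed} edge to be at least half as long as the newly created one, not the reverse, and the triangle inequality guarantees this only when the removed edge is the longer of the two edges on the path.
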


        \subsection{Conclusion}
        \label{sec4.6}
    
        We conclude by showing how to obtain the thesis of Proposition \ref{graph_red}.
By Lemma \ref{lemma:main-upper-bound}, we obtain
    \begin{multline}
                \label{ineq:main-long-range-upper-bound}
        \mathfrak I_{T}^{\varepsilon,\alpha}(\vc x_j)
        \leq \Biggl[\Biggl(\prod_{\{\gamma, \gamma'\} \in E(\fix{F}(T))} \abs{\eta^{\varepsilon,\alpha}(x_{\gamma}, x_{\gamma'})}\Biggr) \\
        \qquad \times \prod_{(C_{i})_{i=1}^{\ell_1} \in \mathcal{C}(\free{F}(T))}\sum_{D \in \mathcal{D}_i} (C_{\beta, s} \bar \rho \,\omega(\varepsilon))^{\mathfrak{l}(C_i)+\mathfrak{c}(C'_i)} \nonumber \\
        \qquad\times (4\, \bar \rho\, C_{\beta,s}\, \omega(\varepsilon))^{\abs{\reduc}}\max_{\finaltr \succ \mathfrak{T}}\left( \prod_{\{\gamma,\gamma'\} \in E(\fix F(\finaltr))}\abs{\eta^{\varepsilon,\alpha}_{1/4}(x_{\gamma}, x_{\gamma'})}\right)\Biggr]. \nonumber 
    \end{multline}
    Since \(\abs{\mathcal{D}_i} = \abs{\mathcal{D}(C_i; \vc{x}_{\fix{\partial}C_i})} \leq \rme^{\abs{\fix{\partial}C_i}/\rme}\)
    by Lemma \ref{lemma_mod_conn}, we can apply a \((1,+\infty)\)-Hölder's inequality to the sum over the trees in this family to get
    \begin{multline}
        \mathfrak I_{T}^{\varepsilon,\alpha}(\vc x_j)
        \leq  \left(\prod_{\{\gamma, \gamma'\} \in E(\fix F({T}))} \abs{\eta^{\varepsilon,\alpha}(x_{\gamma}, x_{\gamma'})}\right) \prod_{(C_{i})_{i=1}^{\ell_1} \in \mathcal{C}(\free F({T}))}\rme^{\abs{\fix{\partial}C_i}/\rme}
         (4\,C_{\beta, s} \bar\rho \,\omega(\varepsilon))^{\abs{C_i}} \nonumber \\
         \qquad \times \max_{\finaltr \succ \mathfrak{T}}\left( \prod_{\{\gamma,\gamma'\} \in E(\fix F(\finaltr))}\abs{\eta^{\varepsilon,\alpha}(x_{\gamma}, x_{\gamma'})}\right). \nonumber 
     \end{multline}
     Notice that, by the definition of connected components and Lemma \ref{lemma:component-number-bound}, 
     \[
     \sum_{i=1}^{\ell_1}\abs{C_i}=n \qquad \text{and} \qquad \sum_{i=1}^{\ell_1} \abs{\fix{\partial}C_i} \le \ell_1+j-1 \le 2(j-1).
     \]
     Since for any sequence of trees  \((\finaltr_{i})_{1=1}^{\ell_1}\), each depending on \(\fix \partial C_i\) respectively, the graph 
     \[\left([j], E(\fix{F}(T)) \cup \left( \bigcup_{i=1}^{\ell_1} E\left(\fix F\left(\finaltr_i\right)\right)\right)\right)\] is a tree on \(j\) vertices, we obtain the following upper bound which is valid for all \(\vc{z}_j \in (\Lambda \times \R^d)^{j}\) and all \(T \in \mathcal{T}_{j,n}\):
    \begin{align*}
        \label{ineq:upper-bound-trucnated-correlation}
        \mathfrak I_{T}^{\varepsilon,\alpha}(\vc x_j)
        &\leq \rme^{2(j-1)/\rme}(4\, C_{\beta,s} \, \bar \rho \, \omega(\varepsilon))^{n}\max_{T' \in \mathcal{T}_{j,0}} \left(\prod_{\{\gamma, \gamma'\} \in E(T')}\abs{\eta^{\varepsilon,\alpha}(x_{\gamma}, x_{\gamma'})}\right).
     \end{align*}
     \qed
    
   \subsection*{Acknowledgments}
    Co-funded by the European Union (ERC CoG KiLiM, project number 101125162).\footnote{Views and opinions expressed are however those of the authors
only and do not necessarily reflect those of the European Union or the European Research
Council. Neither the European Union nor the granting authority can be held responsible for
them.}

\bibliographystyle{plain}
\bibliography{bibliography}
\end{document}